\let\oldtocsection=\tocsection
\let\oldtocsubsection=\tocsubsection
\let\oldtocsubsubsection=\tocsubsubsection
\renewcommand{\tocsection}[2]{\hspace{0em}\oldtocsection{#1}{#2}}
\renewcommand{\tocsubsection}[2]{\hspace{1em}\oldtocsubsection{#1}{#2}}
\renewcommand{\tocsubsubsection}[2]{\hspace{2em}\oldtocsubsubsection{#1}{#2}}
\numberwithin{equation}{section}
\theoremstyle{plain}  
\newtheorem{theorem}{Theorem}[section]
\newtheorem{lemma}[theorem]{Lemma}
\newtheorem{proposition}[theorem]{Proposition}
\newtheorem{definition}[theorem]{Definition}
\theoremstyle{definition}  
\newtheorem{example}[theorem]{Example}
\newtheorem{remark}[theorem]{Remark}
\newcommand{\la}{\lambda}
\newcommand{\arrow}{\rightarrow}
\newcommand{\map}{\mapsto}
\newcommand{\diff}[2]{\frac{d#1}{d#2}}
\newcommand{\Diff}[3]{\left . \frac{d}{d#2}#1\right |_{#3}}
\newcommand{\alg}{\mathfrak{g}}
\newcommand{\aalg}{\mathfrak{a}}
\newcommand{\Alg}{\mathcal{A}}
\newcommand{\tAlg}{\tilde{\mathcal{A}}}
\newcommand{\F}{\mathcal{F}}
\newcommand{\vf}[1]{\mathfrak{X}(#1)}
\newcommand{\of}[1]{\Lambda^{1}(#1)}
\newcommand{\smf}[1]{\mathcal{C}^\infty(#1)}
\newcommand{\Si}{\mathbb{S}^1}
\newcommand{\Cm}{\mathbb{C}}
\newcommand{\CP}{\mathbb{CP}^1}
\newcommand{\Z}{\mathbb{Z}}
\newcommand{\Rm}{\mathbb{R}}
\newcommand{\Km}{\mathbb{K}}
\newcommand{\T}{T}
\newcommand{\pr}{\partial}
\newcommand{\me}{\geqslant}
\newcommand{\les}{\leqslant}
\newcommand{\bra}[1]{\! \left (#1\right )}
\newcommand{\brac}[1]{\! \left [#1\right ]}
\newcommand{\cc}[1]{(\!(#1)\!)}
\newcommand{\pobr}[1]{\left \{#1\right \}}
\newcommand{\ddual}[1]{\bigl \langle #1 \bigr \rangle}
\newcommand{\dual}[1]{\langle #1 \rangle}
\newcommand{\pd}[2]{\frac{\partial #1}{\partial #2}}
\DeclareMathOperator{\tr}{tr}
\DeclareMathOperator{\Tr}{Tr}
\DeclareMathOperator{\Lie}{Lie}
\DeclareMathOperator{\ad}{ad}
\DeclareMathOperator{\res}{res}
\DeclareMathOperator{\Der}{Der}
\DeclareMathOperator{\id}{id}
\DeclareMathOperator{\End}{End}
\DeclareMathOperator{\Direct}{D}
\DeclareMathOperator{\spn}{span}
\DeclareMathOperator{\im}{Im}
\newcommand{\Dir}{\Direct\!}
\renewcommand{\H}{\mathcal{H}}
\renewcommand{\tilde}[1]{\widetilde{#1}}
\newcommand{\dnabla}{\widetilde{\nabla}}
\newcommand{\eqq}[1]{\begin{align*} #1 \end{align*}}
\newcommand{\eq}[1]{\begin{align} #1 \end{align}}
\newcommand{\eps}{\varepsilon}
\setlist{nolistsep}
\begin{document}


\title[Classical $r$-matrix like approach to Frobenius manifolds]{Classical $r$-matrix like approach to Frobenius manifolds, WDVV equations and flat metrics}


\author{B\l a\.zej M. Szablikowski}
\date{}

\address{Faculty of Physics, Adam Mickiewicz University,
Umultowska 85, 61-614 Pozna\'n, Poland}
\email{bszablik@amu.edu.pl}

\begin{abstract}
A general scheme for construction of flat pencils of contravariant metrics and Frobenius manifolds as well as related solutions to WDVV associativity equations is formulated. The advantage is taken from the Rota-Baxter identity
and some relation being counterpart of the modified Yang-Baxter identity from the classical $r$-matrix formalism.
The scheme for the construction of Frobenius manifolds is illustrated on the algebras of formal Laurent series and meromorphic functions on Riemann sphere. 
\end{abstract}

\maketitle


{\small \parskip0.1ex \tableofcontents}

\section{Introduction}

Classical $r$-matrix theory \cite{Sem1} can be very useful  in the construction of almost all known classes of integrable field systems (see \cite{Sem2,BS} and references therein). 
Frobenius manifolds are intrinsically connected with the bi-Hamiltonian systems of hydrodynamic type, for which the above formalism can be naturally applied (see for instance \cite{Li,Szab}). Therefore, the primary objective of this article is a formulation of a scheme analogous to the classical $r$-matrix theory, which could be applied for the construction of Frobenius manifolds.

The main idea exploited in the paper involves the use of the Rota-Baxter algebras for the construction of Frobenius algebras. Consider the new multiplication
\eqq{
	a\circ b := \ell(a)b + a\ell(b)
}
defined in some commutative associative algebra. If the endomorphism $\ell$ satisfies the Rota-Baxter identity \cite{Bax,Rot,Guo}:
\eqq{
\ell\bra{\ell(a) b} + \ell\bra{a \ell(b)} - \ell(a)\ell(b) = \kappa\, ab,
}
of weight $\kappa$, then the new multiplication is associative. If the algebra
can be equipped with some trace form, then the following metric is automatically invariant:
\eqq{
	\eta\bra{a,b} := \tr\bra{a\circ b}\quad\Rightarrow\quad \eta\bra{a\circ b,c} = \eta\bra{a,b\circ c}.
}
Hence, if the new multiplication is unital we can generate in this way, in principal, nontrivial Frobenius
algebras. We show in the article how this idea can be extended to Frobenius algebras, appearing in  the cotangent bundles, of certain Frobenius manifolds.

The theory of Frobenius manifolds \cite{Dub0,Dub1} is a coordinate-free formulation 
of the Witten-Dijkgraaf-Verlinde-Verlinde (WDVV) associativity equations appearing in the context 
of the $2$-dimensional topological quantum field theories (TFT) \cite{Witt,DVV}. Frobenius manifolds
appear not only in theoretical physics but also in seemingly unrelated areas of contemporary mathematics such as enumerative geometry and quantum cohomology \cite{Man}, singularity theory \cite{Her} and integrable systems
\cite{Dub1,Gue}, see also the expository articles \cite{Hit, Aud}.

In fact the structure of a Frobenius manifold is equivalent
to a pencil of flat metrics satisfying some homogeneity conditions \cite{Dub1,Dub2}.
These pencils generate hydrodynamic (Dubrovin-Novikov) Poisson tensors
yielding bi-Hamiltonian structure for the so-called principal hierarchy that can be associated to any Frobenius manifold.
Such integrable hierarchies together with their bi-Hamiltonian structures can be efficiently constructed using the classical $r$-matrix formalism (see for instance \cite{Szab}).

More standard approach to the construction of Frobenius manifolds relies 
on the Landau-Ginzburg models and Saito's theory \cite{Dub1}. 
It is interesting that often in these formalisms the superpotentials can be identified with Lax functions of the related Lax hierarchies. At the general level, it does not seem to be so apparent and the further research on the mutual connections
is justified. The above point of view for construction of Frobenius manifolds
is presented for instance in the following articles \cite{K2,AK,AK2,Str1,Str2,Car}, which are directly connected with 
integrable hierarchies of hydrodynamic type so-called Whitham hierarchies. Alternative approach based on the so-called isotropic deformations is presented in \cite{KM,Kon}.
For some recent works concerning classification of semisimple Frobenius manifolds see \cite{Din1,Dub3,Din2}.

A new interesting class of Frobenius manifolds, which are infinite-dimensional, was introduced  
in the recent articles \cite{CDM} and \cite{Rai}. These Frobenius manifolds are associated 
with $(2+1)$-dimensional integrable hydrodynamic 2d Toda and KP equations, respectively. 
Both works rely of the bi-Hamiltonian structures of the related infinite-field hydrodynamic chains. 
Nevertheless, both approaches are not equivalent as  they exploit in the construction of Frobenius 
structures significantly different mathematical methods. The approach from \cite{CDM} is further extended in \cite{WX} and \cite{WZ} to infinite-dimensional Frobenius manifolds associated with the two-component dispersionless BKP and Toda hierarchies defined on the appropriate space of pairs of meromorphic functions with possibly higher-order poles at the origin and at infinity.
 In principle, the approach presented in this article can be used in the above cases of infinite-dimensional Frobenius manifolds.

In Section~\ref{pre} we present all the facts about Frobenius manifolds and related structures that will be indispensable in the rest of the paper. In particular Proposition~\ref{Fpro} will allow for the straightforward construction of solutions to the WDVV equations. In Section~\ref{fro} we establish scheme for the construction
of Frobenius algebras, which is based on the Rota-Baxter identity \eqref{RB}. In Section~\ref{flat} 
we present scheme for the construction of flat metrics, which is based on the relation \eqref{rel}
being a counterpart of the modified Yang-Baxter identity from the classical $r$-matrix formalism.
In Section~\ref{Fman} the scheme for the construction of Frobenius manifolds is developed, which 
is based on the results from two preceding sections. Here, we prove the main Theorem~\ref{main} of this article.
In the last Section~\ref{mer} we illustrate our scheme of construction of Frobenius manifolds applying 
it to the algebras of formal Laurent series and meromorphic functions on Riemann sphere.
As a result, we generate infinite- and finite-dimensional Frobenius manifolds, respectively. 

We believe that the approach presented in this article will contribute to a better understanding
of the relation, on the constructive level, between Frobenius manifolds and integrable hydrodynamic
systems. This follows from the fact that the scheme for the construction of Frobenius manifolds
is formulated purely in the cotangent bundle, which is more natural in the context of the related
hydrodynamic bi-Hamiltonian structures. Moreover, we hope that this work will contribute
to the further classification of Frobenius manifolds, including particularly these which are infinite-dimensional.

\section{Theory of Frobenius manifolds}\label{pre}
 
\subsection{Frobenius manifolds} 

In this section we present all the necessary facts about Frobenius manifolds and related structures to make
the article self-contained. For the convention used see Appendix~\ref{A}.

\begin{definition}[\cite{Dub1}]\label{frob} 
A Frobenius manifold is an $n$-dimensional smooth manifold\footnote{All structures will be considered 
over field of real or complex numbers, that is $\Km=\Rm$ or $\Cm$.}
equipped with a (pseudo-Riemannian) covariant metric
$\eta\in\Gamma\bra{S^2\T^*M}$ 
and structure of a Frobenius algebra on the tangent bundle $\T M$. The last statement means 
that there exists an unital commutative associative multiplication,
given by the $\smf{M}$-bilinear map 
\eq{\label{mult}
*: \vf{M}\times\vf{M}\arrow\vf{M},
}
compatible with the metric:
\eq{\label{inv}
\eta(X,Y*Z)=\eta(X*Y,Z)\qquad X,Y,Z\in \vf{M}.
}
Let $\nabla$ be the Levi-Civita connection of $\eta$. The following conditions are also required:
\begin{enumerate}
\item The metric $\eta$ must be flat.
\item The tensor field $\nabla c$ must be symmetric in all its four arguments,
where $c(X,Y,Z):=\eta(X*Y,Z)$.
\item The unit vector field $e$ must be flat, that is $\nabla e =0$.
\item There must exists an Euler vector field $E$ (i.e. $\nabla\nabla E=0$) such that
\eq{\label{qh}
\Lie_E * = *\qquad \text{and}\qquad
\Lie_E \eta = (2-d)\, \eta, 
}
where $d$ is some number (weight).
\end{enumerate}
\end{definition}

The above structure without point (4) will be referred to as a pre-Frobenius manifold.

According to Theorem~2.15 in \cite{Her} the following conditions on any Frobenius manifold
are equivalent:
\begin{itemize}
\item[i)] The tensor $\nabla c$ is symmetric in all four arguments.
\item[ii)] The tensor $\nabla *$ is symmetric in all three arguments.
\item[iii)] The multiplication \eqref{mult} satisfies the $F$-manifold condition,
\eq{\label{f-man}
\Lie_{X*Y}(*) = X*\Lie_Y(*) + Y*\Lie_X(*),
}
and the counity 1-form $\eps := e^\flat$ is closed.
\end{itemize}
Moreover, from Lemma~2.16 in \cite{Her} we know that the vector field $e$ is flat if and only if 
\eq{\label{cond}
\Lie_e\eta = 0}
and $d\eps = 0$. Hence, in Definition~\ref{frob}  instead of the third condition we can require \eqref{cond} to hold. Direct consequence of \eqref{f-man} is that $\Lie_e (*) = 0$ and $\Lie_E e = -e$.

Substituting $e$ for $Z$ in \eqref{inv} one observes that the counity is actually 
a trace form, $\eps:\vf{M}\arrow\smf{M}$, such that
$\eta\bra{X,Y} = \eps\bra{X*Y}$.

\subsection{WDVV associativity equations}

Let $\{t^1,\ldots,t^n\}$ be (local) flat coordinates for the metric $\eta$ such that $e=\pr_{t^1}$.   
Then, the second condition in Definition~\ref{frob} implies local existence of the (smooth) function $\F = \F(t)$, the so-called prepotential, such that
\eq{\label{c}
c_{ijk} = \frac{\pr^3\F}{\pr t^i\pr t^j\pr t^k} \qquad \text{and}\qquad
\eta_{ij} = \frac{\pr^3\F}{\pr t^1\pr t^i\pr t^j}.
}
The structure constants for the multiplication \eqref{mult}, such that $(X*Y)^i = c^i_{jk}X^jY^k$, are given by $c^i_{jk} = \eta^{il}c_{ljk}$.\footnote{In this section the Einstein summation convention is used.}
Then, the so-called WDVV equations are the associativity equations on the prepotential $\F$: \eqq{
\frac{\pr^3\F}{\pr t^i\pr t^j\pr t^r}\eta^{rs}
\frac{\pr^3\F}{\pr t^s\pr t^k\pr t^l} = \frac{\pr^3\F}{\pr t^l\pr t^j\pr t^r}\eta^{rs} 
\frac{\pr^3\F}{\pr t^s\pr t^k\pr t^i}.
}
The Euler vector field $E$ can be normalised so that
$E^i = (1-q_i)t^i+r_i$, where $q_i$, $r_i$ are some constants. Furthermore, the quasi-homogeneity condition
\eqref{qh} on $\F$ takes the form  
\eq{\label{hom}
	\Lie_E \F \equiv E^i\frac{\pr \F}{\pr t^i} = (3-d)\F + quad.\ pol.
}
The above equality holds modulo quadratic polynomials in the flat coordinates.

\subsection{Intersection form}

On a Frobenius manifold the metric $\eta$  induces structure of a Frobenius algebra in the cotangent bundle $\T^*M$
with the multiplication given by
\eqq{
\alpha\circ\beta := \bra{\alpha^\sharp*\beta^\sharp}^\flat\qquad \alpha,\beta\in\of{M}.
} 
Its unity is the counity $1$-form $\eps = e^\flat$ and the unity vector field $e$ is a trace form such that  $\eta^*\bra{\alpha,\beta}= e\bra{\alpha\circ\beta}$. The quasi-homogeneity relations \eqref{qh} can be rewritten in the form:
\eq{\label{qhr}
\Lie_E \circ = (d-1)\circ,\qquad
\Lie_E \eta^* = (d-2)\, \eta^*.
}

Besides, on any Frobenius manifold there exists a second contravariant metric $g^*\in\Gamma\bra{S^2\T M}$, the so-called intersection form \cite{Dub1,Dub2}, defined by
\eq{\label{if}
g^*(\alpha,\beta) := \dual{\alpha\circ\beta, E}\qquad \alpha,\beta\in\of{M},
}
where $E$ is the Euler vector field. In fact, this metric is also flat and $g^*$ together with $\eta^*$ are compatible, 
that is the (contravariant) pencil defined by $g^{*}_{z} := \eta^* + z\, g^*$ is flat for all values of $z$. 

\begin{remark}
The existence of the intersection form $g^*$ is the main source of the
connection of Frobenius manifolds with integrable systems of hydrodynamic type. 
This is because the flat metrics $\eta^*$ and $g^*$ generate compatible Poisson brackets of hydrodynamic type, see Appendix~\ref{hydro}.  
\end{remark}

\subsection{Deformed flat connection}

On a Frobenius manifold one can define affine connection in the form
\begin{equation}\label{defc}
 \dnabla_X Y := \nabla_X Y + z\, X*Y\qquad X,Y\in\vf{M},
\end{equation}
where $z\in\Cm^*$ is a deformation parameter. This connection is torsionless (symmetric) and its curvature tensor vanish
identically in $z$. The symmetry of \eqref{defc} is equivalent to the 
commutativity of the multiplication \eqref{mult} and
its flatness is equivalent to the associativity of \eqref{mult} as well as symmetry of the tensor 
$\nabla c$ with respect to all its arguments. 

\begin{proposition}
The action of the deformed connection \eqref{defc} on $1$-forms is given by 
\begin{equation}\label{defc2}
\dnabla_{\alpha^\sharp} \gamma = \nabla_{\alpha^\sharp} \gamma - z\, \alpha\circ \gamma,
\end{equation}    
where $\alpha,\gamma\in\of{M}$.
\end{proposition}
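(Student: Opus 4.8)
The plan is to use the standard fact that an affine connection on $\T M$ extends canonically to a connection on $\T^*M$ compatible with the natural pairing $\dual{\cdot\,,\cdot}$ between $1$-forms and vector fields, and then to recast the multiplication term by means of the invariance of $\eta$.

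First I would recall that both $\nabla$ and $\dnabla$, viewed as connections on $\T M$, induce connections on $\T^*M$ determined by the Leibniz rules
\[
X\dual{\gamma, Y} = \dual{\nabla_X\gamma, Y} + \dual{\gamma, \nabla_X Y},\qquad
X\dual{\gamma, Y} = \dual{\dnabla_X\gamma, Y} + \dual{\gamma, \dnabla_X Y},
\]
valid for all $X,Y\in\vf{M}$ and $\gamma\in\of{M}$. Subtracting the two identities, the derivative terms $X\dual{\gamma,Y}$ cancel, and inserting the defining relation $\dnabla_X Y = \nabla_X Y + z\, X*Y$ from \eqref{defc} gives
\[
\dual{\dnabla_X\gamma, Y} = \dual{\nabla_X\gamma, Y} - z\,\dual{\gamma, X*Y}
\]
for arbitrary $X$. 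It then remains to specialise $X = \alpha^\sharp$ and to identify the last term.

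The key step is the identity $\dual{\gamma, \alpha^\sharp * Y} = \dual{\alpha\circ\gamma, Y}$. To establish it, write $\gamma = (\gamma^\sharp)^\flat$, so that $\dual{\gamma, \alpha^\sharp*Y} = \eta\bra{\gamma^\sharp, \alpha^\sharp*Y}$; by the invariance \eqref{inv} of $\eta$ with respect to $*$ together with the commutativity of $*$, this equals $\eta\bra{\alpha^\sharp*\gamma^\sharp, Y}$, which by the definition of $\circ$ in the cotangent bundle is $\ddual{\bra{\alpha^\sharp*\gamma^\sharp}^\flat, Y} = \dual{\alpha\circ\gamma, Y}$. Substituting this back into the displayed formula and using that $Y$ is arbitrary yields exactly \eqref{defc2}.

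The computation is essentially routine; the only point requiring care — and arguably the sole obstacle — is bookkeeping of the musical isomorphisms, so that the invariance of $\eta$ is applied in the correct slots, and noting that the minus sign in front of $z$ in \eqref{defc2} arises precisely from transporting the multiplication term from the $\dnabla_X Y$ side of the Leibniz rule to the $\dnabla_X\gamma$ side.
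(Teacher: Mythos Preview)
Your argument is correct and follows essentially the same route as the paper's proof: both use the Leibniz rule determining the induced connection on $1$-forms and the invariance identity $\dual{\gamma,\alpha^\sharp*Y}=\eta(\gamma^\sharp,\alpha^\sharp*Y)=\eta(\alpha^\sharp*\gamma^\sharp,Y)=\dual{\alpha\circ\gamma,Y}$. The only cosmetic difference is that you subtract the two Leibniz rules at the outset, whereas the paper expands $\dnabla$ directly and then recognises the $\nabla$-Leibniz rule inside.
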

\begin{proof}
By the invariance \eqref{inv} one finds that
\eqq{
\dual{\gamma,\alpha^\sharp*X} = \eta(\gamma^\sharp,\alpha^\sharp*X) =
\eta(\alpha^\sharp*\gamma^\sharp,X) = \dual{\alpha\circ\gamma,X}.
}
Hence, using the properties of affine connection we have
\eqq{
\dual{\dnabla_{\alpha^\sharp}\gamma,X} &= \Dir_{\alpha^\sharp} \dual{\gamma,X} - \dual{\gamma,\dnabla_{\alpha^\sharp}X} = \dual{\nabla_{\alpha^\sharp}\gamma,X} - z\, \dual{\gamma,\alpha^\sharp*X}\\
&= \dual{\nabla_{\alpha^\sharp} \gamma,X} - z\, \dual{\alpha\circ \gamma,X},
}
which gives the formula \eqref{defc}.
\end{proof}

Flatness of the deformed connection $\dnabla$ entails (local) existence of its flat coordinates 
$\H^k(z) \equiv \H^k(t, z)$ such that
\eqq{
	\dnabla_j d \H^k(z) = 0\qquad\iff\qquad 
	\frac{\pr^2 \H^k(z)}{\pr t^i\pr t^j} = z\, c_{ij}^l\, \frac{\pr \H^k(z)}{\pr t^l},
}
where $\nabla_j\equiv \nabla_{\frac{\pr}{\pr t^j}}$. 
One can expand $\H^k(z)$ into the formal power series 
\eqq{
\H^k(z) := \sum_{n=0}^\infty \H^k_{(n)}(t)z^n.
} 
Then, the coefficient functions $\H^k_{(n)}(t)$ can be determined recursively from
\eqq{
	\frac{\pr^2 \H^k_{(n)}}{\pr t^i\pr t^j}  = 
c_{ij}^l\, \frac{\pr \H^k_{(n-1)}}{\pr t^l} \qquad n>0.
} 
This recurrence formula has the following coordinate-free form: 
\eq{\label{rec2}
\nabla_{\alpha^\sharp}d\H^k_{(n)} = \alpha\circ d\H^k_{(n-1)}\qquad n>0,
}
valid for arbitrary $\alpha\in\of{M}$.

\begin{proposition}\label{Fpro}
Assuming normalization $\H^k_{(0)}= t^k$, if $d\neq 3$
the prepotential $\F$ can be determined from $\H^k_{(1)}$ using the formula
\eq{\label{Fpro1}
	\F = \frac{1}{3-d}\sum_{i,j} \eta_{ij}E^i \H^j_{(1)} + quad.\ pol.,
}
where the equality holds modulo quadratic polynomials in the flat coordinates.
\end{proposition}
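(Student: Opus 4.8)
The plan is to extract the prepotential directly from the first-order flat coordinate $\H^k_{(1)}$ by combining the recursion \eqref{rec2} at level $n=1$ with the quasi-homogeneity property of $\F$. First I would compute $\nabla_i d\H^k_{(1)}$ using the coordinate form of \eqref{rec2} with $\alpha = dt^i$, which gives $\pr_i\pr_j\H^k_{(1)} = c^k_{ij} = \eta^{kl}c_{lij} = \eta^{kl}\pr_l\pr_i\pr_j\F$ (using \eqref{c} and $\H^k_{(0)}=t^k$, so $d\H^k_{(0)}=dt^k$). Lowering the index $k$ with $\eta$, this says $\eta_{km}\pr_i\pr_j\H^k_{(1)} = \pr_m\pr_i\pr_j\F$, i.e. the third derivatives of $\sum_k \eta_{km}\H^k_{(1)}$ and of $\pr_m\F$ agree for all $i,j$. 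Equivalently, $\pr_i\pr_j\bigl(\eta_{km}\H^k_{(1)} - \pr_m\F\bigr) = 0$, so $\eta_{km}\H^k_{(1)} = \pr_m\F + (\text{linear in }t)$ for each $m$.

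Next I would contract this relation with the Euler vector field. Multiplying by $E^m$ and summing, the left side becomes $\sum_{k,m}\eta_{km}E^m\H^k_{(1)}$, which is (up to relabeling) the right side of \eqref{Fpro1} before dividing by $3-d$. The right side becomes $E^m\pr_m\F$ plus $E^m$ times something linear in $t$; since $E^m = (1-q_m)t^m + r_m$ is affine in $t$, the extra term $E^m\cdot(\text{linear})$ is a quadratic polynomial. By the quasi-homogeneity relation \eqref{hom}, $E^m\pr_m\F = \Lie_E\F = (3-d)\F$ modulo quadratic polynomials. Hence $\sum_{k,m}\eta_{km}E^m\H^k_{(1)} = (3-d)\F$ modulo quadratic polynomials, and dividing by $3-d$ (legitimate since $d\neq 3$) yields \eqref{Fpro1}.

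One subtlety to be careful about is the precise bookkeeping of the ambiguities: $\H^k_{(1)}$ is itself only determined up to an affine function of $t$ by the recursion (the level-$0$ data is fixed, but the integration constants at level $1$ are free), and this affine ambiguity in $\H^k_{(1)}$ contributes, after contraction with the affine field $E$, exactly a quadratic polynomial — consistent with the "$+\ quad.\ pol.$" in the statement. I would state explicitly that all these ambiguities (the integration constants in $\H^k_{(1)}$, the linear term produced when integrating $\pr_i\pr_j(\cdots)=0$ twice, and the quadratic part dropped in \eqref{hom}) are absorbed into the quadratic-polynomial indeterminacy, so the formula is an identity in the appropriate quotient. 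The only genuine input beyond routine index manipulation is the identification $\pr_i\pr_j\H^k_{(1)} = c^k_{ij}$, which is immediate from \eqref{rec2}, and the homogeneity \eqref{hom}; there is no real obstacle here, the "hard part" is merely tracking the polynomial corrections cleanly rather than any conceptual difficulty.
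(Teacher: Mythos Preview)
Your proposal is correct and follows essentially the same route as the paper's own proof: identify $\pr_i\pr_j\H^k_{(1)} = c^k_{ij}$ from the recursion with the normalization $\H^k_{(0)}=t^k$, integrate twice to obtain $\eta_{km}\H^k_{(1)} = \pr_m\F$ modulo linear terms, and then contract with $E$ and invoke \eqref{hom} to solve for $\F$. Your extra care in tracking how the affine ambiguities combine with the affine form of $E$ to produce only quadratic corrections is a welcome elaboration of a point the paper leaves implicit.
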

\begin{proof}
Using the normalization and \eqref{c} we have
\eqq{
\frac{\pr^2 \H^k_{(1)}}{\pr t^i\pr t^j} = c_{ij}^l\frac{\pr \H^k_{(0)}}{\pr t^l} 
= c_{ij}^k = \eta^{kl} c_{ijl} = \eta^{kl}\frac{\pr^3 \F}{\pr t^i\pr t^j\pr t^l}.
}
Integrating twice, one finds
\eqq{
\H^k_{(1)} = \eta^{kl}\frac{\pr \F}{\pr t^l} + lin.\ pol.\quad\Rightarrow\quad 
\frac{\pr \F}{\pr t^i} = \eta_{ik} \H^k_{(1)} + lin.\ pol.\ ,
}
where the equalities hold modulo linear functions.
Next, substituting this to \eqref{hom} one obtains the desired formula \eqref{Fpro1}.
\end{proof}

\section{Construction of Frobenius algebras}\label{fro}

\subsection{Rota-Baxter identity}

Recall that a Frobenius algebra is an associative commutative unital algebra $\Alg$
endowed with a nondegenerate invariant symmetric bilinear scalar form.

\begin{proposition}\label{prop1}
Let $\Alg$ be an associative (non-necessarily commutative) algebra. 
\begin{itemize}
\item[i)] Define new multiplication on $\Alg$: 
\eq{\label{newm}
	a\circ_\ell b := \ell(a)b + a\ell(b)\qquad a,b\in\Alg,
}
generated by some linear map $\ell:\Alg\arrow\Alg$.
A sufficient condition for the multiplication \eqref{newm} to be associative
is the identity
\eq{\label{RB}
	\ell(a\circ_\ell b) - \ell(a)\ell(b) = \kappa\, ab, 	  	
}
which must hold for all $a,b\in\Alg$ and some $\kappa\in\mathrm{Z}(\Alg)$ (center of $\Alg$). 

\item[ii)] Moreover, the linear map $\tilde{\ell}(\cdot)=\ell(\delta\, \cdot)$, where $\ell$ is composed with  some $\delta\in\mathrm{Z}(\Alg)$, 
satisfies \eqref{RB}, with $\kappa$ replaced by $\tilde{\kappa} = \kappa\, \delta^2$, iff $\ell$ 
satisfies~\eqref{RB}.     
\end{itemize}
\end{proposition}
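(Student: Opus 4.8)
The plan is to prove the two parts in order, the first by a direct expansion and the second by an even simpler substitution argument.

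For part i), I would compute both $(a \circ_\ell b) \circ_\ell c$ and $a \circ_\ell (b \circ_\ell c)$ by repeatedly applying the definition \eqref{newm}. Expanding $(a\circ_\ell b)\circ_\ell c = \ell(a\circ_\ell b)c + (a\circ_\ell b)\ell(c) = \ell(a\circ_\ell b)c + \ell(a)b\ell(c) + a\ell(b)\ell(c)$, and symmetrically $a\circ_\ell(b\circ_\ell c) = a\ell(b\circ_\ell c) + \ell(a)\ell(b)c + \ell(a)b\ell(c)$. The middle term $\ell(a)b\ell(c)$ is common to both, so associativity reduces to the identity $\ell(a\circ_\ell b)c + a\ell(b)\ell(c) = a\ell(b\circ_\ell c) + \ell(a)\ell(b)c$, i.e. $\bigl(\ell(a\circ_\ell b) - \ell(a)\ell(b)\bigr)c = a\bigl(\ell(b\circ_\ell c) - \ell(b)\ell(c)\bigr)$. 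Here is where \eqref{RB} enters: the left side becomes $\kappa(ab)c$ and the right side $a(\kappa(bc)) = \kappa(abc)$ using that $\kappa$ is central, and these agree by associativity of the original product. I should be slightly careful about noncommutativity — I must not reorder factors except through $\kappa$, which is allowed precisely because $\kappa \in \mathrm{Z}(\Alg)$ — but no commutativity of $\ell$ with anything is needed.

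For part ii), I would substitute $\tilde\ell(a) = \ell(\delta a)$ into \eqref{RB} and track the factors of $\delta$, using centrality of $\delta$ throughout. First note $a \circ_{\tilde\ell} b = \tilde\ell(a)b + a\tilde\ell(b) = \ell(\delta a)b + a\ell(\delta b) = \ell(\delta a)b + \ell(\delta b)a$ after moving the central $\delta$, which we want to relate to $\delta a \circ_\ell \delta b = \ell(\delta a)\delta b + \delta a\,\ell(\delta b)$; the cleanest route is to observe $\tilde\ell(a \circ_{\tilde\ell} b) = \ell(\delta(a\circ_{\tilde\ell}b)) = \ell\bigl(\delta\ell(\delta a)b + \delta a\ell(\delta b)\bigr) = \ell\bigl(\ell(\delta a)(\delta b) + (\delta a)\ell(\delta b)\bigr) = \ell\bigl((\delta a)\circ_\ell(\delta b)\bigr)$. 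Then $\tilde\ell(a\circ_{\tilde\ell}b) - \tilde\ell(a)\tilde\ell(b) = \ell\bigl((\delta a)\circ_\ell(\delta b)\bigr) - \ell(\delta a)\ell(\delta b)$, which by \eqref{RB} for $\ell$ equals $\kappa(\delta a)(\delta b) = \kappa\delta^2 ab$, giving exactly \eqref{RB} for $\tilde\ell$ with $\tilde\kappa = \kappa\delta^2$. The converse direction is symmetric: nothing in the argument used invertibility of $\delta$, so reading the equalities backwards shows \eqref{RB} for $\tilde\ell$ forces $\ell\bigl((\delta a)\circ_\ell(\delta b)\bigr) - \ell(\delta a)\ell(\delta b) = \kappa\delta^2 ab$ for all $a,b$; but wait, this only controls $\ell$ on elements of the form $\delta a$, so the genuine "iff" presumably presumes $\delta$ is such that $\delta\Alg = \Alg$ (e.g. invertible), and I would either add that hypothesis or interpret the claim as an identity in the stated generality only after noting this subtlety.

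The main obstacle is bookkeeping rather than any real difficulty: keeping the noncommutative factors in the correct order in part i) and resisting the temptation to commute things, and in part ii) being honest about what the converse actually needs. Both computations are short once set up carefully, so the write-up should be only a few lines each.
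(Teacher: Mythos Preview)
Your approach is correct and essentially identical to the paper's: the same expansion in part i) leading to $\bigl(\ell(a\circ_\ell b)-\ell(a)\ell(b)\bigr)c - a\bigl(\ell(b\circ_\ell c)-\ell(b)\ell(c)\bigr) = \kappa abc - a\kappa bc$, and the same substitution $\tilde\ell(a\circ_{\tilde\ell}b)-\tilde\ell(a)\tilde\ell(b) = \ell\bigl((\delta a)\circ_\ell(\delta b)\bigr)-\ell(\delta a)\ell(\delta b)$ in part ii). One slip to fix: the equality $a\ell(\delta b) = \ell(\delta b)a$ you wrote does not follow from centrality of $\delta$ (you are commuting $a$ with $\ell(\delta b)$, not with $\delta$), but you immediately abandon this line for the correct ``cleanest route'', so the actual argument is fine. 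Your caveat about the converse direction needing $\delta\Alg=\Alg$ is well taken and in fact more careful than the paper, which simply asserts the iff from the displayed computation.
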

\begin{proof}
From the formula \eqref{RB},
\eqq{
&\bra{a\circ_\ell b}\circ_\ell c - a\circ_\ell\bra{b\circ_\ell c} =
  \ell\bra{a\circ_\ell b}c + \bra{a\circ_\ell b}\ell(c) - \ell(a)\bra{b\circ_\ell c}
- a\, \ell\bra{b\circ_\ell c}=\\
&\qquad\qquad = \brac{\ell(a\circ_\ell b) - \ell(a)\ell(b)}c -a \brac{\ell(b\circ_\ell c) - \ell(b)\ell(c)} = \kappa abc - a\kappa bc.
}
Thus, the first assertion follows immediately if $\kappa$ commutes with all elements from $\Alg$. For $\tilde{\ell}$
the right-hand side of \eqref{RB} takes the form
\eqq{
\tilde{\ell}(a\circ_{\tilde{\ell}}b)-\tilde{\ell}(a)\tilde{\ell}(b) = \ell\bra{(\delta a)\circ_\ell (\delta b)} - \ell(\delta a)\ell(\delta b)
= \kappa\delta^2 ab.
}
Hence, the second assertion follows.
\end{proof}

\begin{remark}
The formula \eqref{RB} is known as the Rota-Baxter identity \cite{Bax,Rot}. In most cases, when $\Alg$ is unital, $\kappa$  is a scalar weight. Associative algebras equipped with an operator satisfying the identity \eqref{RB}
are called Rota-Baxter algebras, for information on the subject see \cite{Guo} and references therein.
\end{remark}

\subsection{Invariant scalar product} 

Let $\Alg$ be a commutative associative unital algebra with a trace form given by a linear map $\tr:\Alg\arrow\Km$
such that the pairing 
\eqq{(\cdot,\cdot)_\Alg:\Alg\times\Alg\arrow \Km\qquad (a,b)_\Alg:= \tr(ab),
}
is nondegenerate. We will  call such trace nondegenerate.  

On the other hand, for an unital associative commutative algebra $\Alg$ equipped with a nondegenerate invariant pairing the trace form can be always defined as $\tr{a}: = (a,1)_\Alg$. This pairing is naturally invariant, hence such $\Alg$ is a Frobenius algebra. However, our aim is the construction 
of a {\it more complex} Frobenius structure on $\Alg$ with a scalar product invariant with respect to the commutative multiplication \eqref{newm}.

Let $\circ_\ell$ defined by \eqref{newm} be a second commutative associative multiplication on~$\Alg$. 
Then, we can define bilinear form (metric):
\eq{\label{nmet}
	\eta(a,b) := \tr\bra{a\circ_\ell b}\qquad a,b\in\Alg,	
} 
naturally invariant with respect to the multiplication $\circ_\ell$, that is
\eqq{
	\eta\bra{a\circ_\ell b, c} = \tr\bra{a\circ_\ell b\circ_\ell c} 
	= \eta\bra{a,b\circ_\ell c}.   	
}  
If the new multiplication \eqref{newm} is unital and such that \eqref{nmet} is nondegenerate, then 
the multiplication \eqref{newm} together with the metric \eqref{nmet} define structure of a Frobenius algebra on~$\Alg$. 

The relation from the following proposition will be needed later.

\begin{proposition} 
For the algebra $\Alg$ endowed with a nondegenerate inner product,
the Rota-Baxter identity \eqref{RB} is equivalent  to  the following 'dual'
relation:
\eq{\label{dRB}
\ell^*\bra{\ell^*(a)b}-\ell^*\bra{a\,\ell(b)}  +  \ell^*(a)\ell(b) = \kappa\, ab,
}
where $\ell^*$ is the adjoint of $\ell$ such that $\tr(\ell^*(a)b):=\tr(a\ell(b))$.
\end{proposition}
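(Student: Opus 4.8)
The plan is to show that the Rota--Baxter identity \eqref{RB} and the dual relation \eqref{dRB} are each obtained from the other by pairing against an arbitrary third element $c\in\Alg$ and using nondegeneracy of the inner product together with the adjointness $\tr(\ell^*(a)b)=\tr(a\ell(b))$. Since $(\cdot,\cdot)_\Alg$ is nondegenerate, two elements $x,y\in\Alg$ coincide iff $\tr(xc)=\tr(yc)$ for all $c\in\Alg$; so it suffices to show that \eqref{RB} holds for all $a,b$ iff $\tr\bigl((\ell^*(\ell^*(a)b)-\ell^*(a\,\ell(b))+\ell^*(a)\ell(b))c\bigr)=\tr(\kappa\, ab\, c)$ for all $a,b,c$.

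First I would rewrite the left-hand side of the paired dual relation. Using $\tr(\ell^*(x)y)=\tr(x\ell(y))$ on the first two terms (with the roles of the arguments chosen so that $\ell^*$ is moved off), one gets
\eqq{
\tr\bigl(\ell^*(\ell^*(a)b)\,c\bigr) &= \tr\bigl(\ell^*(a)b\,\ell(c)\bigr) = \tr\bigl(a\,\ell(b\,\ell(c))\bigr),\\
\tr\bigl(\ell^*(a\,\ell(b))\,c\bigr) &= \tr\bigl(a\,\ell(b)\,\ell(c)\bigr),\\
\tr\bigl(\ell^*(a)\ell(b)\,c\bigr) &= \tr\bigl(a\,\ell(\ell(b)\,c)\bigr),
}
where commutativity of $\Alg$ is used freely to reorder products. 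Adding the first and third and subtracting the second, the bracket becomes $\tr\bigl(a\,[\,\ell(\ell(b)c)+\ell(b\,\ell(c))-\ell(b)\ell(c)\,]\bigr)$, i.e. $\tr\bigl(a\cdot(\text{LHS of \eqref{RB} with }b,c)\bigr)$; note $\ell(\ell(b)c)+\ell(b\ell(c))=\ell(b\circ_\ell c)$ by \eqref{newm}. Meanwhile the right-hand side paired with $c$ is $\tr(\kappa\,abc)=\tr\bigl(a\cdot(\kappa\, bc)\bigr)$, using that $\kappa\in\mathrm{Z}(\Alg)$. Hence the paired dual relation for all $a,b,c$ is equivalent to $\tr\bigl(a\cdot(\ell(b\circ_\ell c)-\ell(b)\ell(c)-\kappa\, bc)\bigr)=0$ for all $a,b,c$, which by nondegeneracy is exactly \eqref{RB} (in the variables $b,c$). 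This gives both implications at once.

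I expect the main subtlety to be bookkeeping rather than anything deep: one must be careful that $\ell^*$ is only assumed adjoint to $\ell$ with respect to $\tr(\,\cdot\,\cdot\,)$ — it is \emph{not} assumed to satisfy its own Rota--Baxter identity a priori — so every manipulation has to move $\ell^*$ onto $\ell$ via the defining relation and never the reverse, and the commutativity of $\Alg$ is essential for grouping the three terms under a single $\tr(a\cdot(-))$. One should also check that $\kappa$ being central (indeed, in the unital case, a scalar) is what lets $\tr(\kappa\,abc)$ be rewritten as $\tr\bigl(a\cdot\kappa bc\bigr)$; this is automatic here. No appeal to unitality of $\circ_\ell$ or to nondegeneracy of the associated metric \eqref{nmet} is needed — only nondegeneracy of the original pairing $(\cdot,\cdot)_\Alg$.
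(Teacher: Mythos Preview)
Your proof is correct and is essentially the same as the paper's: the paper packages the two identities as functionals $K_1[a,b]$ and $K_2[a,b]$ and verifies the single trace identity $\tr(K_1[a,b]\,c)=\tr(a\,K_2[c,b])$, which is exactly the computation you carry out term by term using adjointness and commutativity. Nondegeneracy of the pairing then gives the equivalence, just as you argue.
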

\begin{proof}
Define functionals in the form: 
\eqq{
K_1\brac{a,b}&:= \ell\bra{\ell(a) b} + \ell\bra{a\, \ell(b)} - \ell(a)\ell(b) - \kappa\ a b,\\
K_2\brac{a,b}&:=\ell^*\bra{\ell^*(a)b}-\ell^*\bra{a\,\ell(b)}  +  \ell^*(a)\ell(b) - \kappa\, ab,
}
which vanishing is equivalent to the identities \eqref{RB} and \eqref{dRB}.
The lemma follows from the equality
\eqq{
\tr\bra{K_1\brac{a,b}c} = \tr\bra{a K_2\brac{c,b}} 
}
and the fact that the inner product defined by the trace form is assumed to be nondegenerate.
\end{proof}

\subsection{Special case}

There is a class of simple solutions to the Rota-Baxter identity \eqref{RB} that will be of interest to us. 
Assume that the algebra $\Alg$ can be decomposed into a (vector) direct sum
of subalgebras preserving the multiplication, that is
\eqq{
	\Alg = \Alg_+\oplus\Alg_-\qquad \Alg_\pm\Alg_\pm\subset\Alg_\pm\qquad 
\Alg_+\cap\Alg_-= \pobr{0}  . 	
}   
Denoting the projections onto this subalgebras by $P_\pm$, we define $\ell:\Alg\arrow\Alg$ by
\eq{\label{pm}
	\ell = \frac{1}{2}\bra{P_+-P_-}.
}

\begin{proposition}\label{rbs}  
The linear map \eqref{pm} satisfies the identity \eqref{RB} for $\kappa =\frac{1}{4}$.
\end{proposition}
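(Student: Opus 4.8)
The plan is to verify the Rota-Baxter identity \eqref{RB} directly by writing everything in terms of the projections $P_\pm$ and using only the three structural facts about the decomposition: $P_+ + P_- = \id$, $P_\pm^2 = P_\pm$, $P_+P_- = P_-P_+ = 0$, and that each $\Alg_\pm$ is a subalgebra so $P_\pm(P_\pm(a)P_\pm(b)) = P_\pm(a)P_\pm(b)$. First I would substitute $\ell = \tfrac12(P_+ - P_-)$ into the left-hand side $\ell(a\circ_\ell b) - \ell(a)\ell(b)$, where $a\circ_\ell b = \ell(a)b + a\ell(b)$, and expand. The term $\ell(a)\ell(b) = \tfrac14(P_+-P_-)(a)\,(P_+-P_-)(b)$ splits into four products $P_i(a)P_j(b)$ with signs; the term $\ell(a\circ_\ell b)$ similarly produces expressions of the form $\ell(P_i(a)\,b)$ and $\ell(a\,P_j(b))$.

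The key simplification is that for any decomposition of this type one has, for $c\in\Alg_+$ and arbitrary $b$, the identity $P_+(cb) = c\,P_+(b)$ and $P_-(cb) = c\,P_-(b)$: indeed write $b = P_+(b) + P_-(b)$, then $cb = c\,P_+(b) + c\,P_-(b)$ with the first summand in $\Alg_+$ and the second in $\Alg_-$, so applying $P_\pm$ picks out the respective piece. Hence $\ell(cb) = c\,\ell(b)$ whenever $c\in\Alg_+$, and likewise $\ell(cb) = -c\,\ell(b)$ whenever $c\in\Alg_-$; compactly, $\ell\bigl(\ell(a)b\bigr) = \ell(a)\ell(b)$ for every $a,b$, because $\ell(a) = \tfrac12 P_+(a) - \tfrac12 P_-(a)$ and the two pieces lie in $\Alg_+$ and $\Alg_-$ respectively, contributing $+$ and $+$ signs (the extra minus from $\ell(a)$ being in $\Alg_-$ cancels the minus from pulling $\ell$ through). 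By the same token $\ell\bigl(a\,\ell(b)\bigr) = \ell(a)\ell(b)$. Therefore $\ell(a\circ_\ell b) = \ell\bigl(\ell(a)b\bigr) + \ell\bigl(a\ell(b)\bigr) = 2\,\ell(a)\ell(b)$, and the left-hand side of \eqref{RB} becomes $2\ell(a)\ell(b) - \ell(a)\ell(b) = \ell(a)\ell(b)$.

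It then remains to compute $\ell(a)\ell(b)$ and show it equals $\tfrac14 ab$. Here I use that $\Alg_\pm\Alg_\pm\subset\Alg_\pm$ forces the cross terms to vanish: $\ell(a)\ell(b) = \tfrac14\bigl(P_+(a) - P_-(a)\bigr)\bigl(P_+(b) - P_-(b)\bigr)$, and expanding gives $\tfrac14\bigl(P_+(a)P_+(b) - P_+(a)P_-(b) - P_-(a)P_+(b) + P_-(a)P_-(b)\bigr)$. The mixed products $P_+(a)P_-(b)$ and $P_-(a)P_+(b)$ need not individually vanish as elements of $\Alg$, so I cannot simply drop them; instead I compare with $ab = \bigl(P_+(a)+P_-(a)\bigr)\bigl(P_+(b)+P_-(b)\bigr) = P_+(a)P_+(b) + P_+(a)P_-(b) + P_-(a)P_+(b) + P_-(a)P_-(b)$. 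So what I actually need is that the two mixed terms cancel in $\ell(a)\ell(b)$ but add in $ab$ — which is exactly the sign pattern above — hence $4\,\ell(a)\ell(b) = ab - 2\bigl(P_+(a)P_-(b) + P_-(a)P_+(b)\bigr) + \dots$; this does not immediately give $\tfrac14 ab$ unless the mixed products vanish. The resolution, and the one genuine point to get right, is that the claim should follow from applying $\ell$ once more or from re-deriving via the closed formula $\ell\bigl(\ell(a)b + a\ell(b)\bigr)$ with the observation that $\ell$ kills nothing but reshuffles signs; concretely, reconsider $\ell\bigl(\ell(a)b\bigr)$: with $\ell(a) = u_+ - u_-$ where $u_\pm := \tfrac12 P_\pm(a)$, we get $\ell(u_+ b) = u_+\ell(b)$ and $\ell(u_- b) = -u_-\ell(b)$ (since $u_-\in\Alg_-$), so $\ell(\ell(a)b) = (u_+ + u_-)\ell(b) = \tfrac12(P_+(a)+P_-(a))\ell(b) = \tfrac12 a\,\ell(b)$, and symmetrically $\ell(a\ell(b)) = \tfrac12\ell(a)\,b$. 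Hence $\ell(a\circ_\ell b) - \ell(a)\ell(b) = \tfrac12 a\ell(b) + \tfrac12\ell(a)b - \ell(a)\ell(b)$. Writing $\ell = \tfrac12(P_+ - P_-) = \tfrac12(\id - 2P_-) = \tfrac12(2P_+ - \id)$ and expanding this last expression in $P_\pm$ then collapses, after using $P_\pm(a)P_\pm(b)\in\Alg_\pm$ once, to exactly $\tfrac14 ab$; the bookkeeping of the four $P_i(a)P_j(b)$ terms with their coefficients is the only real work, and it is routine. The main obstacle, such as it is, is purely organizational: keeping the signs straight when pulling $\ell$ through products and not conflating "vanishes in $\Alg_\pm$" with "vanishes in $\Alg$" for the cross terms. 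I would present the computation using the substitution $\ell(a\circ_\ell b) = \tfrac12 a\ell(b) + \tfrac12\ell(a)b$ derived above, since it makes the final identification $\ell(a)\ell(b) - \tfrac12 a\ell(b) - \tfrac12\ell(a)b = -\tfrac14 ab$ a one-line expansion in $P_\pm$.
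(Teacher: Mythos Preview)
Your argument contains a genuine error that recurs in both attempts. You assert that for $c\in\Alg_+$ and arbitrary $b$ one has $P_+(cb)=c\,P_+(b)$ and $P_-(cb)=c\,P_-(b)$, justifying this by writing $cb = c\,P_+(b) + c\,P_-(b)$ ``with the first summand in $\Alg_+$ and the second in $\Alg_-$''. But the hypothesis is only $\Alg_\pm\Alg_\pm\subset\Alg_\pm$; nothing is assumed about the mixed products $\Alg_+\Alg_-$. In particular $c\,P_-(b)$ need not lie in $\Alg_-$ --- that would require $\Alg_-$ to be an \emph{ideal}, not merely a subalgebra. This invalidates your identities $\ell(cb)=\pm c\,\ell(b)$ for $c\in\Alg_\pm$, and hence both $\ell(\ell(a)b)=\ell(a)\ell(b)$ and, in the second attempt, $\ell(\ell(a)b)=\tfrac12 a\,\ell(b)$. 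A quick sanity check: your final formula $\ell(a\circ_\ell b)=\tfrac12\ell(a)b+\tfrac12 a\,\ell(b)$ is literally $\ell(a\circ_\ell b)=\tfrac12(a\circ_\ell b)$, which would force $a\circ_\ell b\in\Alg_+$ for all $a,b$ --- visibly false, since $a\circ_\ell b=a_+b_+-a_-b_-$ has a nontrivial $\Alg_-$ component in general.

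The paper's proof sidesteps the cross-product issue entirely by computing $a\circ_\ell b$ \emph{first}: with $a=a_++a_-$, $b=b_++b_-$ one gets $\ell(a)b+a\,\ell(b)=a_+b_+-a_-b_-$, where the mixed terms $a_\pm b_\mp$ cancel algebraically before any projection is applied. Only then is $\ell$ applied to an element already of the form (something in $\Alg_+$) $+$ (something in $\Alg_-$), where the subalgebra property suffices. Subtracting $\ell(a)\ell(b)=\tfrac14(a_+-a_-)(b_+-b_-)$ from $\ell(a\circ_\ell b)=\tfrac12(a_+b_++a_-b_-)$ then gives $\tfrac14 ab$ immediately. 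The fix to your write-up is exactly this reordering: expand $a\circ_\ell b$ in components before projecting.
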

\begin{proof}
Set $a_\pm\equiv P_\pm(a)$ for $a\in\Alg$, thus $a=a_+ + a_-$. Then,
\eqq{
&\ell\bra{a\circ_\ell b} = \ell\bra{a_+b_+-a_-b_-} = \frac{1}{2}\bra{a_+b_++a_-b_-},\\ 
&\ell(a)\ell(b)= \frac{1}{4}\bra{a_+b_+-a_+b_--a_-b_++a_-b_-}.
}
Now, the result follows from a simple verification.
\end{proof}

\begin{remark}
The above special solution to the Rota-Baxter identity is a counterpart
of a similar construction for the modified Yang-Baxter equation, see Appendix~\ref{crm}.
\end{remark}

\section{Construction of flat metrics}\label{flat}

Let $\Alg$ be a commutative associative unital algebra equipped with a trace form $\tr:\Alg\arrow \Km$ such that the symmetric product $(a,b)_\Alg:= \tr\bra{ab}$  is nondegenerate. 
Furthermore, let be given some derivation $\pr\in\Der\Alg$ invariant with respect to the trace form, that is
\eq{\label{invd}
\tr\bra{a' b} = -\tr\bra{a b'}.
} 
Here and later we will frequently use the following notation for the above derivation:
\eq{\label{notation}
	a'\equiv \pr a.
}

Let $\Alg_M\subset  \Alg$ constitute a subspace (submanifold) of $\Alg$
corresponding to an underlying manifold $M$ embedded in $\Alg$, that is $\Alg_M$
is the image of the embedding. At each point $\la\in\Alg_M$ the tangent space $\T_\la\Alg_M$ 
can be uniquely identified with the vector subset of $\Alg$ defined by
\eqq{
	\T_\la\Alg_M \equiv \pobr{\left . \dot{\gamma} \right |_{t_0}\,|\ \gamma:I\arrow\Alg_M\ \text{s.t.}\ \gamma(t_0)=\la},
}
where $I$ is some interval containg $t_0$ and $\gamma$ means a smooth curve in $\Alg_M$.  Accordingly, each vector field $X\in\vf{\Alg_M}$ defines a smooth map $\Alg_M\arrow\Alg$ s.t.~$\la\map \left . X\right |_\la\in\T_\la\Alg_M$.

Using the nondegenerate symmetric product on $\Alg$
another vector subset can be identified as the cotangent space $\T_\la^*\Alg_M$, such that 
for each $\la\in\Alg_M$  the duality pairing takes the form
\eq{\label{dualp}
\dual{\ ,\ }_\la:\T_\la^*\Alg_M\times\T_\la\Alg_M\arrow \Km\qquad
\dual{\alpha,X}_\la:= (X,\alpha)_\Alg = \tr(X\alpha).
}  
Note that $\T_\la^*\Alg_M$ is defined modulo the orthogonal complement of $\T_\la\Alg_M$:
\eq{\label{oc}
	(\T_\la\Alg_M)^\perp \equiv \pobr{\alpha\in\Alg\, |\ \tr(X\alpha) = 0\ \text{for all}\ X\in\T_\la\Alg_M}.
}
In fact the cotangent space $\T_\la^*\Alg_M$ can be identified with the quotient space $\Alg/(\T_\la\Alg_M)^\perp$.
Choosing representatives of the cotangent spaces so that $\T_\la^*\Alg_M$ varies smoothly with respect to $\la\in\Alg_M$, we can also associate with each covector field (differential $1$-form) on $\Alg_M$, $\gamma\in\of{\Alg_M}$, a smooth map $\Alg_M\arrow\Alg$ s.t.~$\la\map \left . \gamma\right |_\la\in\T_\la^*\Alg_M$.

\begin{remark}\label{rde}
The notion of the directional (G\^ateaux) derivative, see Appendix~\ref{a1}, can be easily extended
onto (differentiable) maps $F:\Alg_M\arrow \Alg$ through the formula 
\eq{\label{edd}
		\bra{\Dir_X F}(\la) = \Diff{F\bra{\la+\varepsilon X}}{\varepsilon}{\varepsilon=0}\qquad X\in\vf{\Alg_M}.
}
Naturally the right-hand side must be computed within the algebra $\Alg$.
The differentiability here means that \eqref{edd} holds.  
In particular we have the equality $\Dir_X\la = X$ for arbitrary $\la\in\Alg_M$.
For instance the power function $F(\la) = \la^n$, where $n\in\mathbb{N}$ and $\la\in\Alg_M$,
defines a map $F:\Alg_M\arrow \Alg$ and in this case the derivative \eqref{edd} is simple to compute, that is $\Dir_XF = n\la^{n-1}X$.
In greater generality,  for power series functions $F=F(\la)$ of a single variable $\la\in\Alg_M$ the directional derivative \eqref{edd} exists and is given by $\Dir_XF = \frac{d F}{d \la}X$. Here the right-hand side is well-defined since both factors take values in the (commutative) algebra~$\Alg$. 
\end{remark}

\begin{remark}
We will say that the map $F:\Alg\arrow\Alg$ is invariant on $\Alg_M$ if the relation
\eq{\label{rr1}
	\Dir_X (F\circ G) = F\circ \Dir_X G
}
holds for arbitrary differentiable (in the sense of \eqref{edd}) map $G:\Alg_M\arrow \Alg$  and $X\in\vf{\Alg_M}$.
This means that the directional derivative commutes with invariant maps. In particular, composing an invariant 
map $F:\Alg\arrow\Alg$ with the inclusion $\iota:\Alg_M\hookrightarrow \Alg$, $\la\map \la$, 
we have the relation
\eq{\label{rr2}
	\brac{\Dir_X (F\circ \iota)}(\la)\equiv \Dir_X F(\la) = F(X)\qquad \la\in\Alg_M.
}
In \eqref{rr1} and \eqref{rr2}  $\circ$ means the composition of maps and it shall not be confused with the multiplication
denoted later by the same symbol.
\end{remark}

The another assumption we made is that the derivation $\pr$ be invariant on $\Alg_M$.
This means that $\pr$ commutes with the directional derivatives associated with $X\in\vf{\Alg_M}$ and in particular that 
$\Dir_X\la' = X'$, where $\la\in\Alg_M$.

\begin{remark}
From the all above assumptions it follows that the algebra $\Alg$, at least in principle, is infinite-dimensional and it has degrees of freedom related to the derivation $\pr$ and the underlying manifold $M$. A simple 'prototype' example of an algebra satisfying
the above restrictions is the algebra of formal Laurent series (at $\infty$) $\Cm\cc{p^{-1}}$\footnote{\label{foot}By $\Km\cc{x}$ we understand the algebra of formal infinite series in $x$ and $x^{-1}$ with coefficients from the field $\Km$ and only finitely many nonzero terms of negative degree.} with the derivation $\pr:=\pr_p$ and the the trace defined by means of the residue, that is $\tr(\sum_i a_ip^i) := a_{-1}$. For instance the (infinite-dimensional) submanifold $\Alg_M$ can be chosen as $\Alg_M
= \{p +\sum_{i\me 1}u_i p^{-i}\}$, where $\{u_i\}$ constitute the coordinates on $M$. The tangent space is given by $ \{\sum_{i\me 1}a_i p^{-i}\}$. One can easily check that $\pr$ is invariant on $\Alg_M$. For more details see Sections~\ref{ss1} and~\ref{ss3}. 
\end{remark}

Note that the algebra $\Alg$ constitute a framework for computations of structures that are defined on the underlying manifold $M$ associated with $\Alg_M$.

\subsection{Linear metric}

For some operator $r\in\End\Alg$ we define at each point $\la\in\Alg_M$
the following contravariant metric
\eq{\label{metric}
\eta_\la^*(\alpha, \beta) := \tr\bra{\la' r(\alpha)\beta +\la' \alpha\, r(\beta)},
}
where $\alpha,\beta\in\T_\la^*\Alg_M$. This metric \eqref{metric} will be formally called a linear metric due to the 
'explicit' first order dependence  on $\la$ in the formula \eqref{metric}.\footnote{Compare this with the terminology from Appendix~\ref{crm}.} The related canonical isomorphism 
$\left. \sharp \right |_\la:\T_\la^*\Alg_M\arrow\T_\la\Alg_M$,
such that 
\eq{\label{idcm0}
\eta^*(\alpha,\beta)  \equiv \tr\bra{\alpha^\sharp \beta},
}
is given by
\eq{\label{idcm}
\left . \alpha^\sharp \right|_\la  = \la' r(\alpha) + r^*(\la'\alpha),
}
where $r^*$ is the adjoint of $r$, i.e. $\tr\bra{r^*(a) b} := \tr\bra{a r(b)}$. 
For the nondegeneracy of the metric \eqref{metric} we require the kernel of $\sharp$
to be trivial at arbitrary $\la\in\Alg_M$. In practice, this requirement is possible to 
satisfy only outside some discriminant. Notice that from the identity \eqref{idcm0} it follows
that the metric \eqref{metric} naturally respects the quotient structure $\T_\la^*\Alg_M\cong \Alg/(\T_\la\Alg_M)^\perp$.

The following identity on the endomorphism $r$ turns out to be important:
 \eq{\label{rel}
r\bra{r(a) b'} + r\bra{a\, r(b)'} - r(a) r(b)'  = \kappa\, a b'\qquad a,b\in\Alg,
}
where $\kappa$ is some constant. 

\begin{theorem}\label{theorem1}
Assume that $r\in\End\Alg$ is invariant on $\Alg_M$, that is $r$ commutes with 
directional derivatives with respect to all vector fields $\vf{\Alg_M}$. Then, 
the following statements are valid:
\begin{itemize}
\item[(i)] If $r$ satisfies \eqref{rel}, the Levi-Civita connection of the metric \eqref{metric} has the form
\eq{\label{lcc}
\nabla_{\alpha^\sharp} \gamma = \Dir_{\alpha^\sharp}\gamma - r(\alpha)\gamma' -\alpha\, r(\gamma)'\qquad \alpha,\gamma\in\of{\Alg_M}.
}
\item[(ii)] The identity \eqref{rel} is a sufficient condition for the metric \eqref{metric} to be flat.
This means that  if $r$ satisfies \eqref{rel}  the curvature tensor vanishes on $\Alg_M$, that is
\eqq{
R(\alpha^\sharp,\beta^\sharp)\gamma\equiv \nabla_{\alpha^\sharp}\nabla_{\beta^\sharp}\gamma - \nabla_{\beta^\sharp}\nabla_{\alpha^\sharp}\gamma - \nabla_{[\alpha^\sharp,\beta^\sharp]}\gamma = 0,
}
where $\alpha,\beta,\gamma\in\of{\Alg_M}$.
\end{itemize}
\end{theorem}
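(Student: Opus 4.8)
The plan is to verify directly that the connection $\nabla$ defined by \eqref{lcc} is the Levi-Civita connection of the metric \eqref{metric}, and then to compute its curvature. For part (i), I would show two things: that $\nabla$ is metric-compatible (i.e. $\Dir_X\eta^*(\alpha,\beta) = \eta^*(\nabla_X\alpha,\beta) + \eta^*(\alpha,\nabla_X\beta)$ for $X = \text{arbitrary}$, which by nondegeneracy it suffices to check for $X=\zeta^\sharp$), and that it is torsion-free. For metric compatibility, I would expand $\Dir_{\zeta^\sharp}\eta^*(\alpha,\beta) = \Dir_{\zeta^\sharp}\tr(\la' r(\alpha)\beta + \la'\alpha\,r(\beta))$ using the Leibniz rule for the directional derivative, the invariance of $\pr$ on $\Alg_M$ (so $\Dir_{\zeta^\sharp}\la' = (\zeta^\sharp)'$), and the invariance of $r$ (so $\Dir_{\zeta^\sharp}r(\alpha) = r(\Dir_{\zeta^\sharp}\alpha)$). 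Comparing with $\eta^*(\nabla_{\zeta^\sharp}\alpha,\beta) + \eta^*(\alpha,\nabla_{\zeta^\sharp}\beta)$ — where the connection formula \eqref{lcc} contributes the correction terms $-r(\zeta)\alpha' - \zeta\,r(\alpha)'$ etc. — the difference should collapse to an expression that vanishes precisely by the identity \eqref{rel} (and its dual, using the trace-invariance \eqref{invd} to integrate by parts), together with the symmetry of $r$ under the trace pairing. This is where \eqref{rel} first enters, and it is the computational crux of part (i).

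For torsion-freeness, I would use that $\nabla_{\alpha^\sharp}\gamma - \nabla_{\gamma^\sharp}\alpha$ should equal the "bracket" term; concretely, since $\Dir$ is a flat (trivial) connection on the ambient $\Alg$, one has $\Dir_{\alpha^\sharp}\gamma - \Dir_{\gamma^\sharp}\alpha = [\alpha^\sharp,\gamma^\sharp]$ in a suitable sense after raising indices, so torsion-freeness reduces to checking that the remaining algebraic terms $-r(\alpha)\gamma' - \alpha\,r(\gamma)' + r(\gamma)\alpha' + \gamma\,r(\alpha)'$ are symmetric in the right way — but this should follow essentially by inspection once one is careful about what "torsion" means for a connection written in terms of $\alpha^\sharp$ rather than $\alpha$. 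The subtle point here is that $\nabla$ as written acts on $1$-forms, so I should either dualize \eqref{lcc} to an honest connection on vector fields and check the standard torsion condition, or argue at the level of $1$-forms that $\nabla_{\alpha^\sharp}d f$ is symmetric in an appropriate sense for functions $f$. I expect the cleanest route is to observe that $\nabla$ on $1$-forms is forced to be the dual of the Levi-Civita connection once metric compatibility is established, so that uniqueness of Levi-Civita does most of the work.

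For part (ii), I would plug \eqref{lcc} into the curvature formula $R(\alpha^\sharp,\beta^\sharp)\gamma = \nabla_{\alpha^\sharp}\nabla_{\beta^\sharp}\gamma - \nabla_{\beta^\sharp}\nabla_{\alpha^\sharp}\gamma - \nabla_{[\alpha^\sharp,\beta^\sharp]}\gamma$ and expand everything in the ambient algebra. Writing $\nabla_{\alpha^\sharp}\gamma = \Dir_{\alpha^\sharp}\gamma + Q(\alpha,\gamma)$ with $Q(\alpha,\gamma) := -r(\alpha)\gamma' - \alpha\,r(\gamma)'$, the terms purely in $\Dir$ cancel because $\Dir$ has no curvature, the mixed $\Dir$–$Q$ terms reorganize using invariance of $r$ and $\pr$ (so $\Dir_{\alpha^\sharp}$ passes through $r$ and through $\pr$), and one is left with an expression bilinear in $Q$ plus a term from $[\alpha^\sharp,\beta^\sharp]$. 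The main obstacle, and the place where I expect the real work, is showing that this leftover expression vanishes: it should reduce to a combination of instances of \eqref{rel} applied with various arguments (some involving $\alpha$, $\beta$, $\gamma$ and the primed/raised versions), and keeping track of which instance kills which term — while correctly handling the explicit $\la$-dependence hidden in $\alpha^\sharp = \la' r(\alpha) + r^*(\la'\alpha)$ when it is differentiated — is the delicate bookkeeping. I would organize this by first simplifying $\nabla_{\alpha^\sharp}\nabla_{\beta^\sharp}\gamma$ into a canonical normal form (say, grouping by whether a factor carries $r$, $r^*$, one prime, or two primes), then antisymmetrizing in $\alpha\leftrightarrow\beta$, and finally matching against \eqref{rel}; I anticipate that after using \eqref{rel} once on the "$r(a)b'$" pattern and once on its trace-dual, everything cancels, but this is the step most likely to require genuine care rather than routine manipulation.
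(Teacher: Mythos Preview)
Your overall strategy matches the paper's, but you misjudge where the work lies in two places.

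First, on torsion-freeness: you suggest it follows ``essentially by inspection,'' and your fallback is that metric compatibility alone forces $\nabla$ to be Levi-Civita by uniqueness. Neither is correct. Metric compatibility does not determine the connection without the torsion condition, so your ``cleanest route'' does not close. In the contravariant formulation used here (with $\Gamma_\alpha\gamma := \Dir_{\alpha^\sharp}\gamma - \nabla_{\alpha^\sharp}\gamma = r(\alpha)\gamma' + \alpha\,r(\gamma)'$), torsion-freeness is the symmetry $\eta^*(\alpha,\Gamma_\beta\gamma) = \eta^*(\beta,\Gamma_\alpha\gamma)$, and this genuinely requires \eqref{rel}: one rewrites \eqref{rel} as $r(\Gamma_a b) = r(a)r(b)' + \kappa\,ab'$, from which $a.\Gamma_b c$ (with $a.b := r(a)b + a\,r(b)$) is manifestly symmetric in $a,b$. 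By contrast, the metric-compatibility check is the more routine half: the identity $(a.b)' = \Gamma_a b + \Gamma_b a$ is just the Leibniz rule, and compatibility then follows from this together with the symmetry already established. So you have the roles reversed: \eqref{rel} is what makes the connection symmetric, not primarily what makes it metric.

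Second, on curvature: you anticipate ``delicate bookkeeping'' from the $\la$-dependence hidden in $\alpha^\sharp$, and expect to need the trace-dual of \eqref{rel}. This overestimates the difficulty. The key observation you are missing is that $\Gamma$ itself carries no $\la$-dependence --- it is built only from $r$ and $\pr$, both invariant on $\Alg_M$ --- so $(\Dir_{\beta^\sharp}\Gamma)(\alpha,\gamma) = 0$ identically. The curvature therefore collapses to the purely algebraic expression
\[
\Gamma_\alpha\Gamma_\beta\gamma - \Gamma_\beta\Gamma_\alpha\gamma - \Gamma_{\Gamma_\alpha\beta}\gamma + \Gamma_{\Gamma_\beta\alpha}\gamma,
\]
and this vanishes by a short direct computation using only \eqref{rel} in the rewritten form $r(\Gamma_a b) = r(a)r(b)' + \kappa\,ab'$. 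No dual identity, no $r^*$, no differentiation of $\alpha^\sharp$ is needed.
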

\begin{proof}
The tensor field \eqref{gamma} corresponding to the connection \eqref{lcc} has the form
\eqq{
\Gamma_\alpha\gamma = \Dir_{\alpha^\sharp}\gamma - \nabla_{\alpha^\sharp}\gamma =
r(\alpha)\gamma' +\alpha\, r(\gamma)'.
}
We must to show that this is the Levi-Civita connection of the metric \eqref{metric},
that is the conditions \eqref{c1} and \eqref{c2} are satisfied. 

 The identity \eqref{rel} can be written in the form:  
\eq{\label{r}
r(\Gamma_a b) = r(a) r(b)' + \kappa\, a b'.
}

Let $a.b:= r(a)b + a\, r(b)$, so that $\eta^*(\alpha, \beta) = \tr (\la'\alpha.\beta)$.
Straightforward  computation, using \eqref{r}, leads to the following two relations:
\eq{\label{r0}
a.\Gamma_b c = b.\Gamma_a c,\qquad
\bra{a.b}' = \Gamma_a b + \Gamma_b a.
}
Now, the condition \eqref{c1} is immediate as
\eq{\label{r1}
\eta^*\bra{\alpha,\Gamma_\beta \gamma} &= \tr \bra{\la' \alpha.\Gamma_\beta\gamma} =
\tr \bra{\la' \beta.\Gamma_\alpha\gamma} = \eta^*\bra{\Gamma_\alpha\gamma,\beta}.
}

From the requirement that $\pr$ and $r$ commute with directional derivatives associated with vector fields on $\Alg_M$ it follows that $\Dir_{\alpha^\sharp}\la' = (\alpha^\sharp)'$
and that $\Gamma$ is constant on $\Alg_M$, that is
 $(\Dir_{\alpha^\sharp}\Gamma)(\beta,\gamma) = 0$.
Hence, the directional derivative of the metric \eqref{metric}, by \eqref{invd} and \eqref{r0}, is
\eqq{
\bra{\Dir_{\alpha^\sharp}\eta^*}(\beta,\gamma) &= \tr\bra{(\alpha^\sharp)'\beta.\gamma} 
= - \tr\bra{\alpha^\sharp (\beta.\gamma)'}
= -\eta^*\bra{\alpha,(\beta.\gamma)'}\\
&= -\eta^*\bra{\alpha,\Gamma_\beta\gamma} - \eta^*\bra{\alpha,\Gamma_\gamma\beta}. 
}
Using \eqref{r1} we get the second condition \eqref{c2}:
\eqq{
\bra{\Dir_{\alpha^\sharp}\eta^*}(\beta,\gamma) 
= -\eta^*\bra{\beta,\Gamma_\alpha\gamma} -\eta^*\bra{\Gamma_\alpha\beta,\gamma}.
}
Hence,  indeed the formula \eqref{lcc} defines the Levi-Civita connection. 

Since $\Gamma$ is invariant on $\Alg_M$ the curvature tensor \eqref{ct} for the metric \eqref{metric}
takes the form
\eqq{
R(\alpha^\sharp,\beta^\sharp)\gamma =  
\Gamma_\alpha\Gamma_\beta\gamma - \Gamma_\beta\Gamma_\alpha\gamma 
- \Gamma_{\Gamma_\alpha\beta}\gamma + \Gamma_{\Gamma_\beta\alpha}\gamma = 0,
}
where the last equality is a consequence of \eqref{r} and straightforward computation.
Thus, the metric is flat.
\end{proof}

\subsection{General case} \label{subs}

We define the generalised contravariant metric, for $r\in\End\Alg$, by
\eq{\label{gmet}
g_\la^*(\alpha, \beta) &:= \tr\bra{\la' r(E \alpha)\beta +\la' \alpha\, r(E \beta)}\\\notag
&\ \equiv \tr\bra{E\brac{r^*(\la' \alpha)\beta + \alpha\, r^*(\la' \beta)}},
}
where $\la\in\Alg_M$ and $\alpha,\beta\in\T_\la^*\Alg_M$.  We assume that the map $E:\Alg_M\arrow \Alg$ is a differentiable
function $E=E(\la)$ as a function of a single variable $\la$, such that 
$\Dir_{X}E = \frac{d E}{d \la} X$ holds for arbitrary $X\in\vf{\Alg_M}$.\footnote{See Remark~\ref{rde}.} Then, also $E' = \frac{d E}{d \la} \la'$.\footnote{Recall that we use the notation \eqref{notation}.} For instance one could take $E=\la^n$ or more complicated function defined by means of power series.

The related canonical isomorphism 
$\left. \sharp \right |_\la:\T_\la^*\Alg_M\arrow\T_\la\Alg_M$,
such that $g^*(\alpha,\beta)  \equiv \tr\bra{\alpha^\sharp \beta}$,
has the form
\eq{\label{cis}
\left . \alpha^\sharp \right|_\la  = \la' r(E \alpha) + E\, r^*(\la'\alpha).
}
We require the kernel of $\sharp$ to be trivial on $\Alg_M$.

\begin{theorem}\label{thmg}
Assume that $r\in\End\Alg$ satisfies \eqref{rel} and it is invariant on $\Alg_M$. Then, the following statements
are valid:
\begin{itemize}
\item[(i)] The Levi-Civita connection for the metric \eqref{gmet} is given by
\eq{\label{glcc}
\nabla_{\alpha^\sharp} \gamma = \Dir_{\alpha^\sharp}\gamma - r(E\alpha)\gamma' -\alpha\, r(E\gamma)'
 	+ \frac{d E}{d \la}\, r^*(\la'\alpha)\gamma ,
}
where $ \alpha,\gamma\in\of{\Alg_M}$.
\item[(ii)] The metric \eqref{gmet} is flat, that is the curvature tensor \eqref{ct} vanish identically on~$\Alg_M$.
\end{itemize}
\end{theorem}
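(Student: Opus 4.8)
The plan is to mirror the proof of Theorem~\ref{theorem1}, treating the generalised metric \eqref{gmet} as a deformation of the linear one where $r(\cdot)$ is replaced by $r(E\,\cdot)$ together with the extra ``connection correction'' term built from $\frac{dE}{d\la}$. First I would introduce the shorthand $a.b := r(E a)b + a\,r(Eb)$ so that $g^*(\alpha,\beta) = \tr(\la'\,\alpha.\beta)$, and define the candidate Christoffel tensor
\[
\Gamma_\alpha\gamma := r(E\alpha)\gamma' + \alpha\,r(E\gamma)' - \tfrac{dE}{d\la}\,r^*(\la'\alpha)\gamma ,
\]
so that \eqref{glcc} reads $\nabla_{\alpha^\sharp}\gamma = \Dir_{\alpha^\sharp}\gamma - \Gamma_\alpha\gamma$. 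The first task is to verify that $\Gamma$ is symmetric and metric-compatible, i.e.\ the analogues of \eqref{c1}--\eqref{c2}; the key algebraic inputs are the invariance of $r$ and $\pr$ on $\Alg_M$ (so that $\Dir_{\alpha^\sharp}\la' = (\alpha^\sharp)'$, $\Dir_X E = \frac{dE}{d\la}X$, and $\Gamma$ is constant along $\Alg_M$), the invariance property \eqref{invd} of the trace under $\pr$, and — crucially — the relation \eqref{rel} applied with the substitution $a\rightsquigarrow Ea$. Because $E\in\mathrm Z(\Alg)$ acts multiplicatively and $r$ commutes with directional derivatives, \eqref{rel} yields a ``generalised'' version $r\bra{(Ea).(Eb)'_{\text{part}}}\ldots = \kappa (Ea)(Eb)'$ that, after expanding $(Eb)' = E b' + E'b$, produces exactly the extra $\frac{dE}{d\la}$ terms appearing in \eqref{glcc}.

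Concretely, I would first establish the two structural identities that replace \eqref{r0}: namely $a.\Gamma_b c = b.\Gamma_a c$ (symmetry, giving \eqref{c1} by the same one-line computation as in \eqref{r1}, using that $\tr(\la'\cdot)$ is symmetric in the dotted product) and $(a.b)' = \Gamma_a b + \Gamma_b a$ (which, combined with \eqref{invd} and the constancy of $\Gamma$, gives metric-compatibility \eqref{c2} exactly as at the end of the proof of Theorem~\ref{theorem1}). Both identities reduce, after writing $E' = \frac{dE}{d\la}\la'$ and $(E\alpha)' = E\alpha' + E'\alpha$, to \eqref{rel} with $a,b$ replaced by $E$-weighted arguments; this is where the particular shape of the correction term in $\Gamma$ is forced, and verifying it is the main (though routine) computation. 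Then part~(i) follows since the Levi-Civita connection is unique.

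For part~(ii), since $r$, $\pr$ and $E$ are all invariant on $\Alg_M$, the tensor $\Gamma$ is constant along $\Alg_M$, so the directional-derivative terms in the curvature drop out and the curvature tensor \eqref{ct} collapses to the purely algebraic expression
\[
R(\alpha^\sharp,\beta^\sharp)\gamma = \Gamma_\alpha\Gamma_\beta\gamma - \Gamma_\beta\Gamma_\alpha\gamma - \Gamma_{\Gamma_\alpha\beta}\gamma + \Gamma_{\Gamma_\beta\alpha}\gamma,
\]
exactly as in Theorem~\ref{theorem1}(ii). It then remains to check that this vanishes identically as a consequence of \eqref{rel}. I expect this to be the genuine obstacle: unlike the linear case, the presence of the non-derivative term $-\frac{dE}{d\la}r^*(\la'\alpha)\gamma$ makes the expansion considerably bushier, and one must use \eqref{rel}, its dual form \eqref{dRB}, the invariance of the trace, and identities like $\frac{d}{d\la}\big(\frac{dE}{d\la}\big) = \frac{d^2E}{d\la^2}$ (arising because $\Gamma_\alpha\beta$ itself contains $E$-dependence, so iterating $\Gamma$ differentiates $E$ again) to see the cancellation. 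A cleaner route, which I would attempt first, is to recognise \eqref{gmet} as $g^* = \eta^*$ for the \emph{same} $r$ but with the derivation $\pr$ replaced by the ``twisted'' derivation $E\pr$ acting on $\la$ — i.e.\ to check that $E\pr$ is again a derivation-like operator invariant on $\Alg_M$ and trace-compatible up to the lower-order terms that \eqref{rel} controls — so that part~(ii) becomes a corollary of Theorem~\ref{theorem1}(ii) rather than a fresh computation; if that reduction is not quite clean, I fall back on the direct curvature expansion above.
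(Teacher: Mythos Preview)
Your overall strategy follows the paper's, but there are two genuine gaps.

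First, the structural identity $(a.b)' = \Gamma_a b + \Gamma_b a$ that you rely on for metric-compatibility does \emph{not} hold in the generalised case. A direct computation gives
\[
(a\diamond b)' - \bigl(\Gamma_a b + \Gamma_b a\bigr) = \tfrac{dE}{d\la}\,r^*(\la' a)\,b + \tfrac{dE}{d\la}\,r^*(\la' b)\,a ,
\]
which is nonzero unless $E$ is constant. The paper does not attempt this identity; instead it computes $(\Dir_{\alpha^\sharp}g^*)(\beta,\gamma)$ directly, and the extra $\tfrac{dE}{d\la}$ terms that arise from differentiating $E$ inside $r(E\beta)$, $r(E\gamma)$ are precisely what match the $-\tfrac{dE}{d\la}r^*(\la'\cdot)$ piece of $\Gamma$, yielding $(\Dir_{\alpha^\sharp}g^*)(\beta,\gamma) = -g^*(\alpha,\Gamma_\beta\gamma)-g^*(\alpha,\Gamma_\gamma\beta)$. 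Combined with \eqref{c1} this gives \eqref{c2}. So part~(i) is salvageable but not by the route you sketched.

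Second, and more seriously, your claim that ``$\Gamma$ is constant along $\Alg_M$'' is false. The map $E$ is \emph{not} invariant on $\Alg_M$ in the paper's sense: it is a function of $\la$ with $\Dir_X E = \tfrac{dE}{d\la}X \neq 0$, and $\Gamma$ also contains $\la'$ explicitly. Hence $(\Dir_{\beta^\sharp}\Gamma)(\alpha,\gamma)$ is genuinely nonzero; the paper computes it explicitly as
\[
r\Bigl(\tfrac{dE}{d\la}\alpha\beta^\sharp\Bigr)\gamma' + \alpha\,r\Bigl(\tfrac{dE}{d\la}\beta^\sharp\gamma\Bigr)' - \tfrac{d^2E}{d\la^2}\,r^*(\la'\alpha)\,\beta^\sharp\gamma - \tfrac{dE}{d\la}\,r^*\bigl(\alpha(\beta^\sharp)'\bigr)\gamma ,
\]
and these terms must be carried into the full curvature formula \eqref{ct}. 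The cancellation then requires both \eqref{rel} and the dual identity \eqref{rela} (Lemma~4.3), not \eqref{dRB}. Your proposed shortcut of replacing $\pr$ by $E\pr$ and invoking Theorem~\ref{theorem1} does not work either: $E\pr$ depends on $\la$, so it fails the invariance hypothesis that Theorem~\ref{theorem1} needs, and it is not trace-invariant in the sense of \eqref{invd}.
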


\begin{lemma} 
The identity \eqref{rel}, for $r\in\End\Alg$, is equivalent  to  
\eq{\label{rela}
r^*\bra{r^*(a)b'}-r^*(a\,r(b)')  +  r^*(a)r(b)' =  \kappa\, ab'.
}
\end{lemma}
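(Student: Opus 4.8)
The plan is to derive \eqref{rela} from \eqref{rel} by a duality argument entirely analogous to the proof that \eqref{RB} is equivalent to \eqref{dRB}. First I would define two functionals
\eqq{
L_1[a,b] &:= r\bra{r(a) b'} + r\bra{a\, r(b)'} - r(a) r(b)' - \kappa\, a b',\\
L_2[a,b] &:= r^*\bra{r^*(a)b'} - r^*\bra{a\, r(b)'} + r^*(a)r(b)' - \kappa\, ab',
}
whose vanishing is exactly \eqref{rel} and \eqref{rela} respectively. The strategy is to show the "pairing identity"
\eqq{
\tr\bra{L_1[a,b]\, c} = \tr\bra{c\, L_2[a,b]}\qquad \text{for all }a,b,c\in\Alg,
}
after which the Lemma follows immediately: if $r$ satisfies \eqref{rel} then $L_1\equiv 0$, so $\tr\bra{c\, L_2[a,b]}=0$ for all $c$, and nondegeneracy of the trace form $(\cdot,\cdot)_\Alg$ forces $L_2[a,b]=0$, i.e. \eqref{rela}; the converse direction is symmetric since $(r^*)^* = r$.

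To establish the pairing identity I would expand $\tr\bra{L_1[a,b]\,c}$ term by term, using commutativity of the product in $\Alg$, the defining property of the adjoint $\tr\bra{r(x)y} = \tr\bra{x\, r^*(y)}$ (equivalently $\tr\bra{r^*(x)y}=\tr\bra{x\,r(y)}$), and the invariance \eqref{invd} of the derivation in the form $\tr\bra{x' y} = -\tr\bra{x y'}$, to move all of the $r$'s off of the first two slots and onto $c$. Concretely, $\tr\bra{r\bra{r(a)b'}c} = \tr\bra{r(a)b'\, r^*(c)} = \tr\bra{a\, r^*\bra{b'\,r^*(c)}}$ by two applications of the adjoint relation; $\tr\bra{r\bra{a\,r(b)'}c} = \tr\bra{a\,r(b)'\,r^*(c)}$, and here I would integrate the derivation by parts to shift $\pr$ onto the $b$ via \eqref{invd}, $\tr\bra{a\,r(b)'\,r^*(c)} = -\tr\bra{\bra{a\,r^*(c)}'\,r(b)} = -\tr\bra{r^*\bra{(a\,r^*(c))'}\,b}$, and then compare with the corresponding term $-\tr\bra{c\,r^*\bra{a\,r(b)'}}= -\tr\bra{a\,r(b)'\,r^*(c)}$ appearing in $\tr\bra{c\,L_2[a,b]}$ (note this middle term actually matches directly, so no integration by parts is needed for it); the term $-\tr\bra{r(a)r(b)'\,c} = -\tr\bra{a\, r^*\bra{r(b)'\,c}}$ and the comparison with $\tr\bra{c\, r^*(a)r(b)'} = \tr\bra{r^*(a)r(b)'\,c}$ requires one integration by parts in the $\pr$ acting on $b$; and the Rota-Baxter weight term $-\kappa\, ab'$ pairs against $c$ as $-\kappa\,\tr\bra{ab'c}$, which is symmetric under the swap to $\tr\bra{c\,(-\kappa\,ab')}$ since $\kappa$ is a scalar (or central) constant. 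Careful bookkeeping of which terms pick up a sign from \eqref{invd} and which are handled purely by the adjoint relation is the only subtlety.

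The main obstacle is purely combinatorial: getting the three "derivation by parts" steps and the adjoint transpositions to line up so that every term of $\tr\bra{L_1[a,b]c}$ is matched against exactly one term of $\tr\bra{c\,L_2[a,b]}$ with the correct sign — the placement of the prime (which argument $\pr$ ends up on after integration by parts) is exactly why the $r^*$-version \eqref{rela} has the particular pattern of signs and of $r$ versus $r^*$ that it does, and one has to be sure that the asymmetry between the $b'$ in \eqref{rel} (acted on by $r$ after $r$) versus \eqref{rela} (acted on by $r$ after $r^*$) is reproduced correctly. There is no conceptual difficulty: once the pairing identity is in hand, nondegeneracy of $(\cdot,\cdot)_\Alg$ closes the argument in one line, exactly as in the proof of the earlier proposition relating \eqref{RB} and \eqref{dRB}.
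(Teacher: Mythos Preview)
Your overall strategy is exactly the one the paper uses: define the two functionals, prove a pairing identity, and invoke nondegeneracy of the trace. The gap is that the pairing identity you wrote down is the wrong one. You claim
\[
\tr\bra{L_1[a,b]\,c} = \tr\bra{c\,L_2[a,b]},
\]
but since the algebra is commutative this would force $L_1[a,b]=L_2[a,b]$ as elements of $\Alg$, which is false whenever $r\neq r^*$. The identity that actually holds --- and the one the paper proves --- swaps the roles of $a$ and $c$:
\[
\tr\bra{L_1[a,b]\,c} = \tr\bra{a\,L_2[c,b]}.
\]
With this correct pairing the verification is a pure adjoint computation: each term of $\tr\bra{L_1[a,b]c}$ matches a term of $\tr\bra{a\,L_2[c,b]}$ after one or two applications of $\tr\bra{r(x)y}=\tr\bra{x\,r^*(y)}$, and \emph{no} integration by parts via \eqref{invd} is needed at all. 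For instance $\tr\bra{r(r(a)b')c}=\tr\bra{a\,r^*(r^*(c)b')}$ gives the first term of $L_2[c,b]$, and $\tr\bra{r(a\,r(b)')c}=\tr\bra{a\,r^*(c)r(b)'}$ gives the third.

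Your attempted verification also contains a concrete slip: you write $-\tr\bra{c\,r^*(a\,r(b)')}=-\tr\bra{a\,r(b)'\,r^*(c)}$, but the adjoint moves $r^*$ to $r$, so the right-hand side should be $-\tr\bra{a\,r(b)'\,r(c)}$. This is precisely why your term-by-term matching does not close and why you found yourself reaching for integrations by parts that the argument does not in fact require. Fix the pairing identity to swap $a$ and $c$ and the bookkeeping becomes trivial.
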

\begin{proof}
Define functionals in the form: 
\eq{\label{kk}
K_1\brac{a,b}&:= r\bra{r(a) b'} + r\bra{a\, r(b)'} - r(a) r(b)' - \kappa\ a b',\\\notag
K_2\brac{a,b}&:= r^*\bra{r^*(a)b'}-r^*(a\,r(b)')  +  r^*(a)r(b)' -  \kappa\, ab'.
}
Vanishing of these functionals is equivalent to the identities \eqref{rel} and \eqref{rela}, respectively.
The lemma follows from the equality
\eqq{
\tr\bra{K_1\brac{a,b}c} = \tr\bra{a K_2\brac{c,b}} 
}
and the fact that the trace form is assumed to be nondegenerate. 
\end{proof}

\begin{proof}[Proof of Theorem~\ref{thmg}]
The tensor field \eqref{gamma} related to \eqref{glcc} has the form
\eq{\label{g0}
\Gamma_\alpha\gamma = r(E\alpha)\gamma' +\alpha\, r(E\gamma)' - \frac{d E}{d \la} r^*(\la'\alpha)\gamma.
}
We must show that it satisfies \eqref{c1} and \eqref{c2}. Let
\eqq{
\alpha \diamond \beta := r(E\alpha)\beta + \alpha\, r(E \beta)\quad \Longrightarrow\quad g^*(\alpha,\beta) = \tr\bra{\la'\alpha\diamond\beta}	.
}
Using the relation \eqref{rel} and properties of the trace form one can see that 
\eq{\label{sss}
\tr \bra{\la' \alpha\diamond\Gamma_\beta\gamma} =
\tr \bra{\la' \beta\diamond\Gamma_\alpha\gamma},
}
which  is equivalent to \eqref{c1}. 

Computing the directional derivative of the metric \eqref{gmet} one finds the formula
\eqq{
		(\Dir_{\alpha^\sharp} g^*)(\beta,\gamma) &= 
 -\tr\bra{\alpha^\sharp\bra{r(E\beta)\gamma + \beta r(E \gamma)}'} 
+  \tr\Bigl (\alpha^\sharp\frac{d E}{d \la}\bra{\beta r^*(\la'\gamma)+ r(\la'\beta)\gamma}\Bigr )\\
&= -g^*\bra{\alpha,\Gamma_\beta\gamma} - g^*\bra{\alpha,\Gamma_\gamma\beta}.
}
Now, the second condition \eqref{c2} is a straightforward consequence of \eqref{c1} and \eqref{sss}. 

To calculate the curvature tensor \eqref{ct} first we need the directional derivative of  \eqref{g0}, which is
given by
\eqq{
	\bra{\Dir_{\beta^\sharp}\Gamma}\bra{\alpha,\gamma} = 
r\Bigl (\frac{d E}{d \la}\alpha\beta^\sharp\Bigr )\gamma' + \alpha\, r\Bigl (\frac{d E}{d \la}\beta^\sharp\gamma\Bigr )'
- \frac{d^2E}{d\la^2}r^*(\la'\alpha)\beta^\sharp\gamma 
- \frac{d E}{d\la}r^*\bra{\alpha(\beta^\sharp)'}\gamma,
}
where $\beta^\sharp  = \la' r(E \beta) + E\, r^*(\la'\beta)\in\vf{\Alg_M}$.
Substituting the above formula and \eqref{g0} to \eqref{ct} one can show that the curvature 
vanishes. One must use the identities \eqref{rel} and \eqref{rela}. The calculation is straightforward, however 
slightly tedious, so we omit it.
\end{proof}

\begin{remark}
Note that the generalized metric \eqref{gmet} and the $\Gamma$ tensors \eqref{g0}
are explicitly  linear in $E$. Hence, any two contravariant metrics \eqref{gmet} defined by different $E$
are compatible and their linear composition generates the corresponding contravariant flat pencil.  
Setting $E=1$ in \eqref{gmet}, where $1$ is the unity of the algebra $\Alg$, we obtain the linear metric \eqref{metric}. 
\end{remark}

\begin{proposition}
For arbitrary $X\in\vf{\Alg_M}$ and $\gamma\in\of{\Alg_M}$ the following formula holds:
\eq{\label{pf}
\nabla_X\gamma^\sharp\equiv \bra{\nabla_X\gamma}^\sharp = \pobr{\la, r(E\gamma)} + E\, r^*\bra{\pobr{\la,\gamma}},
}
where $\nabla$ is the Levi-Civita connection \eqref{glcc} for the metric \eqref{gmet}, $\sharp$
is the canonical isomorphism \eqref{cis} and
\eq{\label{pobr}
	\pobr{a,b}:= a' \Dir_Xb  - \Dir_Xa\, b'.
}
\end{proposition}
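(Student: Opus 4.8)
The plan is to rewrite both sides of \eqref{pf} as elements of $\Alg$ and to match them by means of the identity \eqref{rel} together with its equivalent dual form \eqref{rela}; throughout I use the standing hypotheses of Theorem~\ref{thmg} (so $r\in\End\Alg$ is invariant on $\Alg_M$ and satisfies \eqref{rel}, and $E=E(\la)$ is a function of $\la$ with $\Dir_X E=\frac{dE}{d\la}X$). Since the canonical isomorphism $\sharp$ of \eqref{cis} is invertible at every point of $\Alg_M$, I may fix $\alpha\in\of{\Alg_M}$ with $X=\alpha^\sharp$. Because $\nabla$ is the Levi-Civita connection of the metric \eqref{gmet} it is metric-compatible, i.e.\ $\sharp$ is parallel, so $\nabla_X\gamma^\sharp=(\nabla_X\gamma)^\sharp=\sharp(\nabla_X\gamma)$, and it suffices to identify $\sharp(\nabla_X\gamma)$ with the right-hand side of \eqref{pf}.

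For the right-hand side of \eqref{pf} I would use $\Dir_X\la=X$ (Remark~\ref{rde}), $\Dir_X E=\frac{dE}{d\la}X$, the Leibniz rule for the directional derivative, and the invariance of $r$ (so that $\Dir_X$ commutes with $r$), obtaining from \eqref{pobr}
\[
\pobr{\la,r(E\gamma)}=\la'\,r\!\left(\tfrac{dE}{d\la}X\gamma\right)+\la'\,r\!\left(E\,\Dir_X\gamma\right)-X\bigl(r(E\gamma)\bigr)',\qquad \pobr{\la,\gamma}=\la'\,\Dir_X\gamma-X\gamma'.
\]
Recalling $\sharp\beta=\la' r(E\beta)+E\,r^*(\la'\beta)$ from \eqref{cis}, the two terms containing $\Dir_X\gamma$ assemble into $\sharp(\Dir_X\gamma)$, whence
\[
\pobr{\la,r(E\gamma)}+E\,r^*\bigl(\pobr{\la,\gamma}\bigr)=\sharp(\Dir_X\gamma)+\la'\,r\!\left(\tfrac{dE}{d\la}X\gamma\right)-X\bigl(r(E\gamma)\bigr)'-E\,r^*(X\gamma').
\]

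For the left-hand side I would write $\nabla_X\gamma=\Dir_X\gamma-\Gamma_\alpha\gamma$ with $\Gamma_\alpha\gamma$ as in \eqref{g0} and apply $\sharp$; the summand $\sharp(\Dir_X\gamma)$ cancels against the one above, so the whole proposition reduces to the single identity
\[
\sharp(\Gamma_\alpha\gamma)=X\bigl(r(E\gamma)\bigr)'-\la'\,r\!\left(\tfrac{dE}{d\la}X\gamma\right)+E\,r^*(X\gamma').
\]
Into this I substitute $X=\la' r(E\alpha)+E\,r^*(\la'\alpha)$, the formula \eqref{g0} for $\Gamma_\alpha\gamma$, and $\sharp\beta=\la' r(E\beta)+E\,r^*(\la'\beta)$. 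The terms $E\,r^*\bigl(\la' r(E\alpha)\gamma'\bigr)$ and $\la' r\!\bigl(\tfrac{dE}{d\la}E\,r^*(\la'\alpha)\gamma\bigr)$ occur on both sides and drop out. Of the remaining eight terms, the four not involving $r^*$ are accounted for by multiplying \eqref{rel} (with $a=E\alpha$, $b=E\gamma$) through by $\la'$, and the four involving $E\,r^*$ by multiplying \eqref{rela} (with $a=\la'\alpha$, $b=E\gamma$) through by $E$; in both cases one first uses $(E\gamma)'=E'\gamma+E\gamma'$ and $E'=\frac{dE}{d\la}\la'$ to recognise the individual monomials. Each of the two applications leaves a residual $\pm\kappa\la'\bigl(EE'\alpha\gamma+E^{2}\alpha\gamma'\bigr)$, and the two residuals cancel by the commutativity of $\Alg$, which completes the verification.

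I expect the only real difficulty to be the bookkeeping in this last step: there are about half a dozen monomials on each side of the reduced identity, and one must keep the rewritings $E'=\frac{dE}{d\la}\la'$ and $(E\gamma)'=E'\gamma+E\gamma'$ together with their signs aligned so that the two $\kappa$-residuals annihilate one another. Conceptually nothing is needed beyond \eqref{rel}--\eqref{rela} and the invariance of $r$, so I would simply carry the computation out term by term.
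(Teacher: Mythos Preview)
Your proposal is correct and follows essentially the same route as the paper: both arguments reduce the claim to the identity $(\Gamma_\alpha\gamma)^\sharp = \alpha^\sharp\, r(E\gamma)' + E\,r^*(\alpha^\sharp\gamma') - \la'\,r\bigl(\tfrac{dE}{d\la}\alpha^\sharp\gamma\bigr)$ and verify it by substituting $\alpha^\sharp=\la' r(E\alpha)+E\,r^*(\la'\alpha)$ and invoking \eqref{rel} and \eqref{rela}. The paper establishes this identity first and then recognises the bracket, whereas you expand the bracket first and arrive at the same identity; your explicit bookkeeping of the $\kappa$-residuals is a welcome addition that the paper leaves implicit.
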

\begin{proof}
Applying the canonical isomorphism \eqref{cis} to \eqref{g0} and using the relations \eqref{rel} and \eqref{rela} 
one finds that
\eqq{
\bra{\Gamma_\alpha\gamma}^\sharp = \alpha^\sharp r(E\gamma)' + E r^*( \alpha^\sharp\gamma')
- \la' r\Bigl ( \frac{d E}{d \la}\alpha^\sharp \gamma\Bigr ),
}
where $\alpha^\sharp  = \la' r(E \alpha) + E\, r^*(\la'\alpha)$.
Hence, 
\eqq{
\bra{\nabla_{\alpha^\sharp}\gamma}^\sharp &= \bra{\Dir_{\alpha^\sharp}\gamma - \Gamma_\alpha\gamma}^\sharp\\
&= \la'\Dir_{\alpha^\sharp}r(E\gamma) - \alpha^\sharp r(E\gamma)' - \la'r^*(\la'\Dir_{\alpha^\sharp}\gamma - \alpha^\sharp\gamma').
}
Now, setting $X\equiv\alpha^\sharp\in\vf{\Alg_M}$ and 
defining $\pobr{,}:=\pr\wedge\Dir_X$ we get the formula \eqref{pf}.
Notice that $\Dir_X\la = X$. 
\end{proof}

The formula \eqref{pobr} has the form of a Poisson bracket and, in fact, for fixed $X\in\vf{\Alg_M}$ it defines a structure of a Poisson algebra in the space of differentiable maps $\Alg_M\arrow\Alg$. In this case the Poisson bracket is well-defined since it is assumed that the derivation~$\pr$ commutes with the directional derivatives with respect to vector fields on $\Alg_M$. 

\begin{proposition}
The condition \eqref{rel} is a sufficient condition for $r\in\End{\Alg}$, which is invariant on $\Alg_M$, to be a classical $r$-matrix
with respect to the Poisson bracket  \eqref{pobr} (see Appendix~\ref{crm}), which means that
\eqq{
r\bra{\pobr{r(f),g}} + r\bra{\pobr{f,r(g)}} - \pobr{r(f),r(g)} = \kappa \pobr{f,g}
}
holds. Here $f,g:\Alg_M\arrow\Alg$ are differentiable maps. 
\end{proposition}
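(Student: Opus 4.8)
The plan is to verify the asserted modified Yang--Baxter type relation by a direct bilinear computation that, after regrouping, reduces to two applications of the identity \eqref{rel}. Write $D:=\Dir_X$ for the fixed directional derivative. Since $r$ is invariant on $\Alg_M$ it commutes with $D$, that is, $D\bra{r(h)}=r(Dh)$ for every differentiable $h:\Alg_M\arrow\Alg$; likewise $\pr$ commutes with $D$ by the standing assumption that $\pr$ be invariant on $\Alg_M$. Hence the bracket \eqref{pobr} may be written $\pobr{h,k}=h'\,Dk-Dh\,k'$, and for differentiable maps $f,g:\Alg_M\arrow\Alg$ one gets $\pobr{r(f),g}=(r(f))'\,Dg-r(Df)\,g'$, $\pobr{f,r(g)}=f'\,r(Dg)-Df\,(r(g))'$ and $\pobr{r(f),r(g)}=(r(f))'\,r(Dg)-r(Df)\,(r(g))'$.

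First I would substitute these three expressions into the left-hand side $r\bra{\pobr{r(f),g}}+r\bra{\pobr{f,r(g)}}-\pobr{r(f),r(g)}$ of the claimed identity, expand using commutativity of $\Alg$, and sort the six resulting terms according to whether they contain the factor $Df$ or the factor $Dg$. The $Df$-part is $-r\bra{r(Df)\,g'}-r\bra{Df\,(r(g))'}+r(Df)\,(r(g))'$, which is minus the left-hand side of \eqref{rel} with $a=Df$, $b=g$ (here $(r(g))'=\pr\bra{r(g)}$ plays the role of $r(b)'$), and therefore equals $-\kappa\,(Df)\,g'$. The $Dg$-part, after commuting factors in $\Alg$ into the canonical order, reads $r\bra{r(Dg)\,f'}+r\bra{Dg\,(r(f))'}-r(Dg)\,(r(f))'$, which is the left-hand side of \eqref{rel} with $a=Dg$, $b=f$, and hence equals $\kappa\,(Dg)\,f'$.

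Adding the two parts yields $\kappa\bra{(Dg)\,f'-(Df)\,g'}=\kappa\,\pobr{f,g}$, which is precisely the classical $r$-matrix relation $r\bra{\pobr{r(f),g}}+r\bra{\pobr{f,r(g)}}-\pobr{r(f),r(g)}=\kappa\,\pobr{f,g}$ asserted in the proposition.

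I do not expect a genuine obstacle here: once the bracket has been rewritten using that $D$ commutes with both $r$ and $\pr$, the identity follows from a short calculation. The only point that requires care is that \eqref{rel} is \emph{not} symmetric in $a$ and $b$ --- one argument enters through $r(a)$, the other through the derivation in the combinations $b'$ and $r(b)'$ --- so the two groups of terms must be matched against \eqref{rel} with the arguments assigned to the correct slots, and it is commutativity of $\Alg$ that allows each group to be brought into the exact shape of \eqref{rel}.
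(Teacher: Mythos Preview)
Your proof is correct and is exactly the straightforward verification the paper alludes to (the paper gives no details beyond ``straightforward using the relation \eqref{rel} and the invariance of $r$''). The only thing to note is that you have spelled out precisely the splitting into the $Df$-- and $Dg$--parts and the two applications of \eqref{rel} with $(a,b)=(Df,g)$ and $(a,b)=(Dg,f)$, which is indeed all that is needed.
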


The proof is straightforward using the relation \eqref{rel} and the invariance of $r$.

\begin{remark}\label{rem}
The formula \eqref{pf} can be useful in finding flat coordinates of the corresponding metric \eqref{gmet}.
To find them we can look for linearly independent flat (covariantly constant) $1$-forms~$\gamma^i$,
that is $\nabla \gamma^i = 0$. Using the relation \eqref{pf} it is sufficient to postulate
that
\eq{\label{flatc}
\pobr{\gamma^i,\la} = 0\qquad\text{and}\qquad \pobr{r(E\gamma^i),\la} = 0.
} 
Then, locally by Poincar\'e Lemma $\gamma^i = dt^i$ and all $t^i$ constitute flat coordinates.
Particularly, in this way we can obtain, taking $E= 1$, flat coordinates for the metric \eqref{metric}.
\end{remark}

\begin{remark}
For $E=1$, where $1$ is the unity of the algebra $\Alg$, the metric \eqref{gmet} reduces to the linear metric \eqref{metric}.
The most natural choice is $E=\la^n$ for $n\me 0$. (One can also imagine more 
nonstandard choices of $E$.)
The case $E=\la^n$ through the two above propositions, corresponds to the Lie-Poisson 
brackets from Theorem~\ref{liep}. Compare the formula \eqref{pf} with \eqref{ptn}. 
Under appropriate assumptions, the above formalism
gives alternative proof to Theorem~\ref{liep} and it can also be considered as its generalization. 
\end{remark}

\begin{remark} 
Consider a loop algebra $\mathcal{L}(\Alg):=\{\gamma:\Si\arrow \Alg\,\}$ 
consisting of loops in $\Alg$ such that the derivation $\pr$ is invariant along these loops.
This means that the tangent vector fields to loops must commute with $\pr$.
Then, taking $X\equiv \pr_x$, where $x\in\Si$, the formula \eqref{pobr}  defines a Poisson bracket on $\mathcal{L}(\Alg)$
and the classical $r$-matrix scheme could be applied to $\mathcal{L}(\Alg)$, see Appendix~\ref{crm}. This provides the close connection between the scheme for the construction of covariant metrics
presented in this section and the $r$-matrix approach for construction of Poisson algebras from Appendix~\ref{crm}.
\end{remark}

\subsection{Frobenius structure}

The linear metric \eqref{metric} can be written in the form
\eqq{
\eta_\la^*(\alpha, \beta) = \tr\bra{r^*(\la'\alpha)\beta + \alpha\, r^*(\la'\beta)},
}
which suggests that there could be defined in the cotangent bundle $\T^*\Alg_M$ invariant multiplication
acquiring  the form
\eq{\label{rmult}
		\alpha\circ\beta = r^*(\la'\alpha)\beta + \alpha\, r^*(\la'\beta)
}
such that $\eta^*(\alpha, \beta) = \tr\bra{\alpha\circ\beta}$. 

According to Proposition~\ref{prop1},  if $r^*$ satisfies the Rota-Baxter identity \eqref{RB}, then this multiplication is associative. In this case the related co-unity $1$-form would be
\eq{\label{co}
	\eps\bra{\alpha^\sharp} \equiv \dual{\eps,\alpha^\sharp} = \tr{\alpha}.
}
As we have seen in Section~\ref{pre}, on a Frobenius manifold the counity is necessarily closed. 

\begin{proposition}\label{prop}
Assume that $r\in\End{\Alg}$ satisfies \eqref{rel} so that Theorem~\ref{theorem1} holds. Then, the condition 
\eq{\label{sc}
r^*(\gamma') + r(\gamma)' = 0,
} 
where $\gamma\in\of{\Alg_M}$ is arbitrary, is a sufficient condition for vanishing of $d\eps$, where $\eps$ is the 
$1$-form defined by \eqref{co}.
\end{proposition}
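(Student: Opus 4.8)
The claim is that $d\eps = 0$, where $\eps$ is defined by $\dual{\eps,\alpha^\sharp} = \tr\alpha$, i.e. $\eps$ is the $1$-form whose value on the vector field $\alpha^\sharp$ equals $\tr\alpha$. The natural strategy is to compute $d\eps$ via the intrinsic formula $d\eps(X,Y) = \Dir_X\dual{\eps,Y} - \Dir_Y\dual{\eps,X} - \dual{\eps,[X,Y]}$ applied to $X=\alpha^\sharp$, $Y=\beta^\sharp$ for arbitrary $\alpha,\beta\in\of{\Alg_M}$, and to show this vanishes using the relation \eqref{sc} together with the structure of the Levi-Civita connection \eqref{lcc} from Theorem~\ref{theorem1}. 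Since the metric is flat, $\nabla$ has zero torsion, so $[\alpha^\sharp,\beta^\sharp] = \nabla_{\alpha^\sharp}\beta^\sharp - \nabla_{\beta^\sharp}\alpha^\sharp$, and moreover $d\eps$ can be rewritten using the connection as $d\eps(X,Y) = \dual{\nabla_X\eps,Y} - \dual{\nabla_Y\eps,X}$. Thus it suffices to show $\nabla_X\eps$ is a symmetric bilinear expression in $X$ and the argument, or more directly that $\nabla_{\alpha^\sharp}\eps$ paired with $\beta^\sharp$ is symmetric under $\alpha\leftrightarrow\beta$.

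\textbf{Key steps.} First I would express $\eps$ itself as a section $\Alg_M\to\Alg$; from \eqref{co} and the pairing \eqref{dualp}, $\eps$ is (a representative of) the unity $1$-form, and the identity $\tr\alpha = \tr(1\cdot\alpha)$ suggests the representative $\eps\equiv 1$, the algebra unit (this is consistent with \eqref{co} since $\dual{1,\alpha^\sharp} = \tr(\alpha^\sharp\cdot 1) = \tr\alpha^\sharp$ — here one must check $\tr\alpha^\sharp = \tr\alpha$ using \eqref{idcm} and \eqref{invd}, namely $\tr(\la' r(\alpha) + r^*(\la'\alpha)) = \tr(\la' r(\alpha)) + \tr(\la' r(\alpha)) $... actually one checks $\tr r^*(\la'\alpha) = \tr(\la' r(\alpha))$ and then relates to $\tr\alpha$ via the derivation property; this needs the specific form, so more carefully one should keep $\eps$ abstract). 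Second, compute $\nabla_{\alpha^\sharp}\eps$ using \eqref{lcc}: treating $\eps$ as the constant map $\la\mapsto 1$, we get $\Dir_{\alpha^\sharp}\eps = 0$ and $\eps' = 1' = 0$, so $\nabla_{\alpha^\sharp}\eps = -r(\alpha)\cdot 1' - \alpha\, r(\eps)' = -\alpha\, r(1)'$. Third, pair with $\beta^\sharp$: $\dual{\nabla_{\alpha^\sharp}\eps,\beta^\sharp} = \tr(\beta^\sharp\cdot(-\alpha\, r(1)')) = -\tr(\beta^\sharp\alpha\, r(1)')$, which is manifestly symmetric in $\alpha\leftrightarrow\beta$, hence $d\eps(\alpha^\sharp,\beta^\sharp) = 0$. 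In this reading the hypothesis \eqref{sc} would enter when one has to justify that the chosen representative of $\eps$ is indeed constant, or equivalently that $\nabla\eps$ is well-defined modulo $(\T_\la\Alg_M)^\perp$; more likely \eqref{sc} is needed precisely to identify $\eps$ with a constant (applying \eqref{sc} with $\gamma=\eps$ forces $r(\eps)'$ to pair suitably). I would therefore first pin down, using \eqref{co} and \eqref{sc}, that $\eps$ admits a closed representative, and only then run the connection computation.

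\textbf{Main obstacle.} The delicate point is the identification of the correct representative of the co-unity $1$-form inside $\Alg$ and the bookkeeping of the quotient $\T_\la^*\Alg_M \cong \Alg/(\T_\la\Alg_M)^\perp$: $d\eps$ is an honest tensor on $\Alg_M$, but $\eps$ lives only modulo $(\T_\la\Alg_M)^\perp$, so one must check that the computation of $\nabla\eps$ and of $d\eps$ does not depend on the choice of representative, and this is exactly where \eqref{sc} should be doing its work. I expect the argument to reduce to: writing $d\eps(\alpha^\sharp,\beta^\sharp)$ via the Levi-Civita connection as $\dual{\nabla_{\alpha^\sharp}\eps - \Dir_{\alpha^\sharp}\eps,\beta^\sharp}$ minus the same with $\alpha,\beta$ swapped — wait, more precisely $d\eps(X,Y) = \dual{\nabla_X\eps,Y} - \dual{\nabla_Y\eps,X}$ because $\nabla$ is torsion-free — and then using \eqref{lcc} to see that $\dual{\nabla_{\alpha^\sharp}\eps,\beta^\sharp} = \Dir_{\alpha^\sharp}\dual{\eps,\beta^\sharp} - \tr\bigl((r(\alpha)\eps' + \alpha\, r(\eps)')\beta^\sharp\bigr)$; the first term is symmetrized away against its swap plus the $[\alpha^\sharp,\beta^\sharp]$ correction, leaving $-\tr\bigl(r(\alpha)\eps'\beta^\sharp + \alpha\, r(\eps)'\beta^\sharp\bigr) + (\alpha\leftrightarrow\beta)$, and one rewrites the $r(\alpha)\eps'\beta^\sharp$ term by pushing $r$ onto $\beta^\sharp$ via $r^*$ and invoking \eqref{sc} (i.e. $r^*(\cdot') = -r(\cdot)'$) to cancel against the $\alpha\, r(\eps)'\beta^\sharp$ contribution after swapping. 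I would set up the antisymmetrized pairing explicitly, use \eqref{invd} to move the derivation off $\eps$, apply \eqref{sc}, and watch the symmetric remainder cancel.
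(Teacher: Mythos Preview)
Your strategy of computing $d\eps(\alpha^\sharp,\beta^\sharp)$ via the torsion-free connection is fine in principle, but the implementation has a real gap: you attempt to apply the explicit formula \eqref{lcc} to the $1$-form $\eps$ itself, and for that you need an explicit representative of $\eps$ as a map $\Alg_M\to\Alg$. Your guess $\eps\equiv 1$ is not correct: with that choice one would need $\tr(\alpha^\sharp)=\tr\alpha$, i.e.\ $\tr\bigl(\la' r(\alpha)+r^*(\la'\alpha)\bigr)=\tr\alpha$, and there is no reason for this to hold (nor does \eqref{sc} force it). All of your subsequent manipulations involving $\eps'$, $r(\eps)'$, $r(1)'$ inherit this unjustified identification, and the ``symmetry'' you observe in $-\tr(\beta^\sharp\alpha\,r(1)')$ is not a symmetry at all, since $\beta^\sharp$ depends on $\beta$ nontrivially.

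The paper avoids this entirely by never writing $\eps$ as an element of $\Alg$. It uses only the defining relation $\eps(\gamma^\sharp)=\tr\gamma$. From the Cartan formula $d\eps(X,Y)=X(\eps(Y))-Y(\eps(X))-\eps([X,Y])$, together with $[\alpha^\sharp,\beta^\sharp]=(\nabla_{\alpha^\sharp}\beta-\nabla_{\beta^\sharp}\alpha)^\sharp$ (torsion-freeness), one gets directly
\[
d\eps(\alpha^\sharp,\beta^\sharp)=\tr\bigl(\Dir_{\alpha^\sharp}\beta-\nabla_{\alpha^\sharp}\beta\bigr)-\tr\bigl(\Dir_{\beta^\sharp}\alpha-\nabla_{\beta^\sharp}\alpha\bigr)=\tr(\Gamma_\alpha\beta-\Gamma_\beta\alpha).
\]
Now $\Gamma_\alpha\beta=r(\alpha)\beta'+\alpha\,r(\beta)'$ is explicit, and a short computation using $\tr(a'b)=-\tr(ab')$ and the definition of $r^*$ collapses the antisymmetrization to $2\tr\bigl(\alpha(r^*(\beta')+r(\beta)')\bigr)$, which vanishes by \eqref{sc}. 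The moral: put the connection on $\beta$ (where you have an explicit formula) and use the defining property of $\eps$ to evaluate, rather than trying to differentiate $\eps$ directly.
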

\begin{proof}
The exterior derivative of a $1$-form $\eps$ is
\eqq{
d\eps\bra{X,Y} = X\bra{\eps(Y)} - Y\bra{\eps(X)} - \eps \bra{ \brac{X,Y}},
}
where $X,Y$ are vector fields on $\Alg_M$. Setting $X=\alpha^\sharp$ and $Y=\beta^\sharp$ and using \eqref{lieb},
it follows that for the $1$-form \eqref{co}:
\eqq{
d\eps(\alpha^\sharp,\beta^\sharp) &= 
\tr\bra{\Dir_{\alpha^\sharp}\beta - \Dir_{\beta^\sharp}\alpha -\nabla_{\alpha^\sharp}\beta + \nabla_{\beta^\sharp}\alpha}
= \tr\bra{\Gamma_\alpha\beta - \Gamma_\beta\alpha}\\
&= 2\tr\bra{\alpha \bra{r^*(\beta') + r(\beta)'}}.
}
Hence, the assertion follows.
\end{proof}

\begin{remark}
In fact, one can show that if $r$ satisfies the relation \eqref{rel},
the condition \eqref{sc}, and the derivation $\pr$ is {\it onto} ($\im\pr = \Alg$), then $r^*$
satisfies the Rota-Baxter identity \eqref{RB} and the multiplication \eqref{rmult} is associative. 
There arise question when one can define on $\Alg_M$ structure of a Frobenius
manifold using some endomorphism that satisfies the Rota-Baxter identity \eqref{RB}.
\end{remark}

\section{Construction of pre-Frobenius manifolds}\label{Fman}

The preliminary setting is the same as in the previous section. Let $\Alg$ be a commutative associative unital algebra equipped with a nondegenerate trace form $\tr:\Alg\arrow \Km$
and invariant derivation $\pr\in\Der\Alg$, that is $\tr {a'} = 0$, 
where $a'\equiv \pr a$. Consider the subspace $\Alg_M\subset  \Alg$, induced by an underlying manifold. Then, the tangent and cotangent spaces can be identified with appropriate vector subspaces of $\Alg$ through the trace form.
We also assume that the derivation $\pr$ is invariant on $\Alg_M$, that is it commutes with
all directional derivatives with respect to vector fields on $\Alg_M$.

Let be given some linear map $\ell\in\End\Alg$, which satisfies the Rota-Baxter identity \eqref{RB}, that is 
\eq{\label{RBi}
\ell\bra{\ell(a) b} + \ell\bra{a \ell(b)} - \ell(a)\ell(b) = \kappa\, ab,
}
for some $\kappa\in\Km$. Then, for fixed $\la\in\Alg$ we define the second commutative multiplication in $\Alg$ by the formula
\eq{\label{fmult}
a\circ b := \ell(\la' a)b + a\ell(\la'b).
}
By Proposition~\ref{prop1} this multiplication is associative.

\subsection{Structure of Frobenius algebra}

We define the following contravariant metric at a point  $\la\in\Alg_M$ by the formula
\eq{\label{fmetric}
\eta_\la^*(\alpha, \beta) := \tr\bra{\alpha\circ\beta}\equiv
\tr\bra{\ell(\la'\alpha)\beta + \alpha\, \ell(\la'\beta)}\qquad \alpha,\beta\in\T_\la^*\Alg_M
}
and require that on $\Alg_M$ the metric is nondegenerate. The related canonical isomorphism  
$\left. \sharp \right |_\la:\T_\la^*\Alg_M\arrow\T_\la\Alg_M$ is
\eq{\label{iso}
\left. \alpha^\sharp \right |_\la  = \ell(\la'\alpha) + \la' \ell^*(\alpha),
}
where $\ell^*$ is the adjoint of $\ell$ with respect to the trace form. 

If we suppose that the multiplication \eqref{fmult} restricts properly to $\T_\la^*\Alg_M$, then the formula \eqref{fmult}
defines  in the cotangent bundle associative and commutative multiplication, such that
\eq{\label{fm2}
\circ: \T_\la^*\Alg_M\times \T_\la^*\Alg_M\arrow  \T_\la^*\Alg_M\qquad
(\alpha,\beta)\map \circ (\alpha,\beta)\equiv \alpha\circ\beta.
} 
This multiplication is invariant with respect to the metric \eqref{fmetric}. Hence, the contravariant metric \eqref{fmetric} and the multiplication \eqref{fmult}, if it is unital, define the structure of a Frobenius algebra in the cotangent bundle $\T^*\Alg_M$.  

In practice the multiplication \eqref{fmult} does not have to naturally restrict  to $\T_\la^*\Alg_M$. Let us remind that we have the identification $\T_\la^*\Alg_M\cong \Alg/(\T_\la\Alg_M)^\perp$, where $(\T_\la\Alg_M)^\perp$ is the orthogonal complement \eqref{oc}. 
If $(\T_\la\Alg_M)^\perp$ is an ideal in $\Alg$ with respect to the multiplication \eqref{fmult}, then we can define multiplication in $\T_\la^*\Alg_M$ by means of the quotient algebra $\Alg/(\T_\la\Alg_M)^\perp$.
That is, we must require that 
\eq{\label{rq}
(\T_\la\Alg_M)^\perp\circ\Alg\subset (\T_\la\Alg_M)^\perp.
}
Still, we need the metric \eqref{fmetric} to be compatible with the quotient structure. 
Let us extend the formula \eqref{fmetric} to the whole algebra $\Alg$, that is  
\eqref{fmetric} for $\alpha,\beta\in\Alg$ defines bilinear form on $\Alg$.
The compatibility of the bilinear form \eqref{fmetric} with the quotient structure $\Alg/(\T_\la\Alg_M)^\perp$ imposes the following necessary condition
\eq{\label{req2}
	\eta_\la^*\bra{(\T_\la\Alg_M)^\perp,\Alg} = \tr\bra{(\T_\la\Alg_M)^\perp\circ\Alg} = 0.
}
In this case the metric \eqref{fmetric} is invariant with respect to the quotient structure $\Alg/(\T_\la\Alg_M)^\perp$.
Hence, we have the following proposition.

\begin{proposition}
Let the orthogonal complement $(\T_\la\Alg_M)^\perp$, for each $\la\in\Alg_M$,
be an ideal in the algebra $\Alg$ with respect to the multiplication \eqref{fmult}
such that \eqref{req2} holds.
Then, the multiplication \eqref{fmult} and the metric \eqref{fmetric} define the structure of a (nonunital) Frobenius algebra in the cotangent bundle $\T^*\Alg_M$.
\end{proposition}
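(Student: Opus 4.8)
The plan is to verify three things in turn: that the multiplication $\circ$ descends to a well-defined product on the quotient $\T_\la^*\Alg_M \cong \Alg/(\T_\la\Alg_M)^\perp$; that this quotient product is associative and commutative; and that the metric $\eta^*_\la$ of \eqref{fmetric} is well-defined on the quotient and invariant with respect to the descended multiplication. The first point is immediate from the hypothesis that $(\T_\la\Alg_M)^\perp$ is an ideal for $\circ$: if $\alpha - \tilde\alpha \in (\T_\la\Alg_M)^\perp$ then $\alpha\circ\beta - \tilde\alpha\circ\beta = (\alpha-\tilde\alpha)\circ\beta \in (\T_\la\Alg_M)^\perp$ by \eqref{rq}, and symmetrically in the second slot, so the class of $\alpha\circ\beta$ depends only on the classes of $\alpha$ and $\beta$. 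Associativity and commutativity are then inherited directly: commutativity is manifest from the symmetric form of \eqref{fmult}, and associativity on $\Alg$ holds by Proposition~\ref{prop1} since $\ell$ satisfies \eqref{RBi} — here one should note that the map $a \mapsto \ell(\la' a)$ is exactly of the form covered by part (ii) of Proposition~\ref{prop1} with $\delta = \la'$ (more precisely, $\circ$ coincides with $\circ_{\tilde\ell}$ for $\tilde\ell(\cdot) = \ell(\la'\,\cdot)$, which satisfies the Rota-Baxter identity with weight $\kappa(\la')^2 \in \mathrm{Z}(\Alg)$), so associativity descends along the quotient projection.

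Next I would treat the metric. Extending \eqref{fmetric} to all of $\Alg \times \Alg$ by the same formula, one checks it is symmetric using the invariance \eqref{invd} of the trace under $\pr$ together with commutativity of $\Alg$: $\tr(\ell(\la'\alpha)\beta + \alpha\,\ell(\la'\beta))$ is visibly symmetric in $\alpha,\beta$ up to relabelling, so in fact $\eta^*_\la(\alpha,\beta) = \tr(\alpha\circ\beta) = \tr(\beta\circ\alpha) = \eta^*_\la(\beta,\alpha)$ needs only commutativity of $\Alg$. That it is well-defined on the quotient is precisely the hypothesis \eqref{req2}: if $\alpha \in (\T_\la\Alg_M)^\perp$ then $\eta^*_\la(\alpha,\beta) = \tr(\alpha\circ\beta) = 0$ for every $\beta\in\Alg$, so $\eta^*_\la$ factors through $\Alg/(\T_\la\Alg_M)^\perp$ in each argument. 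Invariance of the descended metric under the descended multiplication follows from the associativity of $\circ$ on $\Alg$: $\eta^*_\la(\alpha\circ\beta,\gamma) = \tr((\alpha\circ\beta)\circ\gamma) = \tr(\alpha\circ(\beta\circ\gamma)) = \eta^*_\la(\alpha,\beta\circ\gamma)$, and this identity is compatible with passing to classes by what was just established.

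Assembling these, the conclusion is that $(\T_\la^*\Alg_M, \circ, \eta^*_\la)$ is a commutative associative algebra with a nondegenerate invariant symmetric bilinear form varying smoothly in $\la$ — i.e. a Frobenius algebra structure on the cotangent bundle, nonunital because nothing in the hypotheses guarantees that $\circ$ has a unit in $\T_\la^*\Alg_M$ (the element whose image would be the unit, namely something mapping under $\sharp$ to a distinguished vector field, need not lie in the relevant subspace after taking the quotient). Smoothness in $\la$ is inherited from the smoothness of the embedding $\Alg_M \hookrightarrow \Alg$, of the derivation $\pr$ restricted to $\Alg_M$ (which is invariant on $\Alg_M$ by assumption), and of $\ell$, together with the chosen smooth family of representatives for the cotangent spaces discussed before Remark~\ref{rde}.

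The genuinely delicate point is not any single computation — all of the above are one-line verifications — but rather keeping the bookkeeping straight between the ambient algebra $\Alg$ and the quotient: every operation ($\circ$, $\eta^*_\la$, and the canonical isomorphism \eqref{iso}) must be checked to respect $(\T_\la\Alg_M)^\perp$, and the two hypotheses \eqref{rq} and \eqref{req2} are exactly what is needed, with \eqref{rq} handling the multiplication and \eqref{req2} handling the form. One subtlety worth flagging explicitly is that \eqref{req2} as stated only asserts $\tr((\T_\la\Alg_M)^\perp \circ \Alg) = 0$, which by definition of $(\T_\la\Alg_M)^\perp$ is equivalent to $(\T_\la\Alg_M)^\perp \circ \Alg \subset (\T_\la\Alg_M)^\perp$ — so \eqref{rq} and \eqref{req2} are in fact the same condition, and the proof is cleanest if one observes this at the outset and then invokes it in both roles. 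Given that observation, the whole argument is essentially the standard fact that a two-sided ideal contained in the radical of an invariant form can be quotiented out to leave a Frobenius structure, specialized to the commutative setting.
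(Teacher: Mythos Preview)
Your main argument is correct and matches the paper's own reasoning: the paper presents this proposition as an immediate consequence of the discussion preceding it (the sentence ``Hence, we have the following proposition''), and your three verifications---that $\circ$ descends to the quotient via the ideal hypothesis, that associativity and commutativity are inherited, and that $\eta^*_\la$ is well-defined on the quotient via \eqref{req2} and invariant by associativity---are exactly what that discussion establishes. Your use of Proposition~\ref{prop1}(ii) with $\delta=\la'$ to justify associativity of $\circ$ is also correct.

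However, your closing observation that \eqref{rq} and \eqref{req2} are ``in fact the same condition'' is wrong. The orthogonal complement is defined by $(\T_\la\Alg_M)^\perp=\{a\in\Alg:\tr(Xa)=0\text{ for all }X\in\T_\la\Alg_M\}$, where $Xa$ is the \emph{original} multiplication in $\Alg$, not $\circ$. Condition \eqref{req2} asserts $\tr(\alpha\circ\beta)=0$ for $\alpha\in(\T_\la\Alg_M)^\perp$ and all $\beta\in\Alg$; unwinding via the adjoint this says $\alpha^\sharp=\ell(\la'\alpha)+\la'\ell^*(\alpha)=0$ for such $\alpha$ (equivalently, the extended map $\sharp$ on all of $\Alg$ has image in $\T_\la\Alg_M$). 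Condition \eqref{rq} says instead that $\tr\bigl(X\,(\alpha\circ\beta)\bigr)=0$ for all $X\in\T_\la\Alg_M$. Neither of these implies the other in general---the first concerns the kernel/image of $\sharp$, the second is a genuine ideal condition under a different product---which is precisely why the proposition assumes both. This error does not affect the validity of your proof, since you correctly invoke each hypothesis where it is needed; but the claimed equivalence should be dropped.
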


\begin{remark}
Assume that the multiplication \eqref{fm2} is unital and the unity $1$-form is given by $\eps$. 
Then,  for arbitrary $\alpha\in\of{\Alg_M}$ we have $\dual{\eps,\alpha^\sharp} = \tr(\eps\circ\alpha)= \tr \alpha$.
On the other hand $\dual{\eps,\alpha^\sharp} = \dual{\alpha,e} \equiv \tr(e\alpha)$,
where $e=\eps^\sharp$ is the unity vector field. Since the trace is nondegenerate, we see that for the multiplication \eqref{fm2} the unity vector field $e\simeq 1$, that is $e$ coincides with the unity $1$ (modulo orthogonal complement of $\T_\la^*\Alg_M$) of the original algebra $\Alg$ or lies in the same equivalence class.  
\end{remark}

\subsection{Main theorem} The metric \eqref{fmetric} can be written in the form
\eqq{
\eta^*(\alpha, \beta) := \tr\bra{\la' \ell^*(\alpha)\beta +\la' \alpha \ell^*(\beta)},
}
which, when $r = \ell^*$, coincides with \eqref{metric}. By Theorem~\ref{theorem1}
the sufficient  condition for flatness of the metric is the identity \eqref{rel}.     
It turns out, that for $r = \ell^*$ the condition \eqref{rel} is fulfilled, if the Rota-Baxter identity \eqref{RBi} together with \eqref{sc} hold.   

\begin{lemma}\label{lemma1}
If $\ell\in\End{\Alg}$ satisfies the Rot-Baxter identity \eqref{RBi}
and the relation
\eq{\label{frel}
\ell(a') +\ell^*(a)' = 0, 
}
then $r = \ell^*$ fulfils the identity \eqref{rel}, that is 
\eqq{
\ell^*\bra{\ell^*(a) b'} + \ell^*\bra{a\ell^*(b)'} - \ell^*(a) \ell^*(b)' = \kappa\, a b',
}
and also
\eq{\label{rel3}
	\ell\bra{\ell^*(a) b'} - \ell\bra{a\ell(b)'} - \ell^*(a) \ell(b)' = \kappa\, a b'.	
}
\end{lemma}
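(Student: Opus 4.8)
The plan is to derive the identity \eqref{rel} for $r=\ell^*$ purely algebraically from the Rota-Baxter identity \eqref{RBi} and the relation \eqref{frel}, and then deduce \eqref{rel3} as a byproduct of the same manipulation. First I would rewrite \eqref{frel} in the two equivalent forms that will actually be used: $\ell^*(a)' = -\ell(a')$ and, taking adjoints with respect to the invariant trace (using $\tr(x'y)=-\tr(xy')$), the dual statement $\ell(a)' = -\ell^*(a')$. The point is that the derivation $\pr$ intertwines $\ell$ and $-\ell^*$, so every occurrence of a primed $\ell^*$ can be traded for an unprimed $\ell$ acting on a primed argument, and vice versa. This is the mechanism that will convert the "$b'$-weighted" relation \eqref{rel} into the ordinary Rota-Baxter identity \eqref{RBi}.

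Next I would carry out the substitution. In the left-hand side of the target,
\eqq{
\ell^*\bra{\ell^*(a) b'} + \ell^*\bra{a\,\ell^*(b)'} - \ell^*(a)\,\ell^*(b)',
}
replace $\ell^*(b)'$ by $-\ell(b')$ using \eqref{frel}. This turns the expression into
\eqq{
\ell^*\bra{\ell^*(a) b'} - \ell^*\bra{a\,\ell(b')} + \ell^*(a)\,\ell(b'),
}
which is precisely the left-hand side of the dual relation \eqref{rela} (equivalently, \eqref{dRB}) evaluated at the pair $(a,b')$. Since \eqref{rela} was already shown to be equivalent to \eqref{rel}, and \eqref{rel}/\eqref{dRB} in turn is the Rota-Baxter identity in disguise, it suffices to observe that \eqref{RBi} applied to $(a,b')$ gives exactly $\kappa\, a b'$ after rewriting. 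Concretely: starting from \eqref{RBi} with second slot $b'$, namely $\ell(\ell(a)b') + \ell(a\ell(b')) - \ell(a)\ell(b') = \kappa\, ab'$, one takes the adjoint against an arbitrary $c$ using $\tr(\ell(x)y)=\tr(x\ell^*(y))$ and $\tr(x'y)=-\tr(xy')$, and reads off that the bracketed combination above equals $\kappa\, ab'$; nondegeneracy of the trace then upgrades this to the operator identity. For \eqref{rel3}, I would instead substitute $\ell(b)'=-\ell^*(b')$ into $\ell\bra{\ell^*(a)b'} - \ell\bra{a\,\ell(b)'} - \ell^*(a)\ell(b)'$, obtaining $\ell\bra{\ell^*(a)b'} + \ell\bra{a\,\ell^*(b')} + \ell^*(a)\ell^*(b')$, and recognize this as the $K_1$-type functional \eqref{kk} for $r=\ell^*$ evaluated at $(a,b')$ up to the already-established equivalence—so it too collapses to $\kappa\, ab'$.

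The main obstacle is bookkeeping rather than conceptual: one must be careful that the derivation $\pr$ is applied to the correct argument (the Leibniz rule is not needed here because in \eqref{rel}/\eqref{rel3} the primes sit on single factors, not products), and that every passage "apply $\tr(\,\cdot\,c)$, integrate by parts, use adjointness, invoke nondegeneracy" is legitimate—this is exactly the pattern already used in the proof of the Lemma relating \eqref{rel} and \eqref{rela}, so I would cite that equivalence rather than reprove it. A secondary subtlety is sign tracking through the two forms of \eqref{frel}: it is worth verifying once that $\tr$-invariance of $\pr$ genuinely forces $\ell(a)' = -\ell^*(a')$ from $\ell^*(a)'=-\ell(a')$, since both signs are needed, one for \eqref{rel} and the other for \eqref{rel3}. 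Once those two rewriting rules are in hand, both identities follow by a single substitution and an appeal to Proposition with \eqref{dRB}/\eqref{rela}; no genuinely new computation beyond what the preceding sections supply is required.
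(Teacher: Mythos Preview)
Your argument for the first identity is correct and in fact more direct than the paper's. Substituting $\ell^*(b)'=-\ell(b')$ (which is exactly \eqref{frel}) turns the left-hand side into $\ell^*(\ell^*(a)b')-\ell^*(a\,\ell(b'))+\ell^*(a)\ell(b')$, which is precisely the dual Rota--Baxter relation \eqref{dRB} evaluated at $(a,b')$; since \eqref{dRB} was already shown equivalent to \eqref{RBi}, you are done. The paper instead packages this into the trace identity $\tr(\tilde K_1[a,b']c)=\tr(aK_1[c,b])$ and invokes nondegeneracy, which amounts to the same thing but hides the one-line substitution you found.

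Your argument for \eqref{rel3}, however, has a genuine gap. You want to substitute $\ell(b)'=-\ell^*(b')$, i.e.\ $\partial\circ\ell=-\ell^*\circ\partial$, but the hypothesis \eqref{frel} says $\ell\circ\partial=-\partial\circ\ell^*$, and these are \emph{not} equivalent: taking adjoints of \eqref{frel} (using $\partial^*=-\partial$) returns \eqref{frel} itself, not the ``dual'' form. So the verification you flag as a subtlety actually fails. Moreover, even granting that substitution, the expression $\ell(\ell^*(a)b')+\ell(a\,\ell^*(b'))+\ell^*(a)\ell^*(b')$ is not $K_1$ for $r=\ell^*$ (which has $\ell^*$ on the outside, not $\ell$), nor any other identity already on the table, so the claimed collapse to $\kappa\,ab'$ does not go through by recognition.

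The paper's route for \eqref{rel3} is unavoidable here: pair with an arbitrary $c$, integrate by parts to move the derivative off $\ell(b)'$, use \eqref{frel} in its given direction, and regroup into two instances of the dual Rota--Baxter relation \eqref{dRB} (at $(a,c')$ and at $(c,a')$). This is exactly the content of the trace identity $\tr(\tilde K_2[a,b]c)=\tr\bigl(b\,K_1[a,c]-b\,K_1[c,a]\bigr)$ in the paper's proof. For \eqref{rel3} you really do need the duality argument rather than a pointwise substitution.
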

\begin{proof}
Let
\eqq{
\tilde{K}_1\brac{a,b}&:=\ell\bra{\ell(a) b} + \ell\bra{a \ell(b)} - \ell(a)\ell(b) - \kappa\, ab,\\
\tilde{K}_2\brac{a,b}&:= \ell\bra{\ell^*(a) b'} - \ell\bra{a\ell(b)'} - \ell^*(a) \ell(b)' + \kappa\, a b',
}
which are connected with conditions \eqref{RBi} and \eqref{rel3}, respectively. Then,
\eqq{
\tr\bigl (\tilde{K}_1\brac{a,b'}c\bigr ) &= \tr\bigl (a K_1\brac{c,b}\bigr ),\\
\tr\bigl (\tilde{K}_2\brac{a,b}c\bigr ) &= \tr\bigl (b K_1\brac{a,c}-b K_1\brac{c,a}\bigr ), 
}
where $K_1$ is given by \eqref{kk} for $r = \ell^*$. Now, the results of the lemma
follows from the nondegeneracy of the trace form.
\end{proof}

We will show that under certain technical assumption on a submanifold $\Alg_M$ of $\Alg$
we can define the structure of a pre-Frobenius manifold.

\begin{theorem}\label{main}
Assume that $\ell\in\End{\Alg}$ is invariant on $\Alg_M$, that is it must commute with 
all directional derivatives with respect to vector fields $\vf{\Alg_M}$.
Let the endomorphism~$\ell$ satisfy the Rota-Baxter identity~\eqref{RBi}
and the requirement \eqref{frel}. 
Then, the following statements hold:
\begin{itemize}
\item[(i)] The Levi-Civita connection for the contravariant metric \eqref{fmetric} has the form
\eq{\label{lcc2}
\nabla_{\alpha^\sharp} \gamma = \Dir_{\alpha^\sharp}\gamma 
+ \alpha\, \ell(\gamma') - \ell^*(\alpha)\gamma'.
}
\item[(ii)] The metric \eqref{fmetric} is flat, that is the curvature tensor $R(\alpha^\sharp,\beta^\sharp)\gamma$ vanishes identically on $\Alg_M$.
\item[(iii)] The co-unity $1$-form $\eps$, such that $\eps(\alpha^\sharp) \equiv \tr\alpha$, is closed. 
\item[(iv)] The tensor $\nabla*$ is symmetric in all three arguments, where
\eqq{
		\alpha^\sharp*\beta^\sharp := (\alpha\circ\beta)^\sharp\qquad \alpha,\beta\in\of{\Alg_M}
}
is the induced multiplication in the tangent bundle $\T\Alg_M$. In principle, the relation 
\eq{\label{nsymm}
\bra{\nabla_{\alpha^\sharp}\circ}(\beta,\gamma) = \bra{\nabla_{\beta^\sharp}\circ}(\alpha,\gamma)
}
is valid.
\end{itemize}
\end{theorem}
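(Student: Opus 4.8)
The plan is to build on the two preceding theorems and reuse as much machinery as possible. For item (i), note that the metric \eqref{fmetric} coincides with the linear metric \eqref{metric} once we set $r=\ell^*$, so Theorem~\ref{theorem1} applies \emph{provided} $r=\ell^*$ satisfies the key identity \eqref{rel}. But that is precisely the content of Lemma~\ref{lemma1}: the Rota--Baxter identity \eqref{RBi} together with \eqref{frel} force $r=\ell^*$ to fulfil \eqref{rel}. Hence Theorem~\ref{theorem1}(i) gives the Levi-Civita connection in the form $\nabla_{\alpha^\sharp}\gamma = \Dir_{\alpha^\sharp}\gamma - r(\alpha)\gamma' - \alpha\, r(\gamma)'$ with $r=\ell^*$; substituting $r=\ell^*$ and then using \eqref{frel} in the form $\ell^*(\gamma)' = -\ell(\gamma')$ to rewrite the third term, I recover exactly \eqref{lcc2}. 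Item (ii) is then immediate from Theorem~\ref{theorem1}(ii), since flatness follows from \eqref{rel} alone, which we have just verified holds for $r=\ell^*$.

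For item (iii), the claim is that the co-unity $\eps$ defined by $\eps(\alpha^\sharp)=\tr\alpha$ is closed. This is Proposition~\ref{prop}: there it is shown that $d\eps(\alpha^\sharp,\beta^\sharp) = 2\tr\bigl(\alpha(r^*(\beta')+r(\beta)')\bigr)$ where the connection is written with the $r$-tensor, and that the sufficient condition for this to vanish is \eqref{sc}, namely $r^*(\gamma')+r(\gamma)'=0$. With $r=\ell^*$ this becomes $\ell(\gamma')+\ell^*(\gamma)'=0$, which is exactly the hypothesis \eqref{frel}. So item (iii) reduces to citing Proposition~\ref{prop} after the substitution $r=\ell^*$; I should be careful to check that the hypothesis of Proposition~\ref{prop} (``$r$ satisfies \eqref{rel} so that Theorem~\ref{theorem1} holds'') is met, which it is by Lemma~\ref{lemma1}.

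The substantive part is item (iv): symmetry of $\nabla*$ in all three arguments, equivalently \eqref{nsymm}. The plan is to compute $(\nabla_{\alpha^\sharp}\circ)(\beta,\gamma)$ directly. By the Leibniz rule for the connection acting on the $(1,2)$-tensor $\circ$, one has $(\nabla_{\alpha^\sharp}\circ)(\beta,\gamma) = \nabla_{\alpha^\sharp}(\beta\circ\gamma) - (\nabla_{\alpha^\sharp}\beta)\circ\gamma - \beta\circ(\nabla_{\alpha^\sharp}\gamma)$, where on the right $\circ$ is the cotangent-bundle multiplication \eqref{fmult} (with the point $\la$ now varying), and $\nabla$ on $1$-forms is given by \eqref{lcc2}. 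I would expand each term using \eqref{fmult}, i.e. $\beta\circ\gamma = \ell(\la'\beta)\gamma+\beta\ell(\la'\gamma)$, keeping careful track of the fact that the directional derivative $\Dir_{\alpha^\sharp}$ now also hits the explicit $\la'$ factors inside $\circ$, producing terms with $(\alpha^\sharp)'$. The invariance of $\ell$ and of $\pr$ on $\Alg_M$ is what lets me pull $\Dir_{\alpha^\sharp}$ through $\ell$ and through $\pr$. After expansion, the $\Dir_{\alpha^\sharp}\beta$ and $\Dir_{\alpha^\sharp}\gamma$ contributions should cancel against the corresponding pieces of $(\nabla_{\alpha^\sharp}\beta)\circ\gamma$ and $\beta\circ(\nabla_{\alpha^\sharp}\gamma)$, leaving an expression in which $\alpha$ enters only through $\alpha^\sharp = \ell(\la'\alpha)+\la'\ell^*(\alpha)$ and through bare $\alpha$; the goal is to massage this, using the Rota--Baxter identity \eqref{RBi}, the derived relations \eqref{rel}, \eqref{rel3}, and \eqref{frel}, into a form manifestly symmetric under $\alpha\leftrightarrow\beta$. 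I expect this symmetrization — identifying which combination of the three auxiliary identities collapses the asymmetric remainder — to be the main obstacle, and it is likely the analogue of the ``straightforward but tedious'' curvature computation in the proof of Theorem~\ref{thmg}. An alternative, possibly cleaner route for (iv): invoke the equivalence (i)$\Leftrightarrow$(ii) from Theorem~2.15 of \cite{Her} together with the flatness already established in (ii), reducing symmetry of $\nabla*$ to symmetry of $\nabla c$ where $c(X,Y,Z)=\eta(X*Y,Z)$; but since the natural object here lives in the cotangent bundle, working directly with $\circ$ and \eqref{lcc2} as above is the more honest path, and I would present that, relegating the bookkeeping to the reader as the paper does elsewhere.
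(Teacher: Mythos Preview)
Your proposal is correct and follows essentially the same route as the paper: parts (i)--(iii) are reduced to Lemma~\ref{lemma1}, Theorem~\ref{theorem1}, and Proposition~\ref{prop} via the substitution $r=\ell^*$, and part (iv) is handled by expanding $(\nabla_{\alpha^\sharp}\circ)(\beta,\gamma)$ with the Leibniz rule and showing the difference under $\alpha\leftrightarrow\beta$ collapses to zero. The paper organizes the computation in (iv) through the $\Gamma$-tensor $\Gamma_\alpha\gamma = \ell^*(\alpha)\gamma' - \alpha\,\ell(\gamma')$ and, after cancelling symmetric terms, groups the remainder into four bracketed expressions that are recognized as instances of \eqref{rel3} and \eqref{RBi}, whereupon the residual $\kappa$-terms cancel; your anticipated use of exactly these identities is on target, and the alternative via Hertling's Theorem~2.15 that you set aside is indeed not pursued.
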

\begin{proof}
Assuming that $r=\ell^*$, the first three points of the theorem are straightforward consequence 
of Lemma~\ref{lemma1}, Theorem~\ref{theorem1} and Proposition~\ref{prop}.

It is left to show that $\nabla*$  is symmetric in all its arguments, i.e.
\eqq{
\bra{\nabla_{\alpha^\sharp}*}(\beta^\sharp,\gamma^\sharp)
= \bra{\nabla_{\beta^\sharp}*}(\alpha^\sharp,\gamma^\sharp).
}
Since $\nabla$ is the Levi-Civita connection the following relation is valid:
\eqq{
\bra{\nabla_{\alpha^\sharp}*}(\beta^\sharp,\gamma^\sharp) 
= \bra{\bra{\nabla_{\alpha^\sharp}\circ}(\beta,\gamma)}^\sharp.
}
Hence, it is sufficient to show that \eqref{nsymm} holds.  
 
Expanding $\nabla\circ$, one finds that
\eq{\label{cdm}\begin{split}
\bra{\nabla_{\alpha^\sharp}\circ}(\beta,\gamma) &=
\nabla_{\alpha^\sharp}\bra{\beta\circ\gamma} - \nabla_{\alpha^\sharp}\beta\circ\gamma
- \beta\circ\nabla_{\alpha^\sharp}\gamma\\
&= (\Dir_{\alpha^\sharp}\circ)(\beta,\gamma) - \Gamma_\alpha(\beta\circ\gamma) + \Gamma_\alpha\beta\circ\gamma
+ \beta\circ\Gamma_\alpha\gamma,
\end{split}
}
where $\Gamma$ is the tensor field \eqref{gamma}, which by \eqref{lcc2} has the form 
$\Gamma_\alpha\gamma =   \ell^*(\alpha)\gamma' - \alpha\, \ell(\gamma')$. Hence,
 \eqq{
\Gamma_\alpha(\beta\circ\gamma) &= \ell^*(\alpha)(\beta\circ\gamma)' - \alpha\ell\bra{(\beta\circ\gamma)'}\\
&= \ell^*(\alpha)\ell(\la'\beta)'\gamma +  \ell^*(\alpha)\ell(\la'\beta)\gamma' + \ell^*(\alpha)\beta'\ell(\la'\gamma) 
+ \ell^*(\alpha)\beta\ell(\la'\gamma)'\\
&\quad -\alpha\ell\bra{\ell(\la'\beta)'\gamma} -\alpha\ell\bra{\ell(\la'\beta)\gamma'}
-\alpha\ell\bra{\beta'\ell(\la'\gamma)}-\alpha\ell\bra{\beta\ell(\la'\gamma)'} 
}
and
\eqq{
\Gamma_\alpha\beta\circ\gamma &= \ell\bra{\la'\Gamma_\alpha\beta}\gamma 
+ \Gamma_\alpha\beta\ell(\la'\gamma)\\
&= \ell\bra{\ell^*(\alpha)\la'\beta'}\gamma - \ell\bra{\la'\alpha\ell(\beta')}\gamma + \ell^*(\alpha)\beta'\ell(\la'\gamma)
- \alpha\ell(\beta')\ell(\la'\gamma).
}
Recall that $\Dir_{\alpha^\sharp}\la' = (\alpha^\sharp)'$, where
$\alpha^\sharp\in\vf{\Alg_M}$ is given by \eqref{iso}. Thus,
\eqq{
(\Dir_{\alpha^\sharp}\circ)(\beta,\gamma) &= \ell\bra{\Dir_{\alpha^\sharp}\la'\beta}\gamma 
+ \beta\ell\bra{\Dir_{\alpha^\sharp}\la'\gamma}
= \ell\bra{\alpha^\sharp{}'\beta}\gamma + \beta\ell\bra{\alpha^\sharp{}'\gamma}\\
&= \ell\bra{\ell^*(\alpha)\la''\beta}\gamma - \ell\bra{\ell(\alpha')\la'\beta}\gamma + \ell\bra{\ell(\la'\alpha)'\beta}\gamma\\
&\quad + \beta\ell\bra{\ell^*(\alpha)\la''\gamma} - \beta\ell\bra{\ell(\alpha')\la'\gamma} + \beta\ell\bra{\ell(\la'\alpha)'\gamma}.
}
Now, substituting the above formulae to \eqref{cdm} we have
\eqq{
&\bra{\nabla_{\alpha^\sharp}\circ}(\beta,\gamma) - \bra{\nabla_{\beta^\sharp}\circ}(\alpha,\gamma) =
\bra{\nabla_{\alpha^\sharp}\circ}(\beta,\gamma) - \pobr{\alpha\leftrightarrow \beta}\\
&\qquad= \ell\bra{\ell^*(\alpha)\la''\beta}\gamma - \cancel{\ell\bra{\ell(\alpha')\la'\beta}\gamma} + \ell\bra{\ell(\la'\alpha)'\beta}\gamma +\beta\ell\bra{\ell^*(\alpha)\la''\gamma}\\
&\qquad\quad - \beta\ell\bra{\ell(\alpha')\la'\gamma} + \cancel{\beta\ell\bra{\ell(\la'\alpha)'\gamma}}- \ell^*(\alpha)\ell(\la'\beta)'\gamma -  \cancel{\ell^*(\alpha)\ell(\la'\beta)\gamma'}\\
&\qquad\quad - \cancel{\ell^*(\alpha)\beta'\ell(\la'\gamma)}- \ell^*(\alpha)\beta\ell(\la'\gamma)' +\cancel{\alpha\ell\bra{\ell(\la'\beta)'\gamma}} +\alpha\ell\bra{\ell(\la'\beta)\gamma'}\\
&\qquad\quad+\alpha\ell\bra{\beta'\ell(\la'\gamma)}+\alpha\ell\bra{\beta\ell(\la'\gamma)'} 
+ \ell\bra{\ell^*(\alpha)\la'\beta'}\gamma - \cancel{\ell\bra{\la'\alpha\ell(\beta')}\gamma}\\
&\qquad\quad + \cancel{\ell^*(\alpha)\beta'\ell(\la'\gamma)} - \alpha\ell(\beta')\ell(\la'\gamma)
+ \beta\ell\bra{\ell^*(\alpha)\la'\gamma'} - \beta\ell\bra{\la'\alpha\ell(\gamma')}\\
&\qquad\quad + \cancel{\ell^*(\alpha)\ell(\la'\beta)\gamma'} - \alpha\ell(\la'\beta)\ell(\gamma') - \pobr{\alpha\leftrightarrow \beta}, 
}
where $\pobr{\alpha\leftrightarrow \beta}$ stands for all the remaining terms arising from the permutation
of $\alpha$ with $\beta$ in the preceding terms. Some terms cancel out as in the preceding terms we are allowed
to  permute $\alpha$ and $\beta$ with simultaneous change of sign. Using that property we can assort all terms obtaining 
\eqq{
&\bra{\nabla_{\alpha^\sharp}\circ}(\beta,\gamma) - \bra{\nabla_{\beta^\sharp}\circ}(\alpha,\gamma) =\\
&\qquad = \brac{\ell\bra{\ell^*(\alpha)(\la'\beta)'} - \ell\bra{\alpha\ell(\la'\beta)'} - \ell^*(\alpha)\ell(\la'\beta)'}\gamma\\
&\qquad\quad +\beta \brac{\ell\bra{\ell^*(\alpha)(\la'\gamma)'} - \ell\bra{\alpha\ell(\la'\gamma)'} - \ell^*(\alpha)\ell(\la'\gamma)'}   \\
&\qquad\quad -\beta \brac{\ell\bra{\ell(\alpha')\la'\gamma} + \ell\bra{\alpha'\ell(\la'\gamma)} - \ell(\alpha')\ell(\la'\gamma)}\\
&\qquad\quad +\alpha \brac{\ell\bra{\ell(\la'\beta)\gamma'} + \ell\bra{\la'\beta\ell(\gamma')} - \ell(\la'\beta)\ell(\gamma')}
- \pobr{\alpha\leftrightarrow \beta}\\
&\qquad= -\kappa\, \alpha (\la'\beta)'\gamma -\kappa\, \alpha\beta (\la'\gamma)' -\kappa\, \la'\alpha'\beta\gamma
+ \kappa\, \la'\alpha\beta\gamma' - \pobr{\alpha\leftrightarrow \beta} = 0,
}
where the identities \eqref{rel3} and \eqref{RBi} were used. This completes the proof.
\end{proof}

\begin{remark}
The above proof significantly simplifies if the endomorphism $\ell$
is antisymmetric ($\ell^* = -\ell$) and it commutes with the derivation $\pr$, that is $\ell'=0$.
\end{remark}

\subsection{The recurrence formula}

In particular, it would be highly advantageous to use Proposition~\ref{Fpro} for 
a derivation of pre-potentials on Frobenius manifolds which could be obtained from the above scheme. 
This requires use of the recurrence formula \eqref{rec2}. Let see how it fits in the above scheme
under some additional assumption.  

\begin{proposition}
Assume that the differential $1$-forms $d\H^k_{(n)}:\Alg_M\arrow\Alg$ are differentiable functions of a single variable $\la\in\Alg_M$\footnote{We can make such assumption since with each differential $1$-form
on $\Alg_M$  we can identify a differentiable map from $\Alg_M$ to $\Alg$, see Remark~\ref{rde}.} such that
\eqq{
\bra{d\H^k_{(n)}}' = \frac{d (d\H^k_{(n)})}{d \la}\la' \qquad\text{and}\qquad \Dir_X d\H^k_{(n)} 
= \frac{d (d\H^k_{(n)})}{d \la}X, 
}
where $X\in\vf{\Alg_M}$. Then, the recurrence formula \eqref{rec2} takes the particular simple form
\eq{\label{recH}
	\frac{d (d\H^k_{(n)})}{d \la} = d\H^k_{(n-1)}.
}	
\end{proposition}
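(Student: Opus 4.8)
The plan is to start from the coordinate-free recurrence formula \eqref{rec2}, namely $\nabla_{\alpha^\sharp}d\H^k_{(n)} = \alpha\circ d\H^k_{(n-1)}$ for $n>0$, and to rewrite both sides using the explicit formulas established in Theorem~\ref{main} and the definition \eqref{fmult} of the multiplication. For the left-hand side I would substitute the Levi-Civita connection \eqref{lcc2}, giving
\eqq{
\nabla_{\alpha^\sharp}d\H^k_{(n)} = \Dir_{\alpha^\sharp}d\H^k_{(n)} + \alpha\,\ell\bra{(d\H^k_{(n)})'} - \ell^*(\alpha)(d\H^k_{(n)})'.
}
For the right-hand side, $\alpha\circ d\H^k_{(n-1)} = \ell(\la'\alpha)d\H^k_{(n-1)} + \alpha\,\ell\bra{\la' d\H^k_{(n-1)}}$ by \eqref{fmult}. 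The idea is then to insert the assumed relations $\Dir_X d\H^k_{(n)} = \tfrac{d(d\H^k_{(n)})}{d\la}X$ and $(d\H^k_{(n)})' = \tfrac{d(d\H^k_{(n)})}{d\la}\la'$ into the left-hand side. Writing $\phi \equiv \tfrac{d(d\H^k_{(n)})}{d\la}$ for brevity, the three terms on the left become $\phi\,\alpha^\sharp + \alpha\,\ell(\phi\la') - \ell^*(\alpha)\phi\la'$, and now I would expand $\alpha^\sharp = \ell(\la'\alpha) + \la'\ell^*(\alpha)$ using \eqref{iso}.

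The key step is to observe the cancellation: $\phi\,\alpha^\sharp = \phi\,\ell(\la'\alpha) + \phi\la'\ell^*(\alpha)$, and since the algebra is commutative the term $\phi\la'\ell^*(\alpha)$ cancels against $-\ell^*(\alpha)\phi\la'$, while the term $\phi\,\ell(\la'\alpha)$ combines with the middle term. So the left-hand side collapses to $\phi\,\ell(\la'\alpha) + \alpha\,\ell(\phi\la')$. Comparing this with the right-hand side $\ell(\la'\alpha)d\H^k_{(n-1)} + \alpha\,\ell(\la' d\H^k_{(n-1)})$, and matching the two structurally identical terms, we read off $\phi = d\H^k_{(n-1)}$, which is precisely \eqref{recH}. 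To make the matching rigorous rather than merely term-by-term, I would note that the identity must hold for all $\alpha\in\of{\Alg_M}$, so the equation $\ell(\la'\alpha)\bra{\phi - d\H^k_{(n-1)}} + \alpha\,\ell\bra{\la'(\phi - d\H^k_{(n-1)})} = 0$ holds for all $\alpha$; setting $\psi := \phi - d\H^k_{(n-1)}$ this says $\alpha\circ\psi = 0$ for all $1$-forms $\alpha$, hence $\psi$ is annihilated by the multiplication, and by nondegeneracy of the metric \eqref{fmetric} (which is $\tr(\alpha\circ\beta)$) this forces $\psi = 0$ modulo the orthogonal complement, i.e. as an element of $\T_\la^*\Alg_M$.

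The main obstacle I anticipate is a bookkeeping one rather than a conceptual one: one must be careful that the derivative $\tfrac{d(d\H^k_{(n)})}{d\la}$ and the prime derivation $\pr$ genuinely commute with the directional derivatives $\Dir_{\alpha^\sharp}$ as required, and that the relation $\Dir_{\alpha^\sharp}\la' = (\alpha^\sharp)'$ (used implicitly via the invariance of $\pr$ on $\Alg_M$) is legitimately available here — these are exactly the standing assumptions of Theorem~\ref{main} together with the extra hypothesis of the proposition, so no new work is needed, but it should be stated. A secondary subtlety is the nondegeneracy argument in the last step: strictly one should argue that the expression $\alpha\mapsto \alpha\circ\psi$ vanishing identically implies $\tr(\beta\circ(\alpha\circ\psi)) = \eta^*(\beta,(\alpha\circ\psi)^\flat)\cdots$ trivially, but more directly that $\eta^*$-nondegeneracy plus $\psi\circ\alpha = 0$ for all $\alpha$ (in particular, if the multiplication is unital, $\alpha = \eps$) gives $\psi = 0$; if unitality is not assumed one argues via $\tr(\psi\circ\alpha) = 0$ for all $\alpha$ and nondegeneracy of the metric. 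Everything else is the routine commutative-algebra cancellation sketched above, which I would present compactly without belaboring each term.
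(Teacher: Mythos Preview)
Your proposal is correct and follows essentially the same route as the paper: substitute the connection formula \eqref{lcc2} and the assumed relations for $(d\H^k_{(n)})'$ and $\Dir_X d\H^k_{(n)}$, expand $\alpha^\sharp$ via \eqref{iso}, observe the cancellation of the $\ell^*(\alpha)\la'\phi$ terms, and compare with the expansion of $\alpha\circ d\H^k_{(n-1)}$ from \eqref{fmult}. Your additional care about the nondegeneracy argument needed to pass from $\alpha\circ\psi=0$ for all $\alpha$ to $\psi=0$ is a point the paper leaves implicit; it is a legitimate subtlety but not a different method.
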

\begin{proof}
From the assumptions and formulae \eqref{fmult} and \eqref{lcc2} we have
\eqq{
	\nabla_{\alpha^\sharp}d\H^k_{(n)} &= \frac{d (d\H^k_{(n)})}{d \la}\alpha^\sharp
	+ \alpha\, \ell\bra{(d\H^k_{(n)})'} - \ell^*(\alpha)(d\H^k_{(n)})' \\
	&= \ell(\la'\alpha)\frac{d (d\H^k_{(n)})}{d \la} + \alpha\, \ell\Big(\la'\frac{d (d\H^k_{(n)})}{d \la}\Big)
}
and
\eqq{
\alpha\circ d\H^k_{(n-1)} = \ell(\la' \alpha)d\H^k_{(n-1)}  + \alpha\,\ell(\la'd\H^k_{(n-1)}),
}
which substituted to \eqref{rec2} give \eqref{recH}.
\end{proof}

\begin{proposition}
For arbitrary $\alpha,\beta\in\of{\Alg_M}$ the following relation is valid:
\eq{\label{arel}
\bra{\alpha\circ\beta}^\sharp = \alpha^\sharp\ell(\la'\beta) + \la'\ell^*(\alpha^\sharp\beta),
}
where $\alpha^\sharp\in\vf{\Alg_M}$ is given by \eqref{iso}.
\end{proposition}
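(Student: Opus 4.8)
The plan is to verify \eqref{arel} by a direct computation, starting from the definition of the multiplication \eqref{fmult} applied to $\alpha$ and $\beta$, then applying the canonical isomorphism \eqref{iso} associated with the metric \eqref{fmetric}. First I would write out $\alpha\circ\beta = \ell(\la'\alpha)\beta + \alpha\ell(\la'\beta)$, so that $(\alpha\circ\beta)^\sharp = \ell\bigl(\la'(\alpha\circ\beta)\bigr) + \la'\ell^*(\alpha\circ\beta)$ by \eqref{iso}. Substituting the expansion of $\alpha\circ\beta$ produces four terms inside $\ell$ (applied to $\la'\cdot$) and the two terms inside $\la'\ell^*(\cdot)$.

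Next I would massage the $\ell$-terms. The key input is the Rota-Baxter identity \eqref{RBi}: applied with $a = \la'\alpha$ (or $\la'\beta$) and $b$ chosen appropriately, it lets me rewrite expressions such as $\ell\bigl(\ell(\la'\alpha)\la'\beta\bigr)+\ell\bigl(\la'\alpha\,\ell(\la'\beta)\bigr)$ in terms of $\ell(\la'\alpha)\ell(\la'\beta) + \kappa\,\la'\alpha\la'\beta$. The point of the manipulation is that $\la'\alpha\cdot\la'\beta$ carries an extra factor $\la'$ which, combined with the $\la'$ already present, should allow recombination into the right-hand side terms $\alpha^\sharp\ell(\la'\beta)$ and $\la'\ell^*(\alpha^\sharp\beta)$; recall from \eqref{iso} that $\alpha^\sharp = \ell(\la'\alpha) + \la'\ell^*(\alpha)$, so each occurrence of $\ell(\la'\alpha)$ is ``half'' of $\alpha^\sharp$ and each occurrence of $\la'\ell^*(\alpha)$ is the other half. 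I would also need the dual relation \eqref{rel3} from Lemma~\ref{lemma1} to handle the cross terms involving both $\ell$ and $\ell^*$, and the requirement \eqref{frel}, $\ell(a')+\ell^*(a)'=0$, to move derivatives past $\ell$ and $\ell^*$ where $\la'$ needs to be absorbed or released. Since $\Alg$ is commutative, all the bookkeeping of factors is legitimate.

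I expect the main obstacle to be organizing the terms so that the $\kappa$-contributions and the genuinely non-Rota-Baxter pieces cancel or combine cleanly: the naive expansion yields roughly eight to ten terms, and the claimed right-hand side unpacks (again via \eqref{iso} for $\alpha^\sharp$) into four terms, so several cancellations must occur, and they rely on using \eqref{RBi}, \eqref{rel3} and \eqref{frel} in the correct combination. A useful sanity check along the way is symmetry: since $\alpha\circ\beta = \beta\circ\alpha$, the final expression $\alpha^\sharp\ell(\la'\beta)+\la'\ell^*(\alpha^\sharp\beta)$ must equal $\beta^\sharp\ell(\la'\alpha)+\la'\ell^*(\beta^\sharp\alpha)$, and verifying this identity is itself a consequence of \eqref{RBi} together with \eqref{frel}; establishing it first may in fact be the cleanest route, after which \eqref{arel} follows by symmetrizing the brute-force expansion. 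As in the proof of Theorem~\ref{main}, the computation simplifies considerably in the special case $\ell^* = -\ell$, $\ell'=0$, which can serve as a guide for the general bookkeeping.
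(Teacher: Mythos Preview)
Your overall strategy---expand $(\alpha\circ\beta)^\sharp$ via \eqref{fmult} and \eqref{iso}, expand the right-hand side using $\alpha^\sharp = \ell(\la'\alpha)+\la'\ell^*(\alpha)$, and compare---is exactly what the paper does. However, you misidentify the auxiliary identities needed. The computation involves no differentiation at all: $\la'$ enters only as a fixed element of $\Alg$, so there is nothing to ``move past'' $\ell$ or $\ell^*$, and \eqref{frel} plays no role. Likewise \eqref{rel3} is irrelevant here, since it is an identity about expressions containing explicit derivatives $b'$.

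Concretely, after cancelling the common term $\la'\ell^*\bigl(\ell(\la'\alpha)\beta\bigr)$, the difference LHS$-$RHS splits into two pieces. The first,
\[
\ell\bigl(\ell(\la'\alpha)\,\la'\beta\bigr)+\ell\bigl(\la'\alpha\,\ell(\la'\beta)\bigr)-\ell(\la'\alpha)\ell(\la'\beta),
\]
equals $\kappa\,\la'\alpha\,\la'\beta$ by \eqref{RBi} with $a=\la'\alpha$, $b=\la'\beta$. The second,
\[
\la'\Bigl[\ell^*\bigl(\alpha\,\ell(\la'\beta)\bigr)-\ell^*(\alpha)\ell(\la'\beta)-\ell^*\bigl(\ell^*(\alpha)\,\la'\beta\bigr)\Bigr],
\]
equals $-\kappa\,\la'\alpha\,\la'\beta$ by the dual identity \eqref{dRB} with $a=\alpha$, $b=\la'\beta$. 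These cancel and the proof is complete. So the two inputs are \eqref{RBi} and \eqref{dRB}, precisely as the paper states; your plan to invoke \eqref{rel3} and \eqref{frel} would send you looking for derivative manipulations that are not there.
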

The proof is straightforward using the identities \eqref{RBi} and \eqref{dRB}.

\section{Frobenius manifolds on the space of meromorphic functions}\label{mer}

\subsection{Algebra of meromorphic functions.}\label{ss1}

Our aim is to illustrate the scheme for construction of Frobenius manifolds by applying it to the algebra 
of meromorphic functions on the Riemann sphere $\CP$. So, let
\eqq{
\Alg = \pobr{f:\hat{\Cm}\map\Cm \,|\, \text{$f$ is meromorphic}}, 
}
where $\hat{\Cm}\cong \CP$ is the extended complex plane. Let $p$  denote the variable in $\hat{\Cm}$.
The space $\Alg$ is an infinite dimensional algebra with respect to the commutative associative multiplication
of functions. 

We define the derivation $\pr\in\Der{\Alg}$ for $s=0$ or $s=1$ by the formula
\eq{\label{der}
f' \equiv \pr f := p^s\pd{f}{p}\qquad f\in\Alg.
}
We will consider this two different cases of $s=0,1$ simultaneously.

Our aim is to define Frobenius algebras taking advantage of expansions near some marked points $\nu\in\hat{\Cm}$ on the extended complex plane. We will distinguish three types of them. The first one is a fixed point at infinity, i.e. $\nu=\infty$.
The second one is a fixed finite point, without loss of generality we take $\nu=0$. The last one is a 'finite' not fixed point, which can vary over the complex plane, i.e.  $\nu=v\in\Cm$. Important is fact that $v$ can be taken as one of the coordinates on some of the underlying manifolds that will be considered.

At a given marked point $\nu$ the trace form $\Tr_\nu:\Alg\arrow \Cm$ is defined by
\eq{\label{traces}
\Tr_\nu (f) := \epsilon \res_{p=\nu} \bra{p^{-s} f}\qquad s=0,1,
}
where $\epsilon =-1$ for $\nu=\infty$ and $\epsilon =1$ otherwise, and we take advantage of the standard residue. The invariance of \eqref{traces} with respect to the derivation \eqref{der} is a straightforward consequence of the integration by parts.

\subsection{The endomorphisms} 

Meromorphic functions can be expanded into Laurent series at the marked points and
at these points we can define projections on the finite parts of the series.  Therefore, for any $f\in\Alg$ 
at $\nu=\infty$ let
\eqq{
\brac{f(p)}^\infty_{\me l}:= \sum_{i\me l}a_ip^i\quad\text{for}\quad f(p)= \sum_{i\les n} a_i p^i,
}
where $n=\deg_\infty f$ and $\res_{p=\infty} f = -a_{-1}$. For finite $\nu=0,v$: 
\eqq{
\brac{f(p)}^\nu_{< l}:= \sum_{i< l}a_i(p-\nu)^i\quad\text{for}\quad f(p) = \sum_{i\me-m} a_i (p-\nu)^i,
}
where $\deg_\nu f = m$ and $\res_{p=\nu}f  = a_{-1}$. Respectively, the following computations
will be carried by means of the spaces of formal Laurent series. 

We will use the following notation:\footnote{See the footnote~\ref{foot}.}
\eqq{
\Alg^\infty = \Cm\cc{p^{-1}}\quad\text{for}\quad \nu=\infty
}
and
\eqq{
\Alg^\nu = \Cm\cc{p-\nu}\quad\text{for}\quad \nu=0,v;
} 
for the corresponding spaces of formal Laurent series at $\infty$ and finite $\nu$.
The respective `regular' dual algebras are:
\eqq{
	\tAlg^{\infty}\equiv \Cm\cc{p}\cong(\Alg^{\infty})^* 
}
and
\eqq{
\tAlg^{\nu}\equiv \Cm\cc{(p-\nu)^{-1}} \cong(\Alg^{\nu})^*;
} 
which are defined with respect to the duality pairing given by means of the trace form~\eqref{traces}, that is $\Alg^\nu\times\tAlg^\nu\arrow\Cm$, s.t.~$(X,\gamma)\map \tr_\nu(X\gamma)$. Let's define the following subspaces of $\Alg^\nu$:
\eqq{
	\Alg^\infty_{\les k} := \pobr{f\in\Alg^\infty\ |\ \deg_\infty f\les k}
}
for $\nu=\infty$ and
\eqq{
	\Alg^\nu_{> k} = \pobr{f\in\Alg\ |\ \deg_\nu f< -k}
}
for $\nu = 0$ or $v$. The 'dual' respective subspaces of $\tAlg^\nu$ are defined in a similar way.

We define $\ell\in\End{\Alg}$ by the formula
\eq{\label{ell}
	\ell(f) = p^{s}\brac{p^{-s}f}_{\me 0}^\nu  - \frac{1}{2}f = \frac{1}{2}f - p^{s}\brac{p^{-s}f}_{<0}^\nu \qquad s=0,1.
}

\begin{proposition}\label{propll}
The endomorphisms \eqref{ell} satisfy the Rota-Baxter identity \eqref{RBi} for $\kappa=\frac{1}{4}$. 
The adjoints of \eqref{ell} with respect to the traces \eqref{traces} have the form
\eq{\label{elld}
		\ell^*(f) = \frac{1}{2}f - \brac{f}^\nu_{\me 0} =  \brac{f}^\nu_{< 0} -\frac{1}{2}f\qquad f\in\Alg.
}
Moreover, the relation \eqref{frel} is satisfied by endomorphisms \eqref{ell}.
\end{proposition}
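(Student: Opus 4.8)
The plan is to verify the three claims in Proposition~\ref{propll} essentially by direct computation on formal Laurent series, treating the cases $\nu=\infty$ and $\nu$ finite in parallel but with the appropriate sign conventions. First I would observe that the endomorphism \eqref{ell} is of the special form considered in Section~\ref{fro}: writing $P_\pm$ for the projections associated with the decomposition $\Alg^\nu = \Alg^\nu_+ \oplus \Alg^\nu_-$ (positive/nonnegative part versus strictly negative part of the series in the local parameter, conjugated by multiplication by $p^s$), one has $\ell = \tfrac12(P_+ - P_-)$. Hence by Proposition~\ref{rbs} the Rota--Baxter identity \eqref{RBi} holds with $\kappa = \tfrac14$, provided one checks that the two summands $\Alg^\nu_\pm$ are indeed subalgebras closed under multiplication of functions — which is immediate, since a product of two series with only nonnegative (resp.\ only negative) powers of the local parameter again has that property, modulo a small care at $\nu = \infty$ where the relevant decomposition is $\deg_\infty \le 0$ versus $\deg_\infty < 0$ and the factor $p^{-s}$ must be tracked.

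Second, to compute the adjoint $\ell^*$ I would use the defining relation $\Tr_\nu(\ell^*(f)\,g) = \Tr_\nu(f\,\ell(g))$ together with the explicit residue formula \eqref{traces}. The key elementary fact is that, with respect to the residue pairing, the projection onto the part of degree $\ge l$ in a local parameter is adjoint to the projection onto the part of degree $< -l$ (the ``dual'' truncation, as already noted in the text after the definition of $\tAlg^\nu$). Applying this to each truncation appearing in \eqref{ell}, and keeping track of the $p^{\pm s}$ factors inside the trace, yields $\ell^*(f) = \tfrac12 f - [f]^\nu_{\ge 0} = [f]^\nu_{<0} - \tfrac12 f$, i.e.\ $\ell^*$ is the analogous ``$\tfrac12(Q_+ - Q_-)$'' operator but on the dual algebra $\tAlg^\nu$, with the truncations now on the bare function $f$ rather than on $p^{-s}f$.

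Third, for the relation \eqref{frel}, namely $\ell(a') + \ell^*(a)' = 0$, I would compute both terms explicitly. Since $a' = p^s \partial_p a$ and the truncation operators commute with $\partial_p$ up to the obvious boundary term in the local parameter, one has $\ell(a') = p^s[p^{-s}a']^\nu_{\ge 0} - \tfrac12 a' = p^s\big(\partial_p a\big)^{\!\nu}_{\text{appropriate part}} - \tfrac12 a'$, while $\ell^*(a)' = p^s\partial_p\big([a]^\nu_{<0} - \tfrac12 a\big)$; comparing the truncated parts (using that $\partial_p$ shifts degree by one and that $p^s\partial_p$ preserves the splitting point, which is exactly why $s\in\{0,1\}$ is the right range) the two expressions cancel term by term.

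The main obstacle I expect is bookkeeping of the truncation indices and the $p^{\pm s}$ twists: one must be careful that the splitting ``degree $\ge 0$ vs.\ degree $<0$'' used in $\ell$ is taken with respect to $p^{-s}f$ while the one in $\ell^*$ is with respect to $f$, and that at $\nu=\infty$ the residue carries the extra sign $\epsilon = -1$ and the local parameter is $p^{-1}$, so the notions of ``positive part'' flip. Once a consistent convention is fixed, each of the three verifications is a short computation; there is no genuine difficulty beyond the algebra, and in particular the Rota--Baxter claim is not a fresh calculation at all but an instance of Proposition~\ref{rbs}.
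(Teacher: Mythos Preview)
Your treatment of the adjoint $\ell^*$ and of the relation \eqref{frel} matches the paper's proof: both are short direct computations with the residue pairing and the observation that the truncation $[\,\cdot\,]^\nu_{\ge 0}$ commutes with $\partial_p$.

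For the Rota--Baxter identity, your route differs from the paper's. The paper uses the decomposition $\Alg^\nu = \Alg^\nu_{\ge s}\oplus\Alg^\nu_{<s}$ and notes that the corresponding $\tilde\ell = [\,\cdot\,]^\nu_{\ge s}-\tfrac12$ coincides with $\ell$ \emph{except} when $\nu=v$ is a finite nonfixed point and $s=1$; that case is then handled separately by writing $\ell=\tilde\ell+P$ with $P(f)=v[p^{-1}f]^v_0$ and checking an extra identity. You instead use the $p^s$-conjugated projections $P_\pm(f)=p^s[p^{-s}f]^\nu_{\gtrless 0}$, for which $\ell=\tfrac12(P_+-P_-)$ holds in \emph{all} cases. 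This is a genuinely cleaner argument, since it appeals to Proposition~\ref{rbs} uniformly with no exceptional case.

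However, your justification that the two summands are subalgebras is too quick precisely at the point the paper singles out. For $\nu=v\neq 0$ and $s=1$ the minus-summand is $p\,\Alg^v_{<0}$, and this is \emph{not} the space of series with only negative powers of $(p-v)$: for instance $p(p-v)^{-1}=1+v(p-v)^{-1}$ has a nonzero constant term. So your sentence ``a product of two series with only negative powers again has that property'' does not apply here. The claim is nevertheless true: if $g_1,g_2\in\Alg^v_{<0}$ then $g_1g_2\in\Alg^v_{\le -2}$, hence $pg_1g_2\in\Alg^v_{<0}$ and $(pg_1)(pg_2)=p\cdot(pg_1g_2)\in p\,\Alg^v_{<0}$. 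With this one-line check inserted, your argument goes through and is more economical than the paper's case analysis; without it, you have a gap at exactly the case the paper treats with care.
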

\begin{proof}
For $s=0$ and $s=1$ the spaces $\Alg^\nu$ can be decomposed into direct sums of subalgebras, that is $\Alg^\nu= \Alg^\nu_{\me s} \oplus \Alg^\nu_{< s}$. Hence, by Proposition~\ref{rbs} the linear maps 
\eqq{
\tilde{\ell}(f) = \brac{f}_{\me s}^\nu - \frac{1}{2} f= \frac{1}{2}f - \brac{f}_{<s}^\nu\qquad f\in\Alg^\nu 
} 
satisfy the identity \eqref{RB} and \eqref{RBi}.  These maps  coincide with \eqref{ell} for all cases but one. 

The exception occurs for finite $\nu=v$ and $s=1$. It must be considered separately. In this case 
\eqref{ell} can be written in the form
\eq{\label{ell2}
		\ell(f) = \brac{f}_{\me 1}^v - \frac{1}{2}f + v \brac{p^{-1}f}^v_0\equiv \tilde{\ell}(f) + P(f),
}
where $\brac{\sum_ia_i(p-v)^i}^v_0:= a_0$ and $P(f) = v[p^{-1}f]^v_0$. 

Assume that $\tilde{\ell}$ satisfies \eqref{RBi}. Let $P$ be such that $\tilde{\ell}(P(f)g) = P(f)\tilde{\ell}(f)$
and $P(P(f)g) = P(f)P(g)$. These requirements are satisfied by \eqref{ell2}. 
Then, $\ell = \tilde{\ell} + P$ satisfies \eqref{RBi} for the same $\kappa$ iff
\eq{\label{pl}
P\bra{\tilde{\ell}(f)g}+ P\bra{f\tilde{\ell}(g)} + P(f)P(g) = 0. 
}
The identity \eqref{pl}, for the map \eqref{ell2}, can be showed using the following relations:
\eqq{
		\brac{p^{-1}\brac{f}^v_{\me 1}g}^v_0 &= \brac{f\brac{p^{-1}g}^v_{<0}}^v_0\\
	\brac{p^{-1}f\brac{g}^v_{\me 1}}^v_0 &= \brac{f\brac{p^{-1}g}^v_{\me 0}}^v_0 
	- v\brac{p^{-1}f}^v_0\brac{p^{-1}g}^v_0. 	 
}

To find $\ell^*$ it is sufficient to observe that
\eqq{
	\tr_\nu\bra{p^s\brac{p^{-s}f}^\nu_{\me 0}g} = \res_{p=\nu}\bra{\brac{p^{-s}f}^\nu_{\me 0}g}
	= \res_{p=\nu}\bra{p^{-s}f\brac{g}^\nu_{<0}} =  \tr_\nu\bra{f\brac{g}^\nu_{<0}}.
}

The last statement of the proposition is straightforward as
\eqq{
\ell(f') +\ell^*(f)' = p^s\brac{\pd{f}{p}}^\nu_{\me 0} - p^s\pd{\brac{f}^\nu_{\me 0}}{p}  = 0.
}
This finishes the proof.
\end{proof}

\begin{remark}
The construction of Frobenius manifolds in this section corresponds to the construction of bi-Hamiltonian 
structures for dispersionless systems with meromorphic Lax representations presented in \cite{Szab},
where the classical $r$-matrix formalism is applied to commutative algebras equipped with Poisson bracket
$\pobr{\cdot,\cdot}_s =  p^s\pr_p\wedge\pr_x$.  The related classical $r$-matrices
have the form $r = P^\nu_{\me k-s} - \frac{1}{2}$ and are classified with respect to $s,k\in\Z$. The condition \eqref{frel} is crucial for Theorem~\ref{main}  to hold. The only relevant cases of $r$, up to equivalence, satisfying \eqref{frel} with $\ell=r^*$ are for $s=k=0$ and $s=k=1$, and this cases are related to dispersionless systems of the KdV and Toda type, respectively.
\end{remark}

\subsection{Frobenius structure on the spaces of formal Laurent series}\label{ss3}

First we will show how to construct Frobenius algebras and formal pre-Frobenius manifolds 
on appropriate submanifolds $\Alg^\nu_M$ of the spaces of formal Laurent series. Only then it will be natural 
  to extend (reduce) this formalism to underlying manifolds made of meromorphic functions.

We will consider the situation when $\Alg^\nu_M$ consist of formal Laurent series $\la(p)$ at $\nu$ with prescribed 
order and form. For $\nu=\infty$ we define 
\eq{\label{ami}
\Alg_M^\infty = \pobr{\la(p) = p^n + u_{n-1}p^{n-1} + u_{n-2}p^{n-2} + \ldots\, |\, u_i\in\Cm, \text{$u_{n-1}=0$ for $s=0$}},
}
where the degree $\deg_\infty\la=n$ is fixed and for finite $\nu$ we define
\eqq{
\Alg_M^\nu = \pobr{\la(p)=\ldots + u_{1-m}(p-\nu)^{1-m} +  u_{-m}(p-\nu)^{-m}\, |\, u_i\in\Cm\text{ and } v\in\Cm \text{ (if $\nu=v$)}},
}
where $\deg_\nu=m$ is also fixed. The complex coefficients $u_i$ and $v$ (when $\nu=v$) are coordinates on the underlying 'infinite-dimensional' manifolds associated to $\Alg^\nu_M$. The tangent spaces are spanned
by derivations of $\la(p)$ with respect to these coordinates, that is 
\eq{\label{tgsp}
\T_\la\Alg_M^\nu= \spn\pobr{\frac{\pr\la}{\pr u_i},\frac{\pr\la}{\pr v}\,\text{(if $\nu=v$)}}
} 
at point $\la\in\Alg_M^\nu$. The cotangent spaces $\T_\la^*\Alg_M^\nu$ are dual spaces with respect to the corresponding  trace forms \eqref{traces} defined so that $\T_\la^*\Alg^\nu_M\subset \tAlg^\nu$.

One finds that for $\nu=\infty$:
\eqq{
	\T_\la\Alg^\infty_M \cong\Alg^\infty_{\les n+s-2}\quad\Rightarrow\quad 
	\T_\la^*\Alg^\infty_M \cong    \tAlg^\infty_{\me 1-n},
}
for $\nu=0$:
\eqq{
	\T_\la\Alg^0_M \cong\Alg^0_{\me -m}\quad\Rightarrow\quad 
	\T_\la^*\Alg^0_M \cong  \tAlg^0_{< m+s},
}  
and for $\nu=v$:
\eqq{
		\T_\la\Alg^v_M\cong \begin{cases}
      \Alg^v_{\me -m-1}& \text{for $m\neq 0$} \\
      \Alg^v_{\me 0} & \text{for $m=0$}
\end{cases}
\quad\Rightarrow\quad 
		\T_\la^*\Alg^v_M\cong \begin{cases}
    \tAlg^v_{\les m}& \text{for $m\neq 0$} \\
     \tAlg^v_{< 0} & \text{for $m=0$}.
\end{cases}
}

\begin{proposition}
The derivation \eqref{der} and endomorphisms \eqref{ell} are invariant on the underlying manifolds $\Alg^\nu_M$.
\end{proposition}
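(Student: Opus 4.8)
\emph{Proof sketch.} The plan is to unwind the definition of invariance and reduce it, by linearity of the directional derivative in its vector‑field slot, to commutation with the coordinate vector fields. Recall that a map is invariant on $\Alg^\nu_M$ exactly when it commutes with $\Dir_X$ for every $X\in\vf{\Alg^\nu_M}$. By \eqref{tgsp} the tangent space $\T_\la\Alg^\nu_M$ is spanned by the coordinate fields $\pr/\pr u_i$ (together with $\pr/\pr v$ when $\nu=v$), and $\Dir_X$ is linear in $X$ over functions on $\Alg^\nu_M$, with $\Dir_{\pr/\pr u_i}$ acting on a family $\la\mapsto G(\la)$ as the ordinary partial derivative $\pr G/\pr u_i$ of the parametrised object (because $\la+\eps\,\pr\la/\pr u_i$ agrees to first order with $\la(\ldots,u_i+\eps,\ldots)$), and likewise for $\pr/\pr v$. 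Hence it suffices to check that $\pr$ and $\ell$ commute, as operators on the relevant spaces of series, with each $\pr/\pr u_i$ and, when $\nu=v$, with $\pr/\pr v$.

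First I would dispose of the easy cases. The derivation $\pr$ of \eqref{der} is a fixed differential operator in the variable $p$ whose only coefficient, $p^s$, contains none of the coordinates $u_i,v$; hence it commutes with every $\pr/\pr u_i$ and with $\pr/\pr v$ by equality of mixed partials, which already gives invariance on all three families $\Alg^\nu_M$. The same observation settles the endomorphisms \eqref{ell} at the fixed marked points $\nu=0,\infty$: there $\ell$ is assembled only from multiplication by $p^{\pm s}$ and the truncation $\brac{\,\cdot\,}^\nu_{\me 0}$ at a point independent of $(u,v)$, all of which are operations carried out in the $p$‑variable alone, so $\ell$ is a $\la$‑independent $\Km$‑linear endomorphism of $\Alg$; since such a fixed operator manifestly commutes with the $\eps$‑differentiation defining $\Dir_X$, it is invariant in the sense of \eqref{rr1}. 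For $\nu\in\{0,\infty\}$ there is thus nothing to compute.

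The hard part will be the movable marked point $\nu=v$, where the truncation — and hence $\ell$ — depends on the coordinate $v$: commutation with each $\pr/\pr u_i$ is still automatic, but commutation with $\pr/\pr v$ must be verified directly. Here I would use the decomposition $\ell=\tilde\ell+P$ of \eqref{ell2}, with $\tilde\ell(f)=p^s\brac{p^{-s}f}^v_{\me 0}-\frac12 f$ and $P(f)=v\brac{p^{-1}f}^v_0$ (so $P=0$ when $s=0$), and check commutation with $\pr/\pr v$ term by term on the Laurent monomials $(p-v)^k$ expanded at $p=v$. The elementary point is that $\pr/\pr v$ commutes with $\brac{\,\cdot\,}^v_{\me 0}$: differentiating $(p-v)^k$ in $v$ gives $-k(p-v)^{k-1}$, so the only term that could migrate across the cutoff between nonnegative and negative powers carries the factor $k|_{k=0}=0$ and drops out. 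For $s=0$ this already yields commutativity of $\ell$ with $\pr/\pr v$. For $s=1$ the dressing by $p^{\pm1}$ interferes with $\pr/\pr v$ (the expansion of $p^{-1}$ at $p=v$ has $v$‑dependent coefficients), but this interference is exactly cancelled by the corresponding contribution of $P$, for which one uses the two monomial identities recorded in the proof of Proposition~\ref{propll}. The main obstacle is therefore confined to this last bookkeeping, in the case $\nu=v$, $s=1$ that the paper has already singled out as exceptional; once the decomposition \eqref{ell2} is written down it is routine. $\qed$
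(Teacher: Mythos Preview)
Your reduction to commutation with the coordinate vector fields is exactly what the paper does; its entire proof is the single sentence following the proposition, so on the level of strategy you match it and your argument is correct.

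One simplification is worth recording. In the case $\nu=v$, $s=1$ there is in fact no ``interference'' coming from the factors $p^{\pm1}$: as operators on functions of $p$ they are $v$--independent, so multiplication by them commutes with $\pr/\pr v$ outright. Since you already verified that $\brac{\,\cdot\,}^v_{\me 0}$ commutes with $\pr/\pr v$ (the $k=0$ boundary term vanishes), the defining form
\[
\ell(f)=p\brac{p^{-1}f}^v_{\me 0}-\tfrac12 f
\]
commutes with $\pr/\pr v$ immediately, and no decomposition $\ell=\tilde\ell+P$ is needed. What actually fails to commute with $\pr/\pr v$ is the shifted truncation $\brac{\,\cdot\,}^v_{\me 1}$ appearing in $\tilde\ell$ (the monomial $(p-v)^1$ crosses the cutoff under $\pr/\pr v$), and $P$ then compensates for that; so your route through \eqref{ell2} works, but the detour is avoidable. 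The invocation of the two monomial identities from the proof of Proposition~\ref{propll} is also unnecessary here: those identities concern products of two arbitrary series and are used to verify the Rota--Baxter relation \eqref{pl}, not the commutation with $\pr/\pr v$.
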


To show the above proposition it is enough to check that the defined above $\pr$ and $\ell$ commute with the derivations with respect to coordinates on $M$ spanning the tangent spaces \eqref{tgsp}.

Using \eqref{ell}, the multiplication \eqref{fmult} in $\Alg$ takes the form
\eq{\label{mn}
\begin{split}
	\alpha\circ \beta &= p^s\brac{\la_p \alpha}^\nu_{\me 0}\beta + p^s\alpha\brac{\la_p \beta}^\nu_{\me 0} 
	- p^s\la_p \alpha\beta\\
	&=  p^s\la_p \alpha\beta - p^s\brac{\la_p\alpha}^\nu_{< 0}\beta - p^s\alpha\brac{\la_p\beta}^\nu_{< 0}.
\end{split}
}
Our intention is to use \eqref{mn} to define multiplication in the respective cotangent bundles.
When $\alpha, \beta, \lambda\in\Alg$ are meromorphic functions the above product is well-defined 
and  the multiplication is closed in $\Alg$.  This fact will be important in Section~\ref{frobsec}. 
However, when we deal with formal Laurent series the multiplication \eqref{mn}, in general, is ill-defined,  since the product $\la_p \alpha\beta$ yields doubly-infinite series, so the multiplication in this case is not closed, in particular,
in the (dual) Laurent spaces $\tAlg^\nu$.  

One can resolve the above issue imposing some additional restrictions. For instance, we could restrict ourselves defining the underlying manifold subspaces $\Alg_M^\nu$ and their tangent and cotangent spaces to deal only with `nonformal' Laurent series, which are always expansions of some meromorphic functions. In this case \eqref{mn} would be closed in $\Alg$. Alternatively we could assume that the elements $\lambda\in\Alg_M^\nu$ have only finitely many nonzero coefficients, which in this case is sufficient for the multiplication to be closed in~$\tAlg^\nu$. 

Nevertheless, to stay in the more general setting we can project the results of the operation \eqref{mn} on the respective cotangent spaces $\T_\la^*\Alg_M^\nu$.  After such projection the multiplication preserves associativity if
\eq{\label{ideal2}
		f\circ (\T_\la\Alg^\nu_M)^\perp \subset (\T_\la\Alg^\nu_M)^\perp
}
holds for arbitrary $f\in \tAlg^\nu$ or $\Alg^\nu$. If the above relation holds it means that $(\T_\la\Alg^\nu_M)^\perp$ is `formally' an ideal with respect to the multiplication \eqref{mn} within the space of doubly-infinite Laurent series. Since in this case the multiplication is ill-defined we cannot  use directly the related quotient structure to define associative algebra in $\T_\la^*\Alg^\nu_M$.  However, the projection eliminates this issue and is legitimate when  \eqref{ideal2} is valid.

\begin{proposition}\label{multproj}
\begin{itemize}
\item[]
\item For $\nu=\infty$ and $s=0,1$ the projection of the multiplication \eqref{mn} on the cotangent bundle $\T_\la^*\Alg_M^\infty$ 
is legitimate for arbitrary $\deg_\infty\la$ and after the projection it takes the form:
\eqq{
	\alpha\circ \beta &= p^s\brac{\alpha\brac{\la_p \beta}^\infty_{\me 0} + \brac{\la_p \alpha}^\infty_{\me 0}\beta
	- \la_p \alpha\beta}^\infty_{\me 1-n-s}.
}
\item For $\nu=0$ the projection of \eqref{mn} on $\T_\la^*\Alg_M^0$ is legitimate in the case of $s=0$
only if $\deg_0\la=0$ and in the case of $s=1$ for arbitrary $\deg_0\la$. In these  cases we obtain
\eqq{
	\alpha\circ \beta =  p^s\brac{\la_p \alpha\beta - \brac{\la_p\alpha}^0_{< 0}\beta - \alpha\brac{\la_p\beta}^0_{< 0}}^0_{<m} .
}
\item For $\nu=0$ and $s=0,1$ the projection of \eqref{mn} on $\T_\la^*\Alg_M^v$ is legitimate for arbitrary $\deg_v\la$
and after projection the multiplication takes the form
\eqq{
\alpha\circ \beta =  
\begin{cases}
\brac{p^s\la_p \alpha\beta - p^s\brac{\la_p\alpha}^v_{< 0}\beta - p^s\alpha\brac{\la_p\beta}^v_{< 0}}^v_{\les m} &\text{for $m\neq 0$}\\
\brac{p^s\la_p \alpha\beta - p^s\brac{\la_p\alpha}^v_{< 0}\beta - p^s\alpha\brac{\la_p\beta}^v_{< 0}}^v_{< 0} &\text{for $m= 0$}.
\end{cases}
}
\end{itemize}
In all the above cases the related multiplication is abelian and associative in the respective cotangent spaces $\T_\la^*\Alg^\nu_M$.
\end{proposition}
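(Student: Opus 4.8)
The plan is to verify Proposition~\ref{multproj} case by case, exploiting two separate facts: first, that the unprojected multiplication \eqref{mn} is honestly associative and commutative (as a formal operation within the space of doubly-infinite Laurent series, this being guaranteed by Proposition~\ref{prop1} together with the Rota-Baxter identity from Proposition~\ref{propll}); and second, that the relevant orthogonal complements $(\T_\la\Alg^\nu_M)^\perp$ are 'formally' ideals, i.e.\ that \eqref{ideal2} holds. Granting these two inputs, the projected product descends to a well-defined associative abelian multiplication on the quotient $\T_\la^*\Alg^\nu_M \cong \Alg^\nu/(\T_\la\Alg^\nu_M)^\perp$ exactly as in the discussion preceding the proposition, so the whole proof reduces to (a) identifying $(\T_\la\Alg^\nu_M)^\perp$ explicitly in each case, (b) checking the ideal property \eqref{ideal2}, and (c) simplifying the formula \eqref{mn} modulo that complement to obtain the stated closed forms.

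First I would record, in each of the three geometries $\nu=\infty$, $\nu=0$, $\nu=v$, the explicit description of the cotangent space and hence of $(\T_\la\Alg^\nu_M)^\perp$. For $\nu=\infty$ we have $\T_\la^*\Alg^\infty_M\cong\tAlg^\infty_{\me 1-n}$, so the complement consists of series in $\tAlg^\infty$ whose expansion starts below degree $1-n$; projection onto the cotangent space is precisely $[\,\cdot\,]^\infty_{\me 1-n}$, which after pulling out the overall $p^s$ and accounting for the shift $1-n-s$ gives the asserted formula. For $\nu=0$ and $\nu=v$ one argues identically with $[\,\cdot\,]^\nu_{<m}$ (resp.\ $[\,\cdot\,]^v_{\les m}$ or $[\,\cdot\,]^v_{<0}$). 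The two genuinely substantive points are the \emph{legitimacy} conditions: for $\nu=\infty$ and $\nu=v$ one must check that \eqref{ideal2} holds for \emph{all} $\deg\la$, whereas for $\nu=0$, $s=0$ it holds only when $\deg_0\la=0$. This is a degree-counting computation: writing $f\in\Alg^\nu$ and $g\in(\T_\la\Alg^\nu_M)^\perp$, one expands $f\circ g$ using \eqref{mn} and checks, term by term, that each summand $p^s\la_p fg$, $p^s[\la_p f]^\nu_{<0}g$, $p^s f[\la_p g]^\nu_{<0}$ again lies in the complement; the role of $\la_p$ (whose leading/trailing degree is controlled by $\deg_\nu\la$ and $s$) is what forces the restriction $\deg_0\la=0$ in the exceptional $\nu=0$, $s=0$ case, since only then does multiplication by $\la_p$ not push a term in $\tAlg^0_{<m+s}$ out of range.

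The main obstacle I anticipate is precisely this bookkeeping of degrees at the boundary of the truncation in the $\nu=0$ and $\nu=v$ cases, where the answer bifurcates according to whether $m=0$ or $m\neq 0$ (reflecting the corresponding bifurcation of $\T_\la\Alg^v_M$ already recorded above). One has to be careful that the projection commutes with the multiplication in the sense required, i.e.\ that $[f\circ g]^\nu_{\text{trunc}}$ depends only on $[f]^\nu_{\text{trunc}}$ and $[g]^\nu_{\text{trunc}}$; this is equivalent to the ideal property \eqref{ideal2}, but verifying it requires checking that the \emph{difference} $f\circ g - [f]\circ[g]$ lies in the complement, which again is a degree estimate on the three summands of \eqref{mn}. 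Once legitimacy is established, associativity and commutativity of the projected product are automatic — commutativity is visible from the symmetry of \eqref{mn} in $\alpha,\beta$, and associativity is inherited from the (formal) associativity of the unprojected operation via the quotient construction, exactly the argument given in the paragraph preceding \eqref{ideal2}. I would therefore present the proof as: (1) identify complements; (2) prove the legitimacy/ideal conditions by degree counting, flagging the exceptional $\nu=0$, $s=0$ restriction; (3) simplify \eqref{mn} modulo the complement to read off the three displayed formulas; (4) conclude associativity and commutativity from the quotient construction.
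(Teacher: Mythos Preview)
Your approach is essentially the paper's: identify the orthogonal complements, verify the ideal property \eqref{ideal2} by a degree count, and then pass to the quotient to get associativity and commutativity for free. One step in your plan needs sharpening, however. You propose to check that each of the three summands $p^s\la_p fg$, $p^s[\la_p f]^\nu_{<0}g$, $p^s f[\la_p g]^\nu_{<0}$ \emph{separately} lies in the complement when $g\in(\T_\la\Alg^\nu_M)^\perp$. This fails as stated: for $\nu=\infty$, $f\in\tAlg^\infty$, and $g\in\Alg^\infty_{\les -n}$, the product $p^s\la_p fg$ is typically a genuinely doubly-infinite Laurent series and hence does not lie in any subspace of $\Alg^\infty$ or $\tAlg^\infty$. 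The paper's proof avoids this by first exploiting a cancellation: since $\la_p\in\Alg^\infty_{\les n-1}$, one has $\la_p g\in\Alg^\infty_{<0}$ and therefore $[\la_p g]^\infty_{<0}=\la_p g$; this makes the first and third summands of \eqref{mn} cancel, leaving
\[
f\circ g \;=\; -\,p^s[\la_p f]^\infty_{<0}\,g \;\subset\; \Alg^\infty_{<s-n}\;\subset\;\Alg^\infty_{\les -n}.
\]
The analogous observation (with $[\la_p g]^\nu_{\me 0}=\la_p g$) handles the finite-$\nu$ cases and is also what singles out the exceptional restriction at $\nu=0$, $s=0$: there an extra term survives unless $m=0$. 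Once you insert this reduction step before the degree bound, your outline coincides with the paper's argument. Also, a small correction: the complement $(\T_\la\Alg^\infty_M)^\perp$ should be identified with $\Alg^\infty_{\les -n}$ inside $\Alg^\infty$, not with series in $\tAlg^\infty$ below degree $1-n$.
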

\begin{proof}
Let's first consider the case of $\nu=\infty$. The orthogonal complement to the tangent space $\T_\la\Alg^\infty_M$ is given by $(\T_\la\Alg^\infty_M)^\perp\cong\Alg^\infty_{\les-n}$.  Since $\la_p\in\Alg^\infty_{<n}$ it follows that  $\brac{\la_p\Alg^\infty_{\les-n}}^\infty_{< 0} = \la_p\Alg^\infty_{\les-n}$.
Hence, for arbitrary $f$ from $\Alg^\infty$ or $\tAlg^\infty$ and $s=0,1$:
\eqq{
	f\circ \Alg^\infty_{\les-n} = - p^s\brac{\la_pf}^\infty_{< 0}\Alg^\infty_{\les-n}\subset \Alg^\infty_{<s-n}\subset \Alg^\infty_{\les-n} . 
}
This means that $\Alg^\infty_{\les-n}$ is 'formally' an ideal in the space of doubly-infinite Laurent series
and that the projection on the cotangent space $\T_\la^*\Alg_M^\infty\cong   \tAlg^\infty_{\me 1-n}$ is legitimate. 
 
 For $\nu=0$ we have $(\T_\la\Alg^0_M)^\perp \cong \Alg^0_{\me m+s}$, $\la_p \in\Alg^0_{\me -m-1}$ for $m\neq 0$ and $\la_p \in\Alg^0_{\me 0}$ for $m=0$. Thus,
\eqq{
	\brac{\la_p\Alg^0_{\me m+s}}^0_{\me 0} = \begin{cases}
\la_p\Alg^0_{\me m+s} +(s-1)\brac{\la_p\Alg^0_{\me m+s}}^0_{-1}p^{-1} & \text{for $m\neq 0$}\\
\la_p\Alg^0_{\me s}  & \text{for $m=0$}
\end{cases}
}
and consequently for arbitrary $f$ from $\Alg^0$ or $\tAlg^0$:
\eqq{
	f\circ \Alg^0_{\me m+s} = \begin{cases}
p^s\brac{\la_pf}^0_{\me 0}\Alg^0_{\me m+s} +(s-1)p^{s-1}f\brac{\la_p\Alg^0_{\me m+s}}^0_{-1} & \text{for $m\neq 0$}\\
p^s\brac{\la_pf}^0_{\me 0}\Alg^0_{\me s}  & \text{for $m=0$}.
\end{cases}
}
Hence, we have $f\circ \Alg^0_{\me m+s}\subset\Alg^0_{\me m+s}$ for $s=0$ only if $m=0$ and for $s=1$ and arbitrary $m$. Thus, only in these cases we are allowed to project the multiplication \eqref{mn} on $\T_\la^*\Alg^0_M \cong  \tAlg^0_{< m+s}$.

For $\nu=v$ the complement $(\T_\la\Alg^v_M)^\perp \cong \Alg^v_{> m}$ for $m\neq 0$ and
$(\T_\la\Alg^v_M)^\perp \cong \Alg^v_{\me 0}$ for $m= 0$. Since $\la_p \in\Alg^v_{\me -m-1}$ for $m\neq 0$ and $\la_p \in\Alg^v_{\me 0}$ for $m=0$, we have
\eqq{
\brac{\la_p(\T_\la^*\Alg^v_M)^\perp}^v_{\me 0} = \la_p(\T_\la^*\Alg^v_M)^\perp.
} 
Hence, for arbitrary $f$ from $\Alg^v$ or $\tAlg^v$:
\eqq{
f\circ (\T_\la\Alg^v_M)^\perp = p^s\brac{\la_pf}^v_{\me 0}(\T_\la\Alg^v_M)^\perp \subset (\T_\la\Alg^v_M)^\perp,
}
which holds without any further restrictions.
\end{proof}

If we restricts ourselves to deal only with `nonformal' Laurent series it follows that for all the cases from Proposition~\ref{multproj}  the orthogonal complements  $(\T_\la\Alg^\nu_M)^\perp$ are ideals in $\Alg$ with respect to the multiplication \eqref{mn} and that the
multiplication in the cotangent bundles $\T_\la^*\Alg^\nu_M$ can be defined by means of the quotient structure $\T_\la^*\Alg^\nu_M\cong \Alg/(\T_\la\Alg^\nu_M)^\perp$. In the alternative restricted case, when only finitely many coefficients in $\la\in\Alg_M^\nu$ are nonzero,  it follow that for the cases of the above proposition $\tAlg^\nu\cap (\T_\la\Alg^\nu_M)^\perp$ are ideals in $\tAlg^\nu$ with respect \eqref{mn} and that we can define the multiplication in $\T_\la^*\Alg^\nu_M$ by means of the quotient structure such that $\T_\la^*\Alg^\nu_M\cong \tAlg^\nu/(\tAlg^\nu\cap (\T_\la\Alg^\nu_M)^\perp)$.

The respective contravariant metric is given on each $\Alg_M^\nu$ by the formula \eqref{fmetric}, that is 
\eq{\label{met}
      \eta^*(\alpha,\beta) = \tr_{\nu}\bra{\alpha\circ\beta} = \tr_{\nu}\bra{\alpha^\sharp\beta}\qquad \alpha,\beta\in\of{\Alg_M^\nu},
}
where \eqref{mn} is used.\footnote{Here, necessarily, the multiplication in the form \eqref{mn}, that is not projected on the 
tangent spaces, must be used. Otherwise, in some cases, the outcome of the trace form might identically vanish, which would be incorrect.}	 
The related canonical isomorphism $\sharp:\T_\la^*\Alg_M^\nu\arrow \T_\la\Alg_M^\nu$ at $\la\in\Alg^\nu_M$ 
is such that
\eq{\label{sharp}
	\alpha^\sharp = p^s\brac{\la_p \alpha}^\nu_{\me 0} - p^s\la_p \brac{\alpha}^\nu_{\me 0} 
	= p^s\la_p \brac{\alpha}^\nu_{< 0} - p^s\brac{\la_p \alpha}^\nu_{< 0}.
}

The cotangent spaces $\T_\la^*\Alg^\nu_M$ are defined modulo orthogonal complements
to the tangent spaces $\T_\la\Alg^\nu_M$. Thus we must verify when the metric is consistent
with the related quotient structures, that is we must check when the following relation holds:
\eqq{
		\tr_\nu \bra{f\circ (\T_\la\Alg^\nu_M)^\perp} = 0,
}
where $f\in\tAlg^\nu$ or $\Alg^\nu$.

\begin{proposition}\label{qstr}
The metric \eqref{met} is consistent with the quotient structure of cotangent spaces $\T_\la^*\Alg^\nu_M$:
\begin{itemize}
\item for $\nu=\infty$ and $s=0,1$ if $\deg_\infty\la \me 1$;
\item for $\nu=0$ and $s=0$ if $\deg_0\la =0$;
\item for $\nu=0$ and $s=1$ if $\deg_0\la\me -1$;
\item for finite $\nu=v$ and $s=0,1$ if $\deg_v\la \me -1$.
\end{itemize}
\end{proposition}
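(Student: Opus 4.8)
The plan is to follow the template of the proof of Proposition~\ref{multproj}: first phrase the assertion as a single residue-level vanishing condition, then reduce it to the identity $g^\sharp=0$ on an orthogonal complement, and finally check the four cases against the explicit data recalled in Section~\ref{ss3}. Concretely, extending the bilinear form \eqref{met} to all of $\Alg^\nu$ (and $\tAlg^\nu$) via the non-projected multiplication \eqref{mn}, the form descends to the quotient $\T_\la^*\Alg^\nu_M\cong\tAlg^\nu/\bigl(\tAlg^\nu\cap(\T_\la\Alg^\nu_M)^\perp\bigr)$ precisely when $\tr_\nu(f\circ g)=0$ for every $g\in(\T_\la\Alg^\nu_M)^\perp$ and every $f\in\tAlg^\nu$ (or $\Alg^\nu$); this is the present analogue of condition \eqref{req2}.

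The key simplification is that the invariance of $\sharp$, namely $\eta^*(\alpha,\beta)=\tr_\nu(\alpha^\sharp\beta)$ from \eqref{met}, persists after extension to the whole algebra: a direct residue computation — moving principal parts across $\res_{p=\nu}$ and using that a product of two principal parts has vanishing residue — shows $\tr_\nu\bigl(f\circ g-f\,g^\sharp\bigr)=0$ for arbitrary formal series, where $g^\sharp=p^s[\la_p g]^\nu_{\me0}-p^s\la_p[g]^\nu_{\me0}$ is \eqref{sharp} (the doubly-infinite part of $\la_p fg$ simply does not reach the residue). Since the pairing $\tr_\nu:\Alg^\nu\times\tAlg^\nu\to\Cm$ is nondegenerate, it therefore suffices to prove that $g^\sharp=0$ for every $g\in(\T_\la\Alg^\nu_M)^\perp$.

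For the case check I would recall from Section~\ref{ss3} and the proof of Proposition~\ref{multproj} that $(\T_\la\Alg^\infty_M)^\perp\cong\Alg^\infty_{\les -n}$ with $n=\deg_\infty\la$, that $(\T_\la\Alg^0_M)^\perp\cong\Alg^0_{\me m+s}$ with $m=\deg_0\la$, and that $(\T_\la\Alg^v_M)^\perp\cong\Alg^v_{>m}$ with $m=\deg_v\la$ (here one uses that $\pr_v\la=-\la_p$ is already tangent, which is what fixes the shape of the complement at the moving point). In each admissible case the prescribed bound on $\deg_\nu\la$ is exactly what forces both $[g]^\nu_{\me0}=g$ (the relevant projection is trivial on $g$) and $[\la_p g]^\nu_{\me0}=\la_p g$ (the product $\la_p g$ has non-negative order, resp.\ non-positive degree at $\infty$), so that $g^\sharp=p^s\la_p g-p^s\la_p g=0$; the borderline pair $\deg_0\la=0$, $s=0$ survives precisely because $\la_p$ is then regular at the origin. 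For $\nu=\infty$ the bookkeeping is the dual one: $\deg_\infty g\les -n$ together with $\deg_\infty\la_p\les n-1$ gives $\deg_\infty(\la_p g)\les-1<0$ and $\deg_\infty g<0$, whence $g^\sharp=0$. Conversely, when the bound is violated one exhibits a $g$ in the complement with $g^\sharp\neq0$ — for instance, for $\nu=0$, $s=0$ and $m\me1$, a $g$ of order $1$ leaves a nonzero $p^{-1}$-term in $g^\sharp$ — which shows the listed conditions are sharp.

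The obstacle here is bookkeeping rather than anything conceptual. One must carry along the opposite truncation conventions at $\nu=\infty$ (projection by $[\,\cdot\,]^\infty_{\me l}$) and at finite $\nu$ (projection by $[\,\cdot\,]^\nu_{<l}$), and, above all, keep track of how the factor $p^s$ shifts orders at the origin: this is precisely why the complement at $\nu=0$ comes out as $\Alg^0_{\me m+s}$ rather than $\Alg^0_{>m}$, and hence why $s=1$ permits any $\deg_0\la\me-1$ while $s=0$ forces $\deg_0\la=0$, whereas at the moving point $\nu=v$, where $p^s$ is a unit, the bound $\deg_v\la\me-1$ is insensitive to $s$. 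A minor further point is that at $\nu=v$ the tangent space contains the $\pr_v$-direction, so the description of $(\T_\la\Alg^v_M)^\perp$ must account for $\la_p$ being tangent.
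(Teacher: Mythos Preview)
Your argument is correct and takes a somewhat different route from the paper's own proof. The paper simply recycles the computation from the proof of Proposition~\ref{multproj}: there it was shown that for $g$ in the orthogonal complement one has $f\circ g$ landing in an explicit subspace (e.g.\ $f\circ\Alg^\infty_{\les-n}\subset\Alg^\infty_{<s-n}$ at $\nu=\infty$), and the proof of Proposition~\ref{qstr} then just applies $\tr_\nu$ to that subspace and reads off the degree bound needed for the residue to vanish. Your route instead establishes the identity $\tr_\nu(f\circ g)=\tr_\nu(f\,g^\sharp)$ for arbitrary series --- which is nothing but the symmetry of the extended bilinear form, verified concretely via $\res_{p=\nu}(AC-BD)=0$ when $A,C$ are the $\ge0$-parts and $B,D$ the $<0$-parts --- and then reduces the question to the single pointwise statement $g^\sharp=0$ on $(\T_\la\Alg^\nu_M)^\perp$, which follows by elementary degree counting against \eqref{sharp}. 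The paper's approach is shorter given that Proposition~\ref{multproj} is already in hand; yours is more self-contained and makes transparent \emph{why} the listed bounds on $\deg_\nu\la$ arise: they are exactly what forces both projections in \eqref{sharp} to act trivially on the complement. One small remark: once $g^\sharp=0$ is established you do not need nondegeneracy of the pairing to conclude $\tr_\nu(f\,g^\sharp)=0$; nondegeneracy only enters in the converse direction (sharpness of the bounds), which you treat separately.
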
 
\begin{proof}
For $\nu=\infty$ we have 
\eqq{
	\tr_\infty\bra{f\circ (\T_\la\Alg^\infty_M)^\perp} = -\res_{p=\infty}\Alg^\infty_{< -n}= 0 \quad\iff\quad n\me 1,
}
where $f\in \tAlg^\nu$ or $\Alg^\nu$. 

Let $\nu = 0$.  For $s=0$ and $m\neq 0$ one observes that $\tr_0\bra{f\circ (\T_\la\Alg^0_M)^\perp}$
is in general different from zero. in the remaining cases we see that
\eqq{
	\tr_0\bra{f\circ (\T_\la\Alg^0_M)^\perp} = 
\Alg^0_{\me m+s} = 0 \quad\iff\quad m\me -s,
}
where $f\in \tAlg^0$ or $\Alg^0$.

Analogously, for $\nu=v$ one finds that 
\eqq{
	\tr_0\bra{f\circ (\T_\la\Alg^v_M)^\perp} = 0 \quad\iff\quad m\me -1,
}
which finishes the proof. 
\end{proof}

\begin{proposition}\label{ndm} 
The metric \eqref{met}, at a generic point $\la\in\Alg_M^\nu$, is nondegenerate:
\begin{itemize}
\item for $\nu=\infty$ and $s=0,1$ if $\deg_\infty\la \me 1$;
\item for $\nu=0$ and $s=0$ if $\deg_0\la =0$;
\item for $\nu=0$ and $s=1$ if $\deg_0\la\me 1$ or $\deg_0\la=-1$;
\item for finite $\nu=v$ and $s=0,1$ if $\deg_v\la \me -1$.
\end{itemize}
In the remaining cases the metric is always degenerate or not well defined on $\Alg_M^\nu$.
Moreover, in the all above cases the metric \eqref{met} is compatible with
the multiplication \eqref{mn}, that is 
\eqq{
	\eta(\alpha,\beta\circ\gamma) = \eta(\alpha\circ\beta,\gamma)
}
for arbitrary $\alpha,\beta,\gamma\in\of{\Alg_M^\nu}$.

\end{proposition}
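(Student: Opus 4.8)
The metric \eqref{met} is non‑degenerate at $\la$ exactly when the canonical map $\sharp\colon\T_\la^*\Alg_M^\nu\to\T_\la\Alg_M^\nu$ of \eqref{sharp} is injective there: indeed $\eta^*(\alpha,\beta)=\tr_\nu(\alpha^\sharp\beta)$ and the pairing $\tr_\nu$ between $\T_\la\Alg_M^\nu$ and $\T_\la^*\Alg_M^\nu$ is perfect, so $\eta^*(\alpha,\cdot)\equiv0$ forces $\alpha^\sharp=0$. The plan is therefore to show that, in each case of the list, $\alpha^\sharp=0$ forces $\alpha$ to vanish as an element of $\T_\la^*\Alg_M^\nu\cong\tAlg^\nu/\bigl(\tAlg^\nu\cap(\T_\la\Alg_M^\nu)^\perp\bigr)$, whereas in the remaining cases either \eqref{met} fails to descend to this quotient or $\sharp$ has a non‑trivial kernel for every $\la$. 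I would run through the marked points $\nu\in\{\infty,0,v\}$ and the values $s=0,1$ separately, each time writing a representative $\alpha=\alpha_{\mathrm{reg}}+\alpha_{\mathrm{pol}}$ as its part with non‑negative powers of $(p-\nu)$ (the polynomial part when $\nu=\infty$) plus its polar part at $\nu$, and inserting this into \eqref{sharp}.

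The computation, to be carried out case by case, rests on two facts: $\la_p$ has pole order $\deg_\nu\la+1$ at a finite $\nu$ when $\deg_\nu\la\neq0$ (resp.\ top degree $\deg_\infty\la-1$ at $\infty$), and $\brac{\la_p\alpha_{\mathrm{reg}}}^\nu_{\me0}-\la_p\alpha_{\mathrm{reg}}=-\brac{\la_p\alpha_{\mathrm{reg}}}^\nu_{<0}$. Using these one finds $\alpha^\sharp=p^s\bigl(\brac{\la_p\alpha_{\mathrm{pol}}}^\nu_{\me0}-\brac{\la_p\alpha_{\mathrm{reg}}}^\nu_{<0}\bigr)$, a sum of two summands occupying disjoint ranges of $(p-\nu)$‑degree, so $\alpha^\sharp=0$ is equivalent to the simultaneous vanishing of both. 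The equation $\brac{\la_p\alpha_{\mathrm{pol}}}^\nu_{\me0}=0$ is a linear system in the finitely many coefficients of $\alpha_{\mathrm{pol}}$ — triangular when $\nu=\infty$, of Hankel type for finite $\nu$ — whose relevant square blocks have determinants that are, up to sign, powers of the top coefficient of $\la_p$; the equation $\brac{\la_p\alpha_{\mathrm{reg}}}^\nu_{<0}=0$ says $\la_p\alpha_{\mathrm{reg}}$ is regular at $\nu$ (polynomial, when $\nu=\infty$). In every configuration of the list one of these forces the corresponding part of $\alpha$ to vanish for \emph{all} $\la$ of the prescribed degree — the triangular system when $\nu=\infty$, and the regularity condition for finite $\nu$ with $\deg_\nu\la\me1$, where the pole order $\deg_\nu\la+1$ of $\la_p$ exceeds the maximal $(p-\nu)$‑degree $\deg_\nu\la$ allowed in $\T_\la^*\Alg_M^\nu$ — while the complementary condition forces the other part to vanish at a generic $\la$, i.e.\ off the discriminant on which those determinants vanish, or on which $\la$ becomes rational when $\nu=\infty$ (in particular off the monomials $\la=p^n$, $(p-\nu)^{\pm m}$). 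Together this yields $\alpha=0$, hence non‑degeneracy throughout the list.

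For the remaining configurations: whenever $\deg_\nu\la$ lies outside the range allowed in Proposition~\ref{qstr} for that $(\nu,s)$, the bilinear form \eqref{met} is not compatible with the quotient $\T_\la^*\Alg_M^\nu$ and is simply not well defined on the cotangent bundle, which covers all remaining cases except $\nu=0,\,s=1,\,\deg_0\la=0$. In that borderline case $\la_p$ is regular at $0$, so the regularity argument produces no constraint, and indeed the constant $1$‑form $\alpha\equiv1$ — a non‑zero element of $\T_\la^*\Alg_M^0\cong\tAlg^0_{<1}$ since $1\notin(\T_\la\Alg_M^0)^\perp$ — satisfies $\alpha^\sharp=p\brac{\la_p}^0_{\me0}-p\la_p\brac{1}^0_{\me0}=p\la_p-p\la_p=0$, so the metric is degenerate for every such $\la$. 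Finally, the compatibility $\eta(\alpha,\beta\circ\gamma)=\eta(\alpha\circ\beta,\gamma)$ is immediate once associativity is in place: both sides equal $\tr_\nu$ of a triple $\circ$‑product, which is an unambiguous number in the cases at hand — directly for finitely supported $\la$, and in general after projecting onto $\T_\la^*\Alg_M^\nu$, the projection being harmless for the trace by the condition checked in Proposition~\ref{qstr} — and is symmetric under reassociation because the multiplication \eqref{mn} is associative by Propositions~\ref{prop1} and~\ref{propll} and $\tr_\nu$ is a trace form.

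The main obstacle is the sheer case bookkeeping combined with the infinite‑dimensional subtleties: making the "generic $\la$"/discriminant statements precise within the spaces of formal Laurent series, and handling the finite moving point $\nu=v$, where $v$ is itself a coordinate, so that $\T_\la\Alg_M^v$ contains the direction $\la_p$, all degree counts are shifted by one, and the splitting of $\alpha^\sharp$ together with the Hankel‑type argument must be reorganised accordingly.
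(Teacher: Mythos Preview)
Your approach is correct and reaches the same conclusion, but it proceeds differently from the paper. The paper argues via the \emph{image} of $\sharp$: for each $(\nu,s)$ it records the degree ranges contributed by the two summands in \eqref{sharp}, reads off the conditions under which these ranges fill out $\T_\la\Alg_M^\nu$, and then simply asserts $\ker\sharp=\emptyset$ at a generic point without further detail. You instead argue via the \emph{kernel}, splitting $\alpha=\alpha_{\mathrm{reg}}+\alpha_{\mathrm{pol}}$ and showing that $\alpha^\sharp=0$ forces each piece to vanish separately. Your route is more explicit about \emph{why} the kernel is generically trivial --- the triangular/Hankel systems with determinants given by powers of the leading coefficient of $\la_p$ --- and you supply a concrete kernel element $\alpha\equiv 1$ for the borderline excluded case $\nu=0$, $s=1$, $\deg_0\la=0$, which the paper does not spell out. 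The paper's image computation is shorter and sidesteps organising the argument around the regular/polar decomposition; your kernel computation buys a clearer picture of the discriminant. Both treatments leave the same infinite-dimensional subtlety --- the ill-definedness of products such as $\la_p\alpha$ when both factors are genuinely infinite series --- at the informal level you acknowledge at the end, and both take ``generic $\la$'' on faith rather than pinning down the locus precisely. Your handling of the invariance $\eta(\alpha,\beta\circ\gamma)=\eta(\alpha\circ\beta,\gamma)$ via associativity and Propositions~\ref{multproj} and~\ref{qstr} matches the paper's.
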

\begin{proof} Let $\la\in\Alg^\infty_M$, then $\la_p\in\Alg_{\les n-1}^\infty$. Thus for $\nu=\infty$
the map \eqref{sharp} has the image:   
\eqq{
\im\sharp =\Alg^\infty_{\les n+s-2}\oplus \Alg^\infty_{\les s-1}.
}
Similarly, for $\la\in\Alg_M^0$ we have $\la_p\in\Alg_{\me -m-1}^0$ for $m\neq 0$ and
$\la_p\in\Alg_{\me 1}^0$ for $m=0$. Hence, for $\nu=0$:
\eqq{
\im\sharp = \begin{cases}
 \Alg^0_{\me s}\oplus \Alg^0_{\me -m+s-1}     & \text{for $m\neq 0$} \\
 \Alg^0_{\me s}     & \text{for $m=0$}.
\end{cases}
}
For $\la\in\Alg_M^v$ we have $\la_p\in\Alg_{\me -m-1}^v$ for $m\neq 0$ and
$\la_p\in\Alg_{\me 0}^v$ for $m=0$. Since $p^s\in\Alg^v_{\me 0}$, for finite $\nu= v$ we have
\eqq{
\im\sharp= \begin{cases}
 \Alg^v_{\me 0}\oplus \Alg^v_{\me -m-1}   & \text{for $m\neq 0$} \\
\Alg^v_{\me 0}     & \text{for $m=0$}.
\end{cases}
}
Requiring that $\im\sharp = \T_\la\Alg^\nu_M$, we get the conditions on the degrees
of $\la\in\Alg^\nu_M$. Besides, in all these cases $\ker\sharp = \emptyset$ at a generic point $\la\in\Alg_M^\nu$.

In all the above cases the multiplication \eqref{mn} is well-defined in the cotangent bundles $\T^*\Alg^\nu_M$
by means of Proposition~\ref{multproj} as well as the metric \eqref{met} is also consistent with the quotient structures of $\T_\la^*\Alg^\nu_M$, Proposition~\ref{qstr}. As result in these cases the metric \eqref{met} is compatible with the multiplication \eqref{mn}. 
\end{proof}

\begin{lemma}\label{uvf}
In each case from the Proposition~\ref{ndm}
there is a Frobenius algebra structure defined in the cotangent bundles  $\T^*\Alg^\nu_M$.
The respective unit vector fields are given by
\eq{\label{unitvf}
e = 
\begin{cases}
1-\la_p & \text{for $\nu=\infty$, $s=0$ and $\deg_\infty\la = 1$,}\\
1-\frac{1}{u_1}\la_p &  \text{for $\nu=0$, $s=1$ and $\deg_0\la = -1$,}\\
1 & otherwise.
\end{cases}
} 
So, almost always $e=1$. In all the above cases but one the respective unit vector fields are flat. The exception is the case of $\nu=0$, $s=1$ and $\deg_0\la= -1$.
\end{lemma}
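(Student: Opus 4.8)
Beyond what is already established the statement needs rather little: by Proposition~\ref{ndm} the multiplication \eqref{mn} descends, in each of the listed cases, to a commutative, associative, $\eta^*$-compatible product on $\T^*\Alg_M^\nu$ with $\eta^*$ nondegenerate. So the only things to check are (a) that this product is unital, and to read off the resulting unit vector field, and (b) that the unit vector field is flat. For (b) I will use the criterion recalled after Definition~\ref{frob} (Lemma~2.16 of \cite{Her}): $e$ is flat iff $d\eps=0$ and $\Lie_e\eta=0$; the first holds in every case by Theorem~\ref{main}(iii), and $\Lie_e\eta=0$ is equivalent to $\Lie_e\eta^*=0$, so the question reduces to $\Lie_e\eta^*=0$, i.e.\ to the connection computation $\nabla e=0$.

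\emph{Unit and identification of $e$.} A $1$-form $\eps$ is a unit for $\circ$ iff $\eps\circ\alpha=\alpha$ for all $\alpha\in\T_\la^*\Alg_M^\nu$; applying $\sharp$, using \eqref{iso} together with \eqref{arel}, and using injectivity of $\sharp$ (nondegeneracy), this becomes $(e-1)\,\ell(\la'\alpha)+\la'\,\ell^*\bigl((e-1)\alpha\bigr)=0$ for all such $\alpha$, where $e:=\eps^\sharp$. This holds trivially for $e=1$, and $e=1$ is an admissible choice precisely when $1\in\T_\la\Alg_M^\nu$. Comparing with the images $\im\sharp$ computed in the proof of Proposition~\ref{ndm}, $1\in\T_\la\Alg_M^\nu$ in every listed case except $(\nu,s,\deg_\nu\la)=(\infty,0,1)$ and $(0,1,-1)$. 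In those two cases I set $e:=1-\la_p$, resp.\ $e:=1-u_1^{-1}\la_p$; a degree count against the description of $\Alg_M^\nu$ in Section~\ref{ss3} shows $e\in\T_\la\Alg_M^\nu$, and the displayed identity is verified from the explicit formulas \eqref{ell}, \eqref{elld} of Proposition~\ref{propll}: for $s=0$ one uses $\ell^*=-\ell$, so with $e-1=-\la_p$ and $\la'=\la_p$ the two summands cancel; for $s=1$, $\nu=0$ one uses the elementary relation $\ell(pg)+p\,\ell^*(g)=0$ (immediate from \eqref{ell}--\eqref{elld}) applied with $g=\la_p\alpha$. In all cases $\eps:=\sharp^{-1}(e)$ then lies in $\T_\la^*\Alg_M^\nu$ by nondegeneracy, so $\circ$ is unital; this proves the Frobenius-algebra structure and formula \eqref{unitvf}.

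\emph{Flatness.} Lowering the index, $\nabla e=(\nabla\eps)^\sharp$ with $\eps=e^\flat$; feeding \eqref{iso} and the requirement \eqref{frel} into the connection formula \eqref{pf} (taken with $E=1$, $r=\ell^*$, $r^*=\ell$, which applies here by Lemma~\ref{lemma1} and the invariance of $\ell^*$) gives, for every $X\in\vf{\Alg_M}$,
\eqq{
\nabla_X e \;=\; \Dir_X e \;+\; \bigl(\ell^*(X\eps)-X\,\ell^*(\eps)\bigr)' ,
}
and by \eqref{elld} the bracket equals $X\,[\eps]^\nu_{\me 0}-[X\eps]^\nu_{\me 0}$. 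When $e=1$ one has $\Dir_X e=0$, and $\nabla e=0$ follows from the criterion above: the flow of the constant vector field $1$ is the translation $\la\mapsto\la+t$, which leaves $\la'=\pr\la$ unchanged (because $\pr 1=0$) and — when the marked point does not move along the flow, i.e.\ for $\nu=\infty$, for $\nu=0$, and for $\nu=v$ with $\deg_v\la\me 0$ — also leaves the tangent and cotangent spaces unchanged, hence preserves $\eta^*$; the remaining subcase $\nu=v$, $\deg_v\la=-1$ needs a separate verification of $\Lie_e\eta^*=0$, since there the zero of $\la$ (hence $v$, $\tr_v$ and $\ell$) is dragged along. For $(\infty,0,1)$ a short direct computation gives $\eps=1$ (indeed $1^\sharp=\ell(\la_p)+\la_p\ell^*(1)=1-\la_p=e$), so $\eps'=0$ and $\ell^*(\eps)=-\tfrac12$ is constant, while $\Dir_X e=-X'$; since every $X\in\T_\la\Alg_M^\infty$ in this case has only negative powers, the bracket is $[X]^\infty_{<0}=X$ and the formula yields $\nabla_X e=-X'+X'=0$. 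Finally, in the exceptional case $(0,1,-1)$ the vector field $e=1-u_1^{-1}\la_p$ does not generate a translation — the scalar $u_1^{-1}$ varies over $M$, so $\Dir_X e=u_1^{-2}(\Dir_X u_1)\la_p-u_1^{-1}p^{-1}X'$ — and evaluating the displayed formula shows $\nabla_X e$ is not identically zero, so $e$ is not flat there.

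\emph{Main obstacle.} The substantive part is the flatness claim, and within it the delicate point is the subcase $\nu=v$, $\deg_v\la=-1$: there the flow of $e=1$ moves the marked point, so the tangent and cotangent spaces, the trace form $\tr_v$, and the endomorphism $\ell$ (built from the local expansion at $v$) all move with it, and one must still extract $\Lie_e\eta^*=0$. Handling $\Lie_e\eta^*$ uniformly across the three types of marked point, and pinning down why $(0,1,-1)$ is the unique non-flat case, is the crux; by comparison the unitality argument and the two explicit identities for $\ell,\ell^*$ are routine.
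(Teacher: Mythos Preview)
Your unitality argument is essentially the paper's: both pass through \eqref{arel} and the nondegeneracy of $\sharp$, and your identities $\ell^*=-\ell$ (for $s=0$) and $\ell(pg)+p\,\ell^*(g)=0$ (for $s=1$, $\nu=0$) are correct and do the job in the two exceptional cases.

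For flatness the routes diverge, and yours has a real gap precisely where you flag it. The paper does \emph{not} chase $\nabla_X e$ or a flow argument; it computes $\Lie_e\circ$ directly via \eqref{liecirc}. For $e=1$ this is immediate and \emph{uniform over all marked points}: since $\Dir_X 1=0$ one has $\Dir^*_\alpha 1=0$, so $(\Lie_1\circ)(\alpha,\beta)=(\Dir_1\circ)(\alpha,\beta)$; and because $\ell$ is invariant on $\Alg_M^\nu$ (this is the short Proposition right before Proposition~\ref{multproj}, and it does include $\nu=v$ with $v$ a coordinate), $(\Dir_1\circ)(\alpha,\beta)=\ell\bigl((\Dir_1\la')\alpha\bigr)\beta+\alpha\,\ell\bigl((\Dir_1\la')\beta\bigr)$ with $\Dir_1\la'=(1)'=p^s\pr_p 1=0$. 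Hence $\Lie_1\circ=0$ and, combined with $d\eps=0$, $e=1$ is flat in every case, including $\nu=v$, $\deg_v\la=-1$. The remaining two cases ($e=1-\la_p$ and $e=1-u_1^{-1}\la_p$) are handled by the same template; the second is where $\Lie_e\circ\neq 0$.

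Your flow heuristic misleads you here: yes, along the flow of $e=1$ the point $v$ moves when $\deg_v\la=-1$, but that is irrelevant to $\Lie_e\eta^*$ once you know $\ell$ commutes with all $\Dir_X$, $X\in\vf{\Alg_M^v}$. The projection $[\,\cdot\,]^v_{\ge 0}$ commutes with $\pr_v$ because $\pr_v(p-v)^i=-i(p-v)^{i-1}$ preserves the splitting $\Alg^v_{\ge 0}\oplus\Alg^v_{<0}$; so ``$\ell$ depends on $v$'' is compatible with invariance. With that in hand, the case you single out as the main obstacle evaporates, and the paper's $\Lie_e\circ$ computation closes it in one line. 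Your derived formula $\nabla_X e=\Dir_X e+(\ell^*(X\eps)-X\ell^*(\eps))'$ is correct and your verification for $(\infty,0,1)$ is clean, but for $e=1$ it forces you to know $\eps=1^\flat$, which is $\la$-dependent and awkward; the $\Lie_e\circ$ route sidesteps $\eps$ entirely.
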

\begin{proof}
In fact, the first part of the lemma is a corollary to Propositions~\ref{multproj}-\ref{ndm}. 
It is only left to show that in each case there exists unit element ($1$-form) for the respective multiplication
\eqref{mn}. We will take advantage from the nondegeneracy of the metric \eqref{met} and we will use the canonical isomorphism \eqref{sharp}.
Notice that all the unit vector fields \eqref{unitvf} belong to the respective tangent spaces. 
Let  $\eps$ be $1$-form such that $\eps^\sharp =e$. By relation \eqref{arel} for arbitrary $\beta\in\of{\Alg_M^\nu}$
we have 
\eqq{
	(\eps\circ\beta)^\sharp = p^s e\brac{\la_p\beta}^\nu_{\me 0} -p^s\la_p\brac{e\beta}^\nu_{\me 0} = p^s\la_p\brac{e\beta}^\nu_{< 0} - p^s e\brac{\la_p\beta}^\nu_{< 0}. 
}
In the case when $e=1$ one finds immediately  that  $(\eps\circ\beta)^\sharp =\beta^\sharp$.
In the remaining cases it is slightly more involved to check. This shows that $1$-forms $\eps$ are units in
the respective quotient algebras.

The co-unity $\eps$ is closed on $\Alg^\nu_M$, which is consequence of Proposition~\ref{propll} and
Proposition~\ref{prop}. Therefore, for the flatness of $e$ it is sufficient to show \eqref{cond} or 
\eqq{
\Lie_e\eta^* = 0\qquad\iff\qquad \Lie_e\circ = 0,
}
where the equivalence follows from the definition \eqref{met} and the fact that $\Lie_e\tr = 0$.
To obtain $\Lie_e\circ$ one should use similar computation as in
the proof of next Lemma~\ref{lemmaE}.  
When $e=1$ the computation of $\Lie_e\circ$  is straightforward, for $e=1-\la_p$ it is
slightly more involved.  In the case of $e=1-\frac{1}{u_1}\la_p$, which is the exception, the relation $\Lie_e\circ = 0$
does not hold.
\end{proof}

\begin{lemma}\label{lemmaE} 
For the Frobenius algebras from Lemma~\ref{uvf} the quasi-homogeneity relation, $\Lie_E\circ = (d-1)\circ$, in the case of $\nu=\infty$ hold for the Euler vector field and the weight given by
\eq{\label{evf}
	E = \la - \frac{1}{n}p\la_p\qquad\text{and}\qquad d=1 + (s-1)\frac{2}{n}.
}
In the renaming cases the Euler vector field has the form $E=\la$ with $d=1$.
\end{lemma}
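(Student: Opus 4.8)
The plan is to compute $\Lie_E\circ$ directly from the explicit form of the multiplication \eqref{mn} and the candidate Euler vector fields, and to match it against $(d-1)\circ$ to read off $d$. Recall that for a vector field $E$ regarded as an element-valued function on $\Alg_M^\nu$, the Lie derivative acting on the bilinear multiplication is
\eq{\label{lieEcirc}
(\Lie_E\circ)(\alpha,\beta) = \Dir_E(\alpha\circ\beta) - (\Dir_E\alpha)\circ\beta - \alpha\circ(\Dir_E\beta),
}
and since the multiplication \eqref{mn} depends on $\la$ only through $\la_p\equiv \pr_p\la$ (and the projection operators, which are $\la$-independent), the quantity $\Dir_E(\alpha\circ\beta)$ only feels $E$ through $\Dir_E\la_p = (\Dir_E\la)_p = (E)_p$. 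Here the key observation is that for both candidates $E$ is a power-series function of $\la$ (namely $E=\la - \frac1n p\la_p$ for $\nu=\infty$, and $E=\la$ otherwise), so that $\Dir_E\la = E$ and the directional derivative acts as multiplication by $\frac{dE}{d\la}$ on power-series quantities.

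First I would treat the trivial case $E=\la$ (all situations except $\nu=\infty$): then $\Dir_E\la = \la = \la_p\cdot\frac{d\la}{d\la_p}$ is not quite the right bookkeeping, so instead use $\Dir_{\la}\la_p = \la_p$, hence $\Dir_E$ scales $\la_p$ by $1$ and scales any $1$-form argument $\alpha\in\T_\la^*\Alg_M^\nu$ by... nothing — one must be careful, because $\alpha,\beta$ in \eqref{lieEcirc} are the \emph{values} of fixed $1$-forms, whose directional derivatives along $E$ are themselves determined by $E$. The cleanest route is: write $\alpha\circ\beta$ as a sum of terms each of the schematic form $p^s\,[\,\la_p\,\cdot\,]^\nu_{\ast}\,\cdot$, with total "$\la_p$-degree" one, apply \eqref{lieEcirc}, and note that every occurrence of $\Dir_E$ hitting $\la_p$ produces $\la_p$ back (degree preserved) while every occurrence hitting $\alpha$ or $\beta$ is cancelled by the corresponding subtracted term in \eqref{lieEcirc}. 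This immediately gives $(\Lie_E\circ)(\alpha,\beta)=\alpha\circ\beta$ when $\Dir_E\la_p=\la_p$, i.e. $\Lie_E\circ = 0\cdot\circ$... which forces $d=1$, consistent with \eqref{qhr} (where $\Lie_E\circ = (d-1)\circ$).

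For the case $\nu=\infty$ with $E=\la-\tfrac1n p\la_p$, the plan is the same but now $\Dir_E\la_p = (E)_p = \la_p - \tfrac1n(\la_p + p\la_{pp}) = \tfrac{n-1}{n}\la_p - \tfrac1n p\la_{pp}$; equivalently, since $\la = p^n+\dots$ has $p\la_p$ equal to $n$ times the "weight-grading" operator acting on $\la$, one checks that $E$ is the generator scaling $p\mapsto$ fixed, $\la\mapsto$ weight $1$, which scales $\la_p$ by weight $1-\tfrac1n\cdot(\text{something})$. Concretely I would introduce the grading $\deg p = \tfrac1n$ (for $s=1$) or the appropriate normalization, verify $E$ generates it, and track how the prefactor $p^s$ contributes an extra shift: $p^s$ carries grading $\tfrac{s}{n}$, and the two "$\la_p$" factors plus the cotangent weights must sum correctly. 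Counting: $\alpha\circ\beta$ has an overall $p^s$ and one $\la_p$; hitting $p^s\la_p$ by $\Lie_E$ yields $\bigl(\tfrac{s}{n}+\tfrac{n-1}{n}\bigr)p^s\la_p = \tfrac{n+s-1}{n}p^s\la_p$ plus the uncancelled pieces from $\alpha,\beta$ — but the latter are \emph{not} cancelled here because $\Dir_E\alpha\neq$ (weight of $\alpha$)$\cdot\alpha$ in general. The honest computation is therefore to use $\Dir_E\la_p$ explicitly, expand, and observe that the "derivative-of-projection-bracket" terms reorganize (using $\pr_p$ commuting with $[\cdot]^\infty_{\me\ell}$ up to boundary corrections) into $(d-1)(\alpha\circ\beta)$ with $d-1 = \tfrac{n+2s-2}{n}-1 = \tfrac{2s-2}{n}$, i.e. $d = 1+(s-1)\tfrac2n$ as claimed.

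\textbf{Main obstacle.} The delicate point is the bookkeeping of the projection operators $[\cdot]^\nu_{\me\ell}$ under $\pr_p$ and under $\Dir_E$: $\pr_p$ does not commute with $[\cdot]^\infty_{\me 0}$ — it shifts the truncation index — and $E$ acts on a $1$-form $\alpha\in\T_\la^*\Alg_M^\infty$ (an element of $\tAlg^\infty_{\me 1-n}$) by a Lie-derivative that mixes the coordinates. I expect the cleanest way to handle this is to choose, once and for all, a homogeneous basis of $1$-forms (dual to the coordinate vector fields $\pr\la/\pr u_i$), verify that $E$ acts diagonally on that basis with eigenvalues matching the stated grading, and then reduce the whole identity to the homogeneity (with respect to that grading, of weight $d-1$) of the closed-form multiplication in Proposition~\ref{multproj}. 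Once the grading is pinned down, the verification $\Lie_E\circ=(d-1)\circ$ is a one-line weight count; assembling the correct grading is the real work. The flatness $\nabla\nabla E=0$ and $\Lie_E\eta^*=(d-2)\eta^*$ then follow: the former because $E=\la$ or $E=\la-\tfrac1n p\la_p$ is at most linear in the coordinates $u_i$ (so its covariant Hessian vanishes after checking against \eqref{lcc2}), and the latter from $\Lie_E\eta^* = \Lie_E\tr_\nu(\alpha\circ\beta)$ together with $\Lie_E\tr_\nu = (\text{shift})\tr_\nu$ coming from the grading of the residue, combined with the already-established $\Lie_E\circ=(d-1)\circ$.
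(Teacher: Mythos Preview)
Your central formula \eqref{lieEcirc} is wrong, and this is why the rest of the argument does not close. The multiplication $\circ$ is a tensor of type $\of{M}\otimes\of{M}\to\of{M}$, so the Lie derivative acts on each slot via $\Lie_E\gamma$, not $\Dir_E\gamma$. By the paper's formula \eqref{lieof}, $\Lie_E\gamma = \Dir_E\gamma + \Dir_\gamma^*E$, and hence (cf.~\eqref{liecirc})
\[
(\Lie_E\circ)(\alpha,\beta) = (\Dir_E\circ)(\alpha,\beta) + \Dir_{\alpha\circ\beta}^*E - \Dir_\alpha^*E\circ\beta - \alpha\circ\Dir_\beta^*E .
\]
Your expression $\Dir_E(\alpha\circ\beta) - (\Dir_E\alpha)\circ\beta - \alpha\circ(\Dir_E\beta)$ is exactly $(\Dir_E\circ)(\alpha,\beta)$, the directional derivative of the tensor, not its Lie derivative. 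This is not a cosmetic slip: in the ``trivial'' case $E=\la$ one has $\Dir_XE=X$, so $\Dir_\alpha^*E=\alpha$, and the three $\Dir^*$ terms contribute $\alpha\circ\beta-\alpha\circ\beta-\alpha\circ\beta=-\alpha\circ\beta$, which precisely cancels the $(\Dir_E\circ)(\alpha,\beta)=\alpha\circ\beta$ you correctly computed. With your formula you get $\Lie_E\circ=\circ$ (hence $d=2$), not $0$; the line ``$(\Lie_E\circ)(\alpha,\beta)=\alpha\circ\beta$ \dots\ i.e.\ $\Lie_E\circ=0\cdot\circ$'' is an outright contradiction that should have flagged the error.

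Two subsidiary problems: first, $E=\la-\tfrac1n p\la_p$ is \emph{not} a power-series function of $\la$ alone (it involves $p$ and $\la_p$ explicitly), so the remark that ``$\Dir_E$ acts as multiplication by $dE/d\la$'' is not available and the grading argument you sketch for $\nu=\infty$ is not grounded. Second, $p^s$ does not depend on the manifold coordinates, so it is not ``hit by $\Lie_E$'' at all; the extra $s$-dependence in $d$ does not come from a weight carried by $p^s$, but from the integration by parts hidden in the adjoint when you compute $\Dir_\alpha^*E = \tfrac{n-s+1}{n}\alpha+\tfrac{1}{n}p\,\alpha_p$. The paper's proof computes each of the four pieces of \eqref{liecirc} explicitly for $E=\la-\tfrac1n p\la_p$, then uses the commutation $p\,\pr_p\,\ell(\cdot)=\ell(p\,\pr_p\,\cdot)$ (valid for the endomorphism \eqref{ell} when $s=0,1$) to collapse the sum to $2\tfrac{s-1}{n}\,\alpha\circ\beta$. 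Once you replace \eqref{lieEcirc} by \eqref{liecirc} and compute $\Dir_\alpha^*E$ honestly, the calculation goes through along those lines.
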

\begin{proof}
We will present the detailed proof only for the case of $\nu=\infty$. 
First notice that the vector field $E=\la - \frac{1}{n}p\la_p$ belongs to $\T_\la\Alg^\infty_M$. 
To compute $\Lie_E\circ$ we will use the formula~\eqref{liecirc}, that is
\eq{\label{liec}
\bra{\Lie_E\circ}(\alpha,\beta) = \bra{\Dir_E\circ}(\alpha,\beta)+ \Dir_{\alpha\circ\beta}^*E
	-\Dir_\alpha^*E\circ\beta - \alpha\circ\Dir_\beta^*E.
}
Let $X\in\vf{\Alg_M^\infty}$ and $\alpha\in\of{\Alg^\infty_M}$, then $\Dir_XE = X-\frac{1}{n}pX_p$
and 
\eqq{
	\dual{\Dir_\alpha^*E,X} = \dual{\alpha,\Dir_XE} = \tr_\infty\bra{\alpha\Dir_XE}
	= \tr_\infty\bra{\Big(\frac{n-s+1}{n}\alpha+\frac{1}{n}p\alpha_p\Big)X},
}
where the integration by parts is used in the trace form \eqref{traces} ($\nu=\infty$).
Hence, $\Dir_\alpha^*E = \frac{n-s+1}{n}\alpha+\frac{1}{n}p\,\alpha_p$. 
The subsequent terms in \eqref{liec} have the form: 
\eqq{
	\bra{\Dir_E\circ}(\alpha,\beta) &= \ell\bra{p^sE_p\alpha}\beta + \alpha \ell\bra{p^sE_p\beta}\\ 
	&= \frac{n-1}{n}\alpha\circ\beta - \frac{1}{n}\ell\bra{p^{s+1}\la_{2p}\alpha}_{\me s}^\infty\beta 
	    - \frac{1}{n}\alpha\ell\bra{p^{s+1}\la_{2p}\beta}_{\me s}^\infty,\\
	\Dir_{\alpha\circ\beta}^*E &= \frac{n-s+1}{n}\alpha\circ\beta + \frac{1}{n}p\, \bra{\alpha\circ\beta}_p,\\
	\Dir_\alpha^*E\circ\beta &= \ell\bra{p^s\la_p\Dir_\alpha^*E}\beta + \Dir_\alpha^*E\, \ell\bra{p^s\la_p\beta}\\ 
	&= \frac{n-s+1}{n}\alpha\circ\beta + \frac{1}{n}\ell\bra{p^{s+1}\la_p\alpha_p}_{\me s}^\infty\beta 
	    + \frac{1}{n}p\,\alpha_p\ell\bra{p^s\la_p\beta}_{\me s}^\infty.
}
Substituting the above terms into \eqref{liec} and using the fact that the relation $p\pr_p\ell(\cdot) = \ell(p\pr_p\cdot)$ holds for \eqref{ell} (for $s=0$ or $1$) we obtain the following equality
\eqq{
		\Lie_E\circ = 2\frac{s-1}{n}\circ.
}

For finite $\nu=0$ or $v$ we have $E=\la\in \T_\la\Alg^\nu_M$ and $D_XE = X$, $D_\alpha^*E=\alpha$. 
The corresponding computation of the quasi-homogeneity relation is similar and adequately simpler
then the above one. 
\end{proof}

Combining the above lemmas and propositions with Theorem~\ref{main} we have the following
result:

\begin{theorem}\label{thrm} 
\begin{itemize}
\item[]
\item For $s=0,1$ there is a structure of Frobenius manifold on $\Alg^\infty_M$ if 
$\deg_\infty\la\me 1$. 
\item There is a structure of Frobenius manifold on $\Alg^0_M$ for $s=0$ if $\deg_0\la=0$
and for $s=1$ if $\deg_0\la\me 1$. 
\item In the case of $s=1$ and $\deg_0\la = -1$
there is a structure of Frobenius manifold on $\Alg^0_M$ with nonflat unit vector field.
\item For finite $\nu=v$ and $s=0,1$ there is a structure of Frobenius manifold on $\Alg^v_M$ if $\deg_v\la \me -1$.
\end{itemize}
\end{theorem}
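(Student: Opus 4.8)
The plan is to verify, family by family, the axioms of Definition~\ref{frob} (using the reformulations of Theorem~2.15 and Lemma~2.16 of \cite{Her}) by assembling the results already in place. The ambient data are fixed: $\Alg$ is the algebra of meromorphic functions on $\CP$ with derivation \eqref{der} and trace \eqref{traces}, $\ell$ is the endomorphism \eqref{ell}, and $\Alg^\nu_M$ is one of the submanifolds of Section~\ref{ss3}. By Proposition~\ref{propll}, $\ell$ satisfies the Rota--Baxter identity \eqref{RBi} with $\kappa=\tfrac14$ and the relation \eqref{frel}, and both $\ell$ and $\pr$ are invariant on $\Alg^\nu_M$. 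Hence Theorem~\ref{main} applies with $r=\ell^*$: the contravariant metric \eqref{fmetric}, which on $\Alg^\nu_M$ coincides with \eqref{met}, has Levi--Civita connection \eqref{lcc2}, is flat, its counity $1$-form $\eps$ (with $\eps(\alpha^\sharp)=\tr_\nu\alpha$) is closed, and the tensor $\nabla *$ is symmetric. Dualising, the covariant metric $\eta$ is flat (condition~(1)) and $\nabla c$ is symmetric in all four arguments (condition~(2), via Theorem~2.15 of \cite{Her}).

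Next I would pin down the Frobenius-algebra structure on $\T\Alg^\nu_M$ and check that each listed degree restriction is exactly what makes it work. By Propositions~\ref{multproj}, \ref{qstr} and \ref{ndm}, in each of the four families (with the indicated bounds on $\deg_\nu\la$) the multiplication \eqref{mn}, projected onto the cotangent space where needed, is well defined, commutative, associative and compatible with the metric \eqref{met}, and the metric is nondegenerate at a generic $\la$. Pulling $\circ$ back through the canonical isomorphism $\sharp$ of \eqref{sharp} gives the commutative associative multiplication $*$ on $\T\Alg^\nu_M$, $\alpha^\sharp*\beta^\sharp:=(\alpha\circ\beta)^\sharp$, with $\eta(X,Y*Z)=\eta(X*Y,Z)$. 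Lemma~\ref{uvf} supplies the unit: a unit $1$-form $\eps$ exists in every case, $e=\eps^\sharp$ is the vector field \eqref{unitvf}, and $e$ is flat except when $\nu=0$, $s=1$, $\deg_0\la=-1$ --- exactly the exceptional ``nonflat unit'' item of the statement. Together with closedness of $\eps$ and Lemma~2.16 of \cite{Her}, this gives condition~(3); in the exceptional case one records only the weaker ($F$-manifold) consequences, so one still obtains a Frobenius manifold whose unit vector field happens not to be flat.

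It remains to produce the Euler vector field (condition~(4)). Lemma~\ref{lemmaE} exhibits $E$ explicitly --- $E=\la-\tfrac1n p\la_p$ with $d=1+(s-1)\tfrac2n$ for $\nu=\infty$ (see \eqref{evf}), and $E=\la$ with $d=1$ otherwise --- together with the first quasi-homogeneity relation $\Lie_E\circ=(d-1)\circ$. I would then complete the remaining requirements in \eqref{qhr}: the relation $\Lie_E\eta^*=(d-2)\eta^*$ follows from $\eta^*(\alpha,\beta)=\tr_\nu(\alpha\circ\beta)$, the computed action of $\Lie_E$ on $\circ$, and the scaling of $\tr_\nu$ under $E$ (an integration-by-parts bookkeeping of the same type as in the proof of Lemma~\ref{lemmaE}); and $\nabla\nabla E=0$ holds because $E$ is affine in the flat coordinates, which can be seen either by producing those coordinates through Remark~\ref{rem} or directly from the form of $\Dir_X E$. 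Translating \eqref{qhr} back through $\sharp$ gives $\Lie_E *=*$ and $\Lie_E\eta=(2-d)\eta$, which finishes Definition~\ref{frob}.

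The one place where something beyond bookkeeping is needed --- and hence the main obstacle --- is the passage from the abstract hypotheses of Theorem~\ref{main}, where $\circ$ genuinely restricts to a subalgebra of $\T^*_\la\Alg_M$ and all identities close inside $\Alg$, to the present situation on formal Laurent series, where \eqref{mn} yields doubly-infinite series and must be projected onto $\T^*_\la\Alg^\nu_M$. One must verify that flatness of \eqref{met}, symmetry of $\nabla *$ and closedness of $\eps$ survive this projection --- i.e. that the projected connection \eqref{lcc2} really is the Levi--Civita connection of the projected metric and that the identities \eqref{rel3} and \eqref{RBi} still close modulo $(\T_\la\Alg^\nu_M)^\perp$. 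This is guaranteed precisely by the ideal condition \eqref{ideal2} together with the compatibility statements of Propositions~\ref{multproj}--\ref{ndm}, so the remaining work is to run through those conditions family by family; it is also there that the bounds on $\deg_0\la$, and the appearance of the exceptional nonflat-unit case, originate.
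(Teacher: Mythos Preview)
Your proposal is correct and follows essentially the same route as the paper: the theorem is stated there simply as a corollary, prefaced by ``Combining the above lemmas and propositions with Theorem~\ref{main} we have the following result'', and you have unpacked exactly that combination --- Proposition~\ref{propll} to feed Theorem~\ref{main}, then Propositions~\ref{multproj}--\ref{ndm} and Lemmas~\ref{uvf}, \ref{lemmaE} for the algebra, unit, and Euler data. You are in fact more explicit than the paper on two points it leaves implicit: the verification of $\Lie_E\eta^*=(d-2)\eta^*$ and $\nabla\nabla E=0$ (neither is stated among the preliminary lemmas), and the observation that the conclusions of Theorem~\ref{main} must be transported through the projection onto $\T^*_\la\Alg^\nu_M$ via the ideal condition~\eqref{ideal2}.
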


\begin{remark}
Consider the second contravariant metric on $\Alg^\nu_M$ defined by \eqref{gmet}
for $r=\ell^*$ and $E=\la$, which takes the form
\eq{\label{secmet}
	g^*(\alpha,\beta) = \tr_\nu\bra{\la\,\alpha\circ \beta}\qquad \alpha,\beta\in\of{\Alg_M^\nu}.
}
This metric is well defined and nondegenerate at a generic point of manifold subspaces $\Alg^\nu_M$
for finite $\nu=0$ or $v$ and $s=0,1$. This can be showed in a similar way to the proof of Proposition~\ref{ndm}.
In these cases, since $E=\la$ is a Euler vector field, the metric \eqref{secmet} coincides with the intersection form \eqref{if}. However, for $\nu=\infty$ the metric \eqref{secmet} is well defined and nondegenerate only on the  
spaces: 
 \eqq{
\text{for $s=0$}:\quad\tilde{\Alg}^\infty_M &= \pobr{\la(p) = p^n + \tilde{u}p^{n-1} + u_{n-2}p^{n-2} + \ldots},\\
\text{for $s=1$}:\quad\tilde{\Alg}^\infty_M &= \pobr{\la(p) = \tilde{u}p^n + u_{n-1}p^{n-1} + u_{n-2}p^{n-2} + \ldots}.
}
Hence, to obtain the second metric on $\Alg^\infty_M$ one must to carry out the reduction procedure with respect to the constrain $\tilde{u}=0$. After reduction one gets the metric in the form
$g^*_{\text{red}}(\alpha,\beta) = \dual{\alpha^\sharp,\beta}$,
where
\eqq{
	\alpha^\sharp = p^s\la\brac{\la_p \alpha}^\infty_{\me 0} - p^s\la_p \brac{\la \alpha}^\infty_{\me 0} 
+ \frac{1}{n} p^s \la_p\brac{\la_p \alpha}^\infty_{-1}\qquad s=0,1.
}
The above reduction is in fact equivalent to the so-called Dirac reduction of Lie-Poisson brackets considered in \cite{Szab}.
Using the relation $\brac{\,\cdot\,}^\infty_{-1} = \brac{p\,\cdot\,}^\infty_{\me 0} - p\brac{\,\cdot\,}^\infty_{\me 0}$
one finds that the metric $g^*_{\text{red}}$ coincides with the intersection form \eqref{if}, that is
\eqq{
g^*_{\text{red}}(\alpha,\beta) = \dual{E,\alpha\circ\beta} \equiv \tr_\infty\bra{E\,\alpha\circ\beta},
}
where $E= \la - \frac{1}{n}p\la_p$. The flatness of $g^*_{\text{red}}$ and compatibility with $\eta^*$ is a consequence
of the fact that  $g^*_{\text{red}}$ is an intersection form. Notice that $E= \la - \frac{1}{n}p\la_p$ does
not fulfil assumptions from Subsection~\ref{subs} and the theorem cannot be used with this choice of $E$. 
\end{remark}
\begin{remark}
The particular infinite-dimensional Frobenius manifold $\Alg_M^\infty$ for $s=0$ and $n=1$ corresponds to the
Frobenius manifold associated with dKP equation or Benney chain, which was constructed in \cite{Rai}.
On the other-hand in \cite{CDM}  there was formulated infinite-dimensional Frobenius manifold 
associated with two-component Toda chain, which corresponds to the following direct sum $\Alg^\infty_M\oplus\Alg^0_M$
for $s=1$ and $n=m=0$. 
\end{remark}

\subsection{Frobenius structure on the spaces of meromorphic functions} \label{frobsec}

Let us consider the algebra of meromorphic functions on the Riemann sphere with prescribed marked points, poles (or zeros), at $\infty$, $0$ and two families of finite (not fixed) points $a_1,\ldots a_L$ and $v_1, \ldots,v_K$ that can vary over
the complex plane. Therefore, we define the algebra $\Alg$ in the form\footnote{By $\Cm\brac{x,y,\ldots}$ we mean the ring of complex polynomials in variables $x, y,\ldots\,$.}
\eqq{
		\Alg = \Cm\brac{p,p^{-1},(p-a_1)^{-1},\ldots,(p-a_L)^{-1},(p-v_1)^{-1},\ldots,(p-v_K)^{-1}},
}
where $p\in\hat{\Cm}$. The elements $p,(p-a_i)^{-1},(p-v_j)^{-1}$ and the generators of the algebra $\Alg$ with obvious relations between them.

The underlying manifold subspaces $\Alg_M$ of $\Alg$
on which we are going to define the structure of Frobenius manifold are reductions of the infinite-dimensional Frobenius manifolds associated to the formal Laurent spaces considered in the previous section. Hence, we must take into consideration the constraints from Theorem~\ref{thrm} on the degrees of
the meromorphic functions $\la(p)\in\Alg_M$.  

Accordingly, from Theorem~\ref{thrm} it follows that for $s=0$ the meromorphic functions $\la(p)\in\Alg_M$
cannot have zero or pole at $p=0$, that is we must require $\deg_0\la =0$. For $s=1$ there must be singularity at $p=0$ of order $\deg_0\la\me 1$ or zero of order one, that is $\deg_0\la=-1$ .
All non fixed zeros $a_i$ of $\la(p)$ must be of order one, $\deg_{a_i}\la=1$, 
and all non fixed poles $v_j$ must be of degree $\deg_{v_j}\la\me 1$.   
Besides, the meromorphic functions $\la(p)$ must have singularity at $\infty$ of order $\deg_\infty\la\me 1$ and the normalisation from \eqref{ami} must be taken into account. 

Let admissible $n:=\deg_\infty\la$, $m_0:=\deg_0\la$ and $m_j:=\deg_{v_j}\la$ be fixed. Consequently, we define the underlying manifold subspace of $\Alg$ by
\eq{\label{am}
	\Alg_M = \pobr{\la(p) = \frac{\prod_{i=1}^{L}(p-a_i)}{p^{m_0}\prod_{j=1}^{K}(p-v_j)^{m_j}}
\ \bigg |\  a_i,v_i\in \Cm, \sum_{i=1}^La_i=\sum_{j=1}^K m_jv_j\text{ for $s=0$}},
}
where we require that $n = L-\sum_{j=0}^Km_j\me 1$ and $m_j\me 1$. Besides, for $s=0$ we must
have $m_0=0$, and for $s=1$ we must have $m_0\me 1$ or $m_0=-1$. 
The coefficients $a_i$ and $v_j$ constitute coordinates on the underlying manifold, which is of dimension 
\eqq{
N=K+L+s-1.
}

The related tangent spaces are spanned by the derivation of $\la$ with respect to the coordinates, that is 
\eqq{
	\T_\la\Alg_M = \spn\pobr{\frac{\pr\la}{\pr a_1},\ldots,\frac{\pr\la}{\pr a_L},\frac{\pr\la}{\pr v_1},\ldots,\frac{\pr\la}{\pr v_K}}.
}
Let $\Gamma :=\pobr{\infty, 0\,(\text{if $s=1$}),v_1,\ldots,v_K}$ be the set that consists of poles 
of meromorphic functions $\la\in\Alg_M$. Let's also define $\tilde{\Gamma}:=\Gamma\setminus \{\infty\}$.

Let's define the following subspaces of $\Alg$:
\eqq{
	\Alg^\infty_{\me k} := p^k\Cm\brac{p},\qquad
	\Alg^\infty_{< k} := \pobr{f\in\Alg\ |\ \deg_\infty f< k}
}
and
\eqq{
	\Alg^\nu_{\me k} = \pobr{f\in\Alg\ |\ \deg_\nu f\me -k},\qquad
	\Alg^\nu_{< k} = (p-\nu)^{-k-1}\Cm\brac{(p-\nu)^{-1}},
}
where $\nu\in\tilde{\Gamma}$. Let
\eqq{
factor : = p^{m_0}\prod_{j=1}^{K}(p-v_j)^{m_j+1}.
}
Then, the tangent space is given by\footnote{$\Cm_r\brac{p}$ is the space of complex polynomials in $p$ of degree at most $r$.}
\eqq{
\T_\la\Alg_M = factor^{-1}\times \Cm_{N-1}\brac{p}\equiv factor^{-1}\times \Cm_{N-1}\brac{p-\nu}, 
}
where $\nu\in\tilde{\Gamma}$. Notice that expanding into Laurent series
$\T_\la\Alg_M\subset \Alg^\infty_{\les n+s-2}$ and 
$\T_\la\Alg_M\subset\Alg^\nu_{\me -m-\varepsilon}$, where $\varepsilon=0$ for $s=1$ and $\nu=0$,
and otherwise $\varepsilon=1$. 

The respective orthogonal complements of the cotangent space are defined with respect to the trace forms \eqref{traces}. For  $\nu=\infty$ we have
\eq{\label{orth}
(\T_\la\Alg_M^\infty)^\perp = factor\times p^s\times\bra{\Alg^\infty_{\me 0}\oplus\Alg^\infty_{<-N}}
}
and for $\nu\in\tilde{\Gamma}$ we have
\eqq{
(\T_\la\Alg_M^\nu)^\perp = factor\times p^s\times\bra{\Alg^\nu_{\me 0}\oplus\Alg^\nu_{<-N}}.
}

The cotangent  spaces are defined by the respective quotient spaces, that is 
\eqq{
\T_\la^*\Alg_M^\nu\cong\Alg/(\T_\la\Alg_M^\nu)^\perp\qquad \nu\in\Gamma.
} 
As result the form of the cotangent spaces is not unique. Possible and convenient representations are:
\eq{\label{cotsp1}
	\T^*_\la\Alg_M^\infty = p^{1-n}\Cm_{N-1}\brac{p}\subset \Alg^\infty_{\me 1-n}
}
for $\nu=\infty$ and
\eq{\label{cotsp2}
	\T^*_\la\Alg_M^{\nu} = (p-\nu)^m\Cm_{N-1}\brac{(p-\nu)^{-1}}  \subset \Alg^{\nu}_{\les m} 
}
for $\nu\in\tilde{\Gamma}$, where $m=\deg_\nu\la$.

\begin{lemma}\label{froba}
Each metric \eqref{met}, defined for $\nu\in\Gamma$, is nondegenerate at a generic point $\la\in\Alg_M$. 
The orthogonal complements $(\T_\la\Alg_M^\nu)^\perp$ are ideals in $\Alg$ with respect to the multiplications
\eqref{mn} for respective $\nu\in\Gamma$. Moreover, in each case the metric \eqref{met} is compatible with the structure of quotient algebras $\Alg/(\T_\la\Alg_M^\nu)^\perp\cong \T_\la^*\Alg_M$.
\end{lemma}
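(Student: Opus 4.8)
The plan is to reduce everything to the formal Laurent analysis of Section~\ref{ss3}. Fix a marked point $\nu\in\Gamma$ and expand each element of $\Alg$ into its Laurent series at $\nu$; this realizes $\Alg$ as a subalgebra of $\Alg^\nu$ and sends the underlying manifold $\Alg_M$ of \eqref{am} into the formal submanifold $\Alg^\nu_M$ of Section~\ref{ss3}. The degree restrictions built into \eqref{am} — namely $\deg_\infty\la\les 1$, $\deg_{v_j}\la\les 1$, and $\deg_0\la=0$ when $s=0$, $\deg_0\la\les 1$ or $\deg_0\la=-1$ when $s=1$ — are precisely the admissibility hypotheses of Propositions~\ref{multproj}, \ref{qstr} and \ref{ndm}. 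Thus at every $\nu\in\Gamma$ the local model has already been analysed, and what remains is to check that the finite-dimensional meromorphic tangent space $\T_\la\Alg_M=factor^{-1}\Cm_{N-1}\bc{p}$, its orthogonal complement \eqref{orth} (and its analogue for $\nu\in\tilde{\Gamma}$), and the cotangent representatives \eqref{cotsp1}, \eqref{cotsp2} fit together with the multiplication \eqref{mn} exactly as in the Laurent setting.

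I would first treat the ideal property and the compatibility with the quotient. By \eqref{orth} an element $g$ of $(\T_\la\Alg_M^\nu)^\perp$ is $factor\cdot p^s$ times a Laurent element at $\nu$ that is either regular at $\nu$ or vanishes there to order exceeding $N$. Substituting such a $g$ into the second form of \eqref{mn}, $\alpha\circ\beta=p^s\la_p\alpha\beta-p^s\brac{\la_p\alpha}^\nu_{<0}\beta-p^s\alpha\brac{\la_p\beta}^\nu_{<0}$, and using the explicit pole structure of $\la_p$ (poles only at the marked points, with orders controlled by $\deg_\nu\la$), one checks term by term that $f\circ g$ again has the form prescribed by \eqref{orth} for every $f\in\Alg$; the bookkeeping is exactly that of Proposition~\ref{multproj}. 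The key simplification over Section~\ref{ss3} is that for meromorphic $\la,\alpha,\beta$ the product \eqref{mn} is genuinely closed in $\Alg$, so no truncation is ever needed and $(\T_\la\Alg_M^\nu)^\perp$ is an honest ideal. Compatibility of \eqref{met} with the quotient, i.e.\ the identity $\Tr_\nu(f\circ g)=0$ for $g$ in the complement (condition \eqref{req2}), then follows from the ideal property together with a residue count parallel to that in Proposition~\ref{qstr}: $p^{-s}(f\circ g)$ is $factor$ times a function that is regular at $\nu$ or vanishes there beyond the order of the pole of $factor$, so its residue at $\nu$ vanishes. Consequently \eqref{met} descends to $\Alg/(\T_\la\Alg_M^\nu)^\perp\cong\T_\la^*\Alg_M$, and there it is invariant under the induced multiplication by associativity of $\circ$ and invariance of $\Tr_\nu$.

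It remains to prove that \eqref{met} is nondegenerate at a generic $\la\in\Alg_M$, which I would do via the canonical isomorphism \eqref{sharp}. First one checks that, evaluated on the representatives \eqref{cotsp1}, \eqref{cotsp2}, the map $\sharp$ actually takes values in $\T_\la\Alg_M$; this is where the degree bounds and, when $s=0$, the reduction constraint $\sum_i a_i=\sum_j m_j v_j$ enter, producing the cancellations needed to control the order of $\alpha^\sharp$ at every marked point. Since $\T_\la\Alg_M$ and $\T_\la^*\Alg_M$ both have dimension $N$, it then suffices to show $\ker\left.\sharp\right|_\la=0$ away from a proper subvariety of $\Alg_M$; and since $\left.\sharp\right|_\la$ is the restriction of the corresponding Laurent-level canonical map, whose kernel is trivial at a generic point of $\Alg^\nu_M$ by Proposition~\ref{ndm}, its restriction to the meromorphic slice is injective outside a proper analytic subset, after which one concludes by exhibiting a single $\la\in\Alg_M$ — say with pairwise distinct and otherwise generic $a_i$, $v_j$ — at which $\det\sharp\neq0$. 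I expect the principal obstacle to be precisely this degree bookkeeping at the finite marked points of $\tilde{\Gamma}$: simultaneously keeping track of the common denominator $factor$, of the varying pole positions $v_j$, and of the extra reduction constraint when $s=0$, both when verifying that $(\T_\la\Alg_M^\nu)^\perp$ is an ideal on which the trace vanishes and when checking that the generic nondegeneracy locus of $\sharp$ is non-empty.
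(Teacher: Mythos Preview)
Your strategy is essentially that of the paper: a direct verification at $\nu=\infty$ using the explicit descriptions of $\T_\la\Alg_M$, the complement \eqref{orth}, and the cotangent representatives \eqref{cotsp1}--\eqref{cotsp2}, splitting the complement into its two summands and tracking degrees just as in Propositions~\ref{multproj}--\ref{ndm}; the paper establishes surjectivity of $\sharp$ where you plan to argue injectivity, which is equivalent in equal finite dimension. Two minor points: your degree inequalities are reversed (the admissibility hypotheses are $\deg_\infty\la\me 1$, $\deg_{v_j}\la\me 1$, not $\les$), and the Laurent ``reduction'' for nondegeneracy is not a genuine shortcut---generic injectivity of $\sharp$ on the infinite-dimensional $\Alg^\nu_M$ says nothing a priori about the finite-dimensional slice $\Alg_M$, so, as you correctly note, one must still exhibit a nondegenerate point in $\Alg_M$, which amounts to the direct computation anyway.
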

\begin{proof}
We will consider in detail only the case of $\nu=\infty$. All the following computations are made at a generic point $\la\in\Alg_M$. The following analysis shall be done for $s=0$ and $s=1$ separately. Then, for arbitrary $\alpha\in\T^*_\la\Alg_M^\infty$ from \eqref{cotsp1} one finds that 
\eqq{
\begin{split}
factor\times p^s\brac{\la_p \alpha}^\infty_{\me 0}
&\in\Alg^\infty_{\me m_0+s} \\
factor\times p^s\la_p \brac{\alpha}^\infty_{\me 0}&\in\Alg^\infty_{\me 0}
\end{split}
\quad\Longrightarrow\quad \alpha^\sharp\in factor^{-1}\times\Alg^\infty_{\me 0},
}
where \eqref{sharp} is used. On the other hand 
\eqq{
\begin{split}
factor\times p^s\la_p \brac{\alpha}^\infty_{< 0} &\in\Alg^\infty_{< N}\\
factor\times p^s\brac{\la_p \alpha}^\infty_{< 0}&\in\Alg^\infty_{< N-n+1}
\end{split}
\quad\Longrightarrow\quad \alpha^\sharp\in factor^{-1}\times\Alg^\infty_{< N}.
}
Hence, one can conclude that $\im\sharp = factor^{-1}\times \Cm_{N-1}\brac{p}$, that is 
the image of $\sharp$ spans $\T_\la\Alg_M$. Besides, one can easily check that the image of the orthogonal complement $(\T_\la\Alg_M^\infty)^\perp$ with respect to the map \eqref{sharp} is empty.

The orthogonal complement to $\T_\la\Alg_M$ is given in the form \eqref{orth}, hence let
consider arbitrary $f\in factor\times p^s\times\Alg^\infty_{\me 0}$ and $g\in factor\times p^s\times\Alg^\infty_{< -N}$.
We see that $\la_p f\in\Alg^\infty_{\me 0}$ and $\la_p g\in\Alg^\infty_{< 0}$. Hence, 
\eqq{
	f\circ \Alg = p^s f\brac{\la_p \Alg}^\infty_{\me 0}= factor\times\Alg^\infty_{\me 2s}\subset (\T_\la\Alg_M^\infty)^\perp
}
and
\eqq{
	g\circ \Alg = - p^s g\brac{\la_p \Alg}^\infty_{< 0}= factor\times\Alg^\infty_{< 2s-N-1}\subset (\T_\la\Alg_M^\infty)^\perp,
}
where the multiplication is defined by \eqref{mn} with $\nu=\infty$. Thus $(\T_\la\Alg_M^\infty)^\perp$ is an ideal in $\Alg$
with respect to \eqref{mn}. 
	
Now, the compatibility of the quotient structure with the metric \eqref{met} follows
from the relations:
\eqq{
	\tr_\infty\bra{factor\times\Alg^\infty_{\me 2s}} = -\res_{p=\infty} \Alg^\infty_{\me m_0+s} = 0
}	
and
\eqq{
	\tr_\infty\bra{factor\times\Alg^\infty_{< 2s-N-1}} = -\res_{p=\infty} \Alg^\infty_{<-n-1} = 0.
}	

The remaining cases of $\nu\in\tilde{\Gamma}$ can be proven in a similar fashion or can be obtained as a corollary to the next theorem.
\end{proof}

For each $\Alg_M$ the duality paring, such as \eqref{dualp}, can be defined by the trace form \eqref{traces}
for different $\nu\in\Gamma$. As result, a $1$-form $\gamma$ on the underlying manifold can have different representations $\gamma_{\nu}\in\T^*_\la\Alg_M^\nu$, such that at point $\la\in\Alg_M$:
\eq{\label{eqv}
	\dual{\gamma,X}_\la \equiv \tr_\infty\bra{X\gamma_\infty} = \tr_{\nu}\bra{X\gamma_{\nu}}\qquad \nu\in\tilde{\Gamma},
}
where $X\in\T_\la\Alg_M$ is arbitrary. Then, for each $\nu\in\Gamma$ 
we can define  the related contravariant metric on $\Alg_M$  \eqref{met} and the related multiplication in the cotangent bundle $\T^*\Alg_M^\nu$ using  \eqref{mn}. We will show that these structures defined for different $\nu\in\Gamma$ are isomorphic.

Let $1$-form $\alpha$ be represented by $\alpha_{\nu}\in\T^*_\la\Alg_M^\nu$, then let $\alpha_\nu^\sharp\in\T_\la\Alg_M$ be given by \eqref{sharp} and let $\alpha_\nu\circ\beta_\nu$ means the product of two $1$-forms in the tangent bundle $\T^*\Alg_M^\nu$ defined by the multiplication \eqref{mn} for respective $\nu\in\Gamma$.

\begin{theorem}
The metrics defined  by \eqref{sharp} for different $\nu\in\Gamma$ on (fixed) $\Alg_M$ are
equivalent, that is the following relation is true:
\eqq{
     \eta^*\bra{\alpha,\beta} := \tr_\infty\bra{\alpha_\infty^\sharp\beta_\infty} \equiv \tr_{\nu}\bra{\alpha_{\nu}^\sharp\beta_{\nu}}\qquad \nu\in\tilde{\Gamma}. 
}
This means that for arbitrary $1$-form $\alpha$ the following equality is also valid
\eq{\label{eqv2}
\alpha^\sharp := \alpha^\sharp_\infty\equiv \alpha^\sharp_{\nu}\in \T_\la\Alg_M\qquad \nu\in\tilde{\Gamma}.
}
Similarly, the multiplications in the cotangent bundles $\T^*\Alg^\nu_M$ defined by \eqref{mn}
for different $\nu\in\Gamma$ are isomorphic. 
\end{theorem}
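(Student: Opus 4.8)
The plan is to reduce all three assertions to a single statement — that for any $1$-forms on $\Alg_M$ the trace $\tr_\nu$ of a $\circ$-product formed with the multiplication \eqref{mn} attached to the point $\nu$ does not depend on $\nu\in\Gamma$ — and then to prove that statement by a residue computation on $\CP$. Granting it, the equivalence of metrics is immediate: by \eqref{met}, $\tr_\nu\bra{\alpha^\sharp_\nu\beta_\nu}=\tr_\nu\bra{\alpha_\nu\circ\beta_\nu}$, whose right-hand side is $\nu$-independent. For the equivalence \eqref{eqv2} of the sharp maps, fix $\mu,\nu\in\Gamma$ and an arbitrary $1$-form $\beta$. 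On one hand $\tr_\nu\bra{\alpha^\sharp_\nu\beta_\nu}=\eta^*\bra{\alpha,\beta}=\tr_\mu\bra{\alpha^\sharp_\mu\beta_\mu}$ is $\nu$-independent; on the other hand, since $\alpha^\sharp_\nu\in\T_\la\Alg_M$, relation \eqref{eqv} gives $\tr_\mu\bra{\alpha^\sharp_\nu\beta_\mu}=\dual{\beta,\alpha^\sharp_\nu}_\la=\tr_\nu\bra{\alpha^\sharp_\nu\beta_\nu}$. Comparing, $\tr_\mu\bra{(\alpha^\sharp_\nu-\alpha^\sharp_\mu)\beta_\mu}=0$ for every $1$-form $\beta$, and since $\alpha^\sharp_\nu-\alpha^\sharp_\mu\in\T_\la\Alg_M$ while the pairing of $\T_\la\Alg_M$ with $\T^*_\la\Alg_M^\mu$ is nondegenerate (Proposition~\ref{ndm}, Lemma~\ref{froba}), this forces $\alpha^\sharp_\nu=\alpha^\sharp_\mu$. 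Finally, for the multiplications, the structure constants of the induced tangent-bundle product satisfy $\eta\bra{\alpha^\sharp*\beta^\sharp,\gamma^\sharp}=\eta^*\bra{\alpha\circ\beta,\gamma}=\tr_\nu\bra{\alpha_\nu\circ\beta_\nu\circ\gamma_\nu}$ by associativity, which is again $\nu$-independent; together with the already-established $\nu$-independence of $\eta$ (the inverse of $\eta^*$) and of $\sharp$, this shows that $\sharp$ conjugates every $\circ$ to one and the same multiplication $*$ on $\T_\la\Alg_M$, i.e.\ the cotangent-bundle multiplications for different $\nu\in\Gamma$ are isomorphic under the canonical identifications furnished by \eqref{eqv}.

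The core computation is then the $\nu$-independence of $\tr_\nu\bra{\alpha_\nu\circ\beta_\nu}$ and, with one more factor, of $\tr_\nu\bra{\alpha_\nu\circ\beta_\nu\circ\gamma_\nu}$; both are treated by the same method. Expanding \eqref{mn} and folding the truncations with the adjoint identity used in the proof of Proposition~\ref{propll} (namely $\tr_\nu\bra{p^s\brac{p^{-s}f}^\nu_{\me 0}g}=\tr_\nu\bra{f\brac{g}^\nu_{<0}}$ and its analogue at $\infty$), one rewrites $\tr_\nu$ of the product as a sum of terms of the shape $\tr_\nu\bra{p^s\la_p\cdot(\text{products of the principal and regular parts at }\nu\text{ of }\alpha_\nu,\beta_\nu,\dots)}$. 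For meromorphic $\la$ and meromorphic cotangent representatives the multiplication \eqref{mn} is genuinely closed in $\Alg$, so each such term is $\epsilon\res_{p=\nu}$ of an honest meromorphic function on $\CP$; using the explicit description \eqref{am} of $\Alg_M$ together with the descriptions of the tangent and cotangent spaces, one checks that these functions have poles confined to the marked points in $\Gamma$, in particular none at the non-fixed zeros $a_i$ of $\la$. The residue theorem on $\CP$ then replaces $\res_{p=\nu}$ by $-\sum_{\mu\in\Gamma\setminus\{\nu\}}\res_{p=\mu}$, and assembling the terms produces a single sum of residues over the whole set $\Gamma$ (using, when $s=0$, the normalization $\sum a_i=\sum m_j v_j$ from \eqref{am}); that expression is manifestly symmetric in the marked points, hence independent of the distinguished $\nu$.

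The main obstacle is the bookkeeping forced by the twist $p^s$ in the definitions of $\ell$ and of $\sharp$. For $s=1$ the twisted truncations $\brac{p^{-1}\,\cdot\,}^\nu$ at a marked point $\nu\neq 0$ differ from the naive expansions, and a constant (low-order) term migrates between the ``part at $\infty$'' and the ``part at $0$''; one must track these contributions carefully and verify they lie in the relevant orthogonal complements, so that they affect neither the cotangent representatives nor the residue sums. A secondary source of friction is the list of degenerate cases isolated in Proposition~\ref{ndm} and Lemma~\ref{froba} — notably $\nu=0$, $s=1$, $\deg_0\la=-1$, where the unit is $1-\frac{1}{u_1}\la_p$ rather than $1$ — which have to be run through one by one; but all of them are governed by precisely the same adjoint identity and residue argument, so no new idea is required beyond patient accounting.
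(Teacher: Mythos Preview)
Your reductions are clean and match the paper's: the metric statement is $\tr_\nu(\alpha_\nu\circ\beta_\nu)$, the sharp statement follows from it together with the nondegeneracy of the pairing, and the multiplication statement reduces to the triple trace. All of that is fine.

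The gap is in the core residue argument. You propose to expand $\tr_\nu(\alpha_\nu\circ\beta_\nu)$, observe that the integrand is meromorphic with poles only in $\Gamma$, replace $\res_{p=\nu}$ by $-\sum_{\mu\ne\nu}\res_{p=\mu}$ via the residue theorem, and conclude because the resulting sum over all of $\Gamma$ is ``manifestly symmetric in the marked points''. But the integrand is built from the representatives $\alpha_\nu,\beta_\nu$, which are \emph{different} elements of $\Alg$ for different $\nu$ (that is exactly what \eqref{eqv} records). So even after you rewrite $\tr_\nu(\alpha_\nu\circ\beta_\nu)=\sum_{\mu\in\Gamma}\res_{p=\mu}F_\nu(p)$, the function $F_\nu$ still carries the subscript $\nu$, and a sum over $\mu$ of $\res_\mu F_\nu$ is not the same object as a sum over $\mu$ of $\res_\mu F_{\nu'}$. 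Symmetry of the index set does not give $\nu$-independence when the integrand itself depends on $\nu$. The normalization $\sum a_i=\sum m_j v_j$ you invoke for $s=0$ does not help here; it controls the shape of the tangent space, not the mismatch between representatives.

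What actually closes the argument in the paper is a different manoeuvre that you do not use: since $\alpha_\nu^\sharp\in\T_\la\Alg_M$ is a genuine tangent vector, relation \eqref{eqv} lets one trade the $\beta_\nu$ in $\tr_\nu(\alpha_\nu^\sharp\beta_\nu)$ for $\beta_\infty$, obtaining $\tr_\infty(\alpha_\nu^\sharp\beta_\infty)$. Only now is there a single meromorphic integrand mixing the $\nu$-data of $\alpha$ with the $\infty$-data of $\beta$; one manipulates the truncations, applies the residue theorem to pass from $\res_{p=\infty}$ to $\res_{p=\nu}$, recognises the result as $\tr_\nu(\beta_\infty^\sharp\alpha_\nu)$, and uses \eqref{eqv} once more to land on $\tr_\infty(\beta_\infty^\sharp\alpha_\infty)$. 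The same trick, with \eqref{arel} replacing \eqref{sharp}, handles the triple trace. The point is that \eqref{eqv} is used \emph{inside} the computation to change representatives one factor at a time, not merely at the end; without that step your residue bookkeeping cannot get off the ground.
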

\begin{proof}
Let $\nu\in\tilde{\Gamma}$. Using  \eqref{eqv} and the residue theorem one finds that
\eqq{
 &\tr_\nu\bra{\alpha_\nu^\sharp\beta_\nu} = \tr_\infty\bra{\alpha_\nu^\sharp\beta_\infty}
 = -\res_{p=\infty}\bra{\brac{\la_p \alpha_\nu}^\nu_{\me 0}\beta_\infty - \la_p \brac{\alpha_\nu}^\nu_{\me 0}\beta_\infty}\\
&\qquad = -\res_{p=\infty}\bra{\brac{\la_p \alpha_\nu}^\nu_{\me 0}\brac{\beta_\infty}^\infty_{<0} -  \brac{\alpha_\nu}^\nu_{\me 0}\brac{\la_p\beta_\infty}^\infty_{< 0}}\\
&\qquad = \res_{p=\nu}\bra{\brac{\la_p \alpha_\nu}^\nu_{\me 0}\brac{\beta_\infty}^\infty_{<0} - \brac{\alpha_\nu}^\nu_{\me 0}\brac{\la_p\beta_\infty}^\infty_{< 0}}\\
&\qquad = \res_{p=\nu}\bra{\la_p \alpha_\nu\brac{\beta_\infty}^\infty_{<0} - \alpha_\nu\brac{\la_p\beta_\infty}^\infty_{< 0}}
= \tr_\nu\bra{\beta_\infty^\sharp\alpha_\nu} = \tr_\infty\bra{\beta_\infty^\sharp\alpha_\infty}.
}
Hence, the equivalence of metrics \eqref{met} is proven.

The multiplications in the cotangent bundles $\T^*\Alg^\nu_M$ defined by \eqref{mn} are mutually isomorphic 
if for arbitrary $X\in\T_\la\Alg_M$ the following relation holds:
\eqq{
	\dual{X,\beta\circ\gamma} = \tr_\infty\bra{X\bra{\beta_\infty\circ\gamma_\infty}}
	\equiv \tr_\nu\bra{X\bra{\beta_\nu\circ\gamma_\nu}}.
}
Using \eqref{eqv2} we can  take $X=\alpha^\sharp_\infty \equiv \alpha^\sharp_\nu$, where the $1$-form $\alpha$ is arbitrary. Hence, the above relation  is equivalent to
\eqq{
	\tr_\infty\bra{\alpha_\infty\circ\beta_\infty\circ\gamma_\infty} = \tr_\nu\bra{\alpha_\nu\circ\beta_\nu\circ\gamma_\nu}, 
}
where $\alpha,\beta,\gamma$ are arbitrary $1$-forms. By \eqref{arel} for $\nu\in\Gamma$:
\eqq{
	(\alpha_\nu\circ\beta_\nu)^\sharp &= p^s\alpha^\sharp\brac{\la_p\beta_\nu}^\nu_{\me 0} -p^s\la_p\brac{\alpha^\sharp\beta_\nu}^\nu_{\me 0}\\ &= p^s\la_p\brac{\alpha^\sharp\beta_\nu}^\nu_{< 0} - p^s\alpha^\sharp\brac{\la_p\beta_\nu}^\nu_{< 0}.
}
Hence, in the same manner as before, using  \eqref{eqv} and the residue theorem, one finds that
\eqq{
&\tr_\nu\bra{\alpha_\nu\circ\beta_\nu\circ\gamma_\nu} = \tr_\nu\bra{\alpha_\nu\bra{\beta_\nu\circ\gamma_\nu}^\sharp}
 = \tr_\infty\bra{\alpha_\infty\bra{\beta_\nu\circ\gamma_\nu}^\sharp}\\
 &\qquad= -\res_{p=\infty}\bra{\alpha_\infty\beta^\sharp\brac{\la_p\gamma_\nu}^\nu_{\me 0} - \la_p\alpha_\infty\brac{\beta^\sharp\gamma_\nu}^\nu_{\me 0}}\\
 &\qquad= \res_{p=\nu}\bra{\brac{\alpha_\infty\beta^\sharp}^\infty_{<0}\la_p\gamma_\nu - \brac{\la_p\alpha_\infty}^\infty_{<0}\beta^\sharp\gamma_\nu}\\
 &= \tr_\nu\bra{\bra{\alpha_\infty\circ\beta_\infty}^\sharp\gamma_\nu} 
 = \tr_\infty\bra{\bra{\alpha_\infty\circ\beta_\infty}^\sharp\gamma_\infty} = \tr_\infty\bra{\alpha_\infty\circ\beta_\infty\circ\gamma_\infty}.
}
This finishes the proof.
\end{proof}

\begin{lemma}\label{leme}
On each underlying manifold $\Alg_M$ there is a structure of Frobenius algebra defined 
by the quotient algebra  $\Alg/(\T_\la\Alg^\nu_M)^\perp\cong \T_\la^*\Alg^\nu_M$  and the respective metric \eqref{met}.
These structures are equivalent for different $\nu\in\Gamma$. The related unit vector fields are given by
\eqq{
e = 
\begin{cases}
1-\la_p & \text{for $\nu=\infty$, $s=0$ and $\deg_\infty\la = 1$,}\\
1-\frac{1}{u_1}\la_p &  \text{for $\nu=0$, $s=1$ and $\deg_0\la = -1$,}\\
1 & otherwise.
\end{cases}
} 
So, almost always $e=1$. In all the above cases but one the respective unit vector fields are flat. The exception is the case of $s=1$ with $\deg_0\la= -1$.

For the above Frobenius algebras the quasi-homogeneity relation, $\Lie_E\circ = (d-1)\circ$ is fulfilled by the Euler vector field $E = \la - \frac{1}{n}p\la_p$ and the weight $d=1 + (s-1)\frac{2}{n}$. 
\end{lemma}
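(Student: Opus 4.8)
The plan is to read off all four assertions from their counterparts already proved for the spaces of formal Laurent series, Lemmas~\ref{uvf} and~\ref{lemmaE}, exploiting that the manifolds \eqref{am} are reductions of the Laurent ones and that Lemma~\ref{froba} supplies exactly the closure properties that let the constructions of Sections~\ref{flat}--\ref{Fman} apply here verbatim. Indeed, by Lemma~\ref{froba} the metric \eqref{met} is nondegenerate at a generic $\la\in\Alg_M$, each $(\T_\la\Alg_M^\nu)^\perp$ is an ideal of $\Alg$ for \eqref{mn}, and the metric descends consistently to $\T_\la^*\Alg_M^\nu\cong\Alg/(\T_\la\Alg_M^\nu)^\perp$; hence the proposition in Section~\ref{Fman} producing a (nonunital) Frobenius algebra from an ideal complement applies, and the theorem immediately preceding this lemma identifies the resulting structures for the various $\nu\in\Gamma$. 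It therefore remains to (a) exhibit the unit, (b) decide flatness of the unit vector field, and (c) check quasi-homogeneity.

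For (a), by the equivalence just invoked it suffices to work with one convenient $\nu$. Put $\eps:=e^\flat$ with $e$ as displayed; one first checks $e\in\T_\la\Alg_M$ at a generic point by a pole/degree count against the explicit description of $\T_\la\Alg_M$, and then applies the identity \eqref{arel}, which gives $(\eps\circ\beta)^\sharp=p^se\brac{\la_p\beta}^\nu_{\me0}-p^s\la_p\brac{e\beta}^\nu_{\me0}$. For $e=1$ this is precisely $\beta^\sharp$ by \eqref{sharp}, so $\eps\circ\beta=\beta$ by injectivity of $\sharp$; in the two exceptional cases ($\nu=\infty$, $s=0$, $\deg_\infty\la=1$; and $\nu=0$, $s=1$, $\deg_0\la=-1$) the same short manipulation, using the explicit form of $\la$ and the projection identities of Proposition~\ref{propll}, again yields $\beta^\sharp$. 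This is Lemma~\ref{uvf} transported to the present setting.

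For (b), the co-unity $\eps$ is closed: $\ell$ satisfies \eqref{frel} by Proposition~\ref{propll}, so Proposition~\ref{prop} applies. By the criterion recalled after Definition~\ref{frob} (Lemma~2.16 of \cite{Her}), $e$ is then flat if and only if \eqref{cond} holds, equivalently $\Lie_e\circ=0$, since $\Lie_e$ annihilates the (constant) trace form and $\eta$ is the trace of $\circ$. One computes $\Lie_e\circ$ from the Lie-derivative formula \eqref{liec} used in the proof of Lemma~\ref{lemmaE}: for $e=1$ it vanishes at once; for $e=1-\la_p$ a short computation using $p\pr_p\ell(\cdot)=\ell(p\pr_p\cdot)$ (valid for \eqref{ell}) gives $\Lie_e\circ=0$; and for $e=1-\tfrac1{u_1}\la_p$ the same computation leaves a nonzero residual, which is the stated exception.

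Finally, for (c) note that $E=\la-\tfrac1np\la_p$ again lies in $\T_\la\Alg_M$ (it removes the leading $p^n$-behaviour and does not worsen the prescribed pole orders). The relation $\Lie_E\circ=(d-1)\circ$ with $d-1=(s-1)\tfrac2n$ is then exactly the computation carried out in the proof of Lemma~\ref{lemmaE} for $\nu=\infty$: write $\Dir_XE=X-\tfrac1npX_p$ and $\Dir_\alpha^*E=\tfrac{n-s+1}{n}\alpha+\tfrac1np\alpha_p$, substitute into \eqref{liec}, and collapse the remaining terms via $p\pr_p\ell(\cdot)=\ell(p\pr_p\cdot)$ to $\tfrac{2(s-1)}{n}\circ$; only the expansion at $\infty$ and the trace \eqref{traces} enter, both inherited by the meromorphic algebra, so nothing changes, and the value for the other $\nu\in\Gamma$ follows from the equivalence. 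The main obstacle is bookkeeping rather than a new idea: one must be sure that $e$ and $E$ genuinely belong to $\T_\la\Alg_M$ and that every product and projection stays inside $\Alg$ in the multi-marked-point, moving-pole setting, which is precisely what Lemma~\ref{froba} and the preceding equivalence theorem deliver; the only genuinely fresh (if routine) verification is the non-flatness of $e=1-\tfrac1{u_1}\la_p$.
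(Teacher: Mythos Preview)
Your proposal is correct and follows essentially the same approach as the paper: the paper's own proof is a single sentence stating that the first part is a corollary to the preceding propositions (Lemma~\ref{froba} and the equivalence theorem) and that the proof of the second part is practically the same as the respective parts of the proofs of Lemma~\ref{uvf} and Lemma~\ref{lemmaE}. You have spelled this out explicitly---invoking Lemma~\ref{froba} and the equivalence theorem for the Frobenius-algebra structure, then transporting the unit computation via \eqref{arel}, the flatness test via Proposition~\ref{prop} and $\Lie_e\circ$, and the Euler-field computation via \eqref{liec} and $p\pr_p\ell(\cdot)=\ell(p\pr_p\cdot)$---which is exactly what the paper intends.
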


The first part of Lemma~\eqref{leme} is a corollary to previous propositions and the proof of the second part is practically the same as the respective parts of the proofs of Lemma~\ref{uvf} and Lemma~\ref{lemmaE}.

\begin{proposition}\label{flatcoord}
The flat coordinates for the contravariant metric \eqref{met} defined on $\Alg_M$ are given by
\eqq{
t_\infty^i := \frac{1}{1-\frac{i}{n}}\tr_\infty \la^{1-\frac{i}{n}}\quad\text{for}\quad 1\les i\les n-1 
}
and for $1-s\les k \les K$ by \footnote{Care must be taken when calculating the traces of the terms involving logarithmic singularities, see Appendix~\ref{bcint}.}
\eq{\label{ls1}
t_{v_k}^j =
\begin{cases}
  \frac{1}{1-\frac{j}{m_k}}\tr_{v_k}\la^{1-\frac{j}{m_k}}    & \text{for}\quad  0\les j< m_k, \\
   \tr_{v_k} \log\la + \frac{m_k}{n}\tr_{\infty} \log\la    & \text{for}\quad j=m_k.
\end{cases}
}
The respective differentials  are
\eqq{
	dt_\infty^i := \brac{\la^{-\frac{i}{n}}}^\infty_{\me 1-n}\in \T^*_\la\Alg_M^\infty\qquad 1\les i\les n-1
}
and
\eqq{
	dt_{v_k}^j := \brac{\la^{-\frac{j}{m_k}}}^{v_k}_{\les m_k}\in \T^*_\la\Alg_M^{v_k}\qquad
1-s\les k \les K,\quad 0\les j\les m_k.
}
\end{proposition}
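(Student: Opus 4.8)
The plan is to produce the flat $1$-forms directly. Since the metric \eqref{met} is the metric \eqref{fmetric}, i.e.\ \eqref{gmet} with $E=1$ and $r=\ell^*$, and since by Proposition~\ref{propll} the endomorphism $\ell$ of \eqref{ell} satisfies both \eqref{RBi} and \eqref{frel}, Theorem~\ref{main} applies and \eqref{met} is flat with Levi-Civita connection \eqref{lcc2}. So it suffices to exhibit $N=\dim\Alg_M$ functionally independent covariantly constant $1$-forms and to integrate them. For this I would use the criterion of Remark~\ref{rem} together with formula \eqref{pf} (with $E=1$, $r=\ell^*$): a $1$-form $\gamma\in\of{\Alg_M}$ is flat for \eqref{lcc2} as soon as
\[
\pobr{\gamma,\la}=0\qquad\text{and}\qquad\pobr{\ell^*(\gamma),\la}=0,
\]
the Poisson bracket \eqref{pobr} being evaluated with respect to an arbitrary $X\in\vf{\Alg_M}$ (legitimate, since $\pr$ is invariant on $\Alg_M$). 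The work then splits into verifying these two identities for the stated $1$-forms, recognising them as representatives of elements of the cotangent spaces \eqref{cotsp1}--\eqref{cotsp2}, and identifying the scalar functions that integrate them.

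For the bracket conditions the key observation is that each candidate is, up to the truncation dictated by the ambient cotangent space, a fractional power $\mu=\la^{\rho}$ of $\la$ expanded in a Laurent series at the relevant marked point --- with $\rho=-i/n$ at $p=\infty$, $\rho=-j/m_k$ at $p=v_k$ for $j<m_k$, and $\rho=-1$ in the remaining ($j=m_k$) case. This expansion is unambiguous because $\la$ has a pole there (or, in the exceptional cases at $p=0$, a simple zero) of the appropriate order, so a branch of $\la^{\rho}$ is a well-defined formal Laurent series. For such $\mu$ one has $\pr\mu=\rho\,\la^{\rho-1}\pr\la$ and $\Dir_X\mu=\rho\,\la^{\rho-1}X$ by Remark~\ref{rde}, hence $\pobr{\mu,\la}=\rho\,\la^{\rho-1}\bra{\pr\la\cdot X-X\cdot\pr\la}=0$; moreover, by \eqref{elld}, $\mu$ has vanishing polynomial part at $\nu=\infty$ and vanishing principal part at a finite $\nu$, so $\ell^*(\mu)=\pm\tfrac12\mu$ and therefore $\pobr{\ell^*(\mu),\la}=0$ as well. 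It then remains to carry these vanishings across the projection onto $\T^*_\la\Alg^\nu_M$: here one uses that $\pr=p^s\pr_p$ intertwines the truncation operators $\brac{\cdot}^\nu_{\me l}$, $\brac{\cdot}^\nu_{\les l}$ up to a shift of the cut, and that the discarded Laurent tail lies in the orthogonal complement $(\T_\la\Alg^\nu_M)^\perp$, which by Lemma~\ref{froba} is an ideal compatible with the metric and the multiplication --- so the truncated $1$-form stays annihilated by $\pobr{\cdot,\la}$ and by $\pobr{\ell^*(\cdot),\la}$.

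To identify the coordinate functions one computes, for $X\in\T_\la\Alg_M$ and using $\Dir_X\la^{\rho}=\rho\la^{\rho-1}X$,
\[
\Dir_X t^i_\infty=\tr_\infty\bra{\la^{-i/n}X}=\tr_\infty\bra{\brac{\la^{-i/n}}^\infty_{\me 1-n}X},
\]
the last step because $X\in\T_\la\Alg_M\subset\Alg^\infty_{\les n+s-2}$, so terms of degree $<1-n$ cannot contribute to the residue; combined with the degree count of Section~\ref{frobsec} this shows that $\brac{\la^{-i/n}}^\infty_{\me 1-n}\in\T^*_\la\Alg^\infty_M$ represents $dt^i_\infty$, and in the same way $\brac{\la^{-j/m_k}}^{v_k}_{\les m_k}=dt^j_{v_k}$ for $0\les j<m_k$. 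The coordinate $j=m_k$ requires the care signalled in the footnote: because $\la$ has a zero or pole at $v_k$ and at $\infty$, $\log\la$ picks up a logarithmic branch point at each, so $\tr_{v_k}\log\la$ and $\tr_\infty\log\la$ are defined only after the regularisation of Appendix~\ref{bcint} and each depends on that choice; the combination $\tr_{v_k}\log\la+\tfrac{m_k}{n}\tr_\infty\log\la$ is arranged precisely so that the two ambiguities --- proportional to the local orders $-m_k$ at $v_k$ and $n$ at $\infty$, in ratio $m_k/n$ --- cancel, leaving $t^{m_k}_{v_k}$ single-valued, with $\Dir_X t^{m_k}_{v_k}=\tr_{v_k}\bra{\la^{-1}X}=\tr_{v_k}\bra{\brac{\la^{-1}}^{v_k}_{\les m_k}X}$. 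Finally, the candidates number $n-1$ at $p=\infty$ and $m_k+1$ at each $p=v_k$ (counting $p=0$ when $s=1$), totalling $N$, and they are functionally independent because their leading terms at the various marked points are manifestly linearly independent; hence the $t$'s form a flat coordinate system for \eqref{met}.

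The main obstacle I expect is twofold. First, the logarithmic coordinate: proving that $t^{m_k}_{v_k}$ is single-valued and correctly normalised is exactly what pins down the coefficient $m_k/n$ and forces the regularisation of the traces. Second, the bookkeeping that allows one to pass from the formal power representative to the explicit finite representative in $\T^*_\la\Alg^\nu_M$ without destroying the two bracket identities --- that is, checking that the discarded Laurent tail sits inside $(\T_\la\Alg^\nu_M)^\perp$ and that the quotient structure is consistent, for which Lemma~\ref{froba} has already prepared the ground. By contrast, the bracket vanishing for the untruncated powers is routine.
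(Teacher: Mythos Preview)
Your approach is essentially the paper's own: both use the criterion of Remark~\ref{rem} with $E=1$ and $r=\ell^*$, take the untruncated fractional powers $\la^{-i/n}$ (resp.\ $\la^{-j/m_k}$) as flat $1$-forms, verify the two bracket conditions \eqref{flatc} via the observation that these powers have no nonnegative (resp.\ no negative) part so that $\ell^*$ acts as $\pm\tfrac12$, and then project onto the cotangent representatives \eqref{cotsp1}--\eqref{cotsp2}. The paper's proof is terser --- it records the degree count, the vanishing of $\brac{\gamma^i_\infty}^\infty_{\me 0}$ and $\brac{\gamma^j_\nu}^\nu_{<0}$, and the integration $\Dir_X t^i_\nu=\tr_\nu(X\gamma^i_\nu)$, without spelling out the logarithmic case or the quotient bookkeeping you flag as obstacles --- but the underlying argument is the same.
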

\begin{proof}
First we will show that $dt_\nu^i$ are flat $1$-forms
with respect to the metric defined by \eqref{met} for $\nu\in\Gamma$.   
See Remark~\ref{rem}, it is sufficient to check if the equalities \eqref{flatc} (with $E=1$) hold.
Remind that $r=\ell^*$, where $\ell^*$ is given by \eqref{elld}.

For $\nu=\infty$ we take $\gamma^i_\infty = \la^{-\frac{i}{n}}$, for which  
$\deg_\infty\gamma^i_\infty = -i$. Hence, the conditions \eqref{flatc}
are satisfied since $\brac{\gamma^i_\infty}^\infty_{\me 0} = 0$ for $i>0$. 
Projecting $\gamma^i_\infty$ on the cotangent space \eqref{cotsp1}, one finds
that the only nonzero projections are for $1\les i\les n-1$.  Similarly for $\nu\in\tilde{\Gamma}$ 
we take $\gamma^j_\nu = \la^{-\frac{i}{m}}$, where $m=\deg_\nu\la$. Thus $\deg_\nu\gamma^j_\nu = -j$ 
and \eqref{flatc} hold as $\brac{\gamma^j_\nu}^\nu_{< 0} = 0$ for $j\me 0$.
The only nonzero projections of $\gamma^j_\nu$ on the respective cotangent space \eqref{cotsp2} 
are for $0\les j\les m$.  

The related flat coordinates are the respective (locally defined) functions $t^i_\nu$ so that
\eqq{
	\gamma^i_\nu =  dt^i_\nu\qquad\iff\qquad \Dir_X t^i_\nu = \tr_\nu\bra{X \gamma^i_\nu},
}
where $X$ is an arbitrary vector field on $\Alg_M$.
\end{proof}

\begin{proposition}
The contravariant metric $\eta^*$ on $\Alg_M$, defined by means of \eqref{met}, 
decomposes in flat coordinates from Proposition~\ref{flatcoord}  into anti-diagonal
blocks such that
\eqq{
	\eta^*\bra{dt^i_\nu,dt^j_\nu} = m\, \delta_{i,m-j}\qquad m=\deg_\nu\la\qquad \nu\in\Gamma.
} 
\end{proposition}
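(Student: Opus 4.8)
The plan is to evaluate every entry $\eta^*(dt^i_\nu,dt^j_\nu)$ directly from \eqref{met}, writing $\eta^*(\alpha,\beta)=\tr_\nu(\alpha^\sharp\beta)$ with $\alpha^\sharp$ as in \eqref{sharp}, and feeding in the explicit flat $1$-forms of Proposition~\ref{flatcoord}. Put $m:=\deg_\nu\la$ and regard $dt^i_\nu$ as a projection of $\la^{-i/m}$ (so $\la^{0}=1$ and $\la^{-m/m}=\la^{-1}$ are the endpoints). At $\nu=\infty$ the series $\la^{-i/n}$ has only negative powers of $p$ (here $i\ge 1$), so the polynomial-part term in \eqref{sharp} drops out and $(dt^i_\infty)^\sharp=p^s\brac{\la_p\la^{-i/n}}^\infty_{\me 0}$; at a finite $\nu$ the series $\la^{-i/m}$ is regular at $\nu$ (it vanishes there to order $i\ge 0$), so the principal-part term in \eqref{sharp} drops out and $(dt^i_\nu)^\sharp=-p^s\brac{\la_p\la^{-i/m}}^\nu_{<0}$. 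One also notes that replacing $\la^{-i/m}$ by its projected form does not change $\alpha^\sharp$, since the discarded terms multiplied by $\la_p$ land in the part that the projection kills. In both cases the explicit $p^s$ cancels the $p^{-s}$ in the trace \eqref{traces}, so the computation, and its outcome, are uniform in $s=0,1$.

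The key step is a reduction to an unprojected residue. Feeding the displays above into \eqref{met}, and using that $(dt^i_\nu)^\sharp$ is tangent so that $\dual{(dt^i_\nu)^\sharp,\cdot}$ descends to the cotangent quotient (hence $dt^j_\nu$ may be replaced by its unprojected representative $\la^{-j/m}$), one is left with $-\res_{p=\nu}$ applied to the relevant projection of $\la_p\la^{-i/m}$ times $\la^{-j/m}$. Then I drop the projection: the discrepancy is the complementary projection of $\la_p\la^{-i/m}$ times $\la^{-j/m}$, and this carries no residue at $\nu$ — at $\infty$ the negative part of $\la_p\la^{-i/m}$ has $p$-degree $\le-1$ while $\la^{-j/m}$ has degree $-j\le-1$, so the product has degree $\le-2$; at a finite $\nu$ the regular part of $\la_p\la^{-i/m}$, times $\la^{-j/m}$ (which vanishes at $\nu$), is regular at $\nu$. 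This leaves
\[
\eta^*(dt^i_\nu,dt^j_\nu)=-\res_{p=\nu}\big(\la_p\,\la^{-(i+j)/m}\big).
\]

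Now I evaluate this residue. If $i+j\neq m$, then $\la_p\,\la^{-(i+j)/m}=\tfrac{m}{m-i-j}\,\big(\la^{1-(i+j)/m}\big)_p$ is an exact $p$-derivative, and the residue of a derivative at any point (including $p=\infty$) vanishes, so the entry is $0$. If $i+j=m$, then $\la_p\,\la^{-(i+j)/m}=\la_p/\la=(\log\la)_p$; since $\la\sim p^{n}$ near $\infty$ one has $(\log\la)_p=n\,p^{-1}+O(p^{-2})$ and hence $\res_{p=\infty}(\log\la)_p=-n$, while near a finite $\nu$ one has $\la\sim c\,(p-\nu)^{-m}$, hence $(\log\la)_p=-m\,(p-\nu)^{-1}+\cdots$ and $\res_{p=\nu}(\log\la)_p=-m$. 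Either way $\eta^*(dt^i_\nu,dt^j_\nu)=m$. Combining the two cases gives $\eta^*(dt^i_\nu,dt^j_\nu)=m\,\delta_{i,m-j}$, and one checks that for each $\nu$ the two index sets are matched bijectively by $i\mapsto m-i$, so the corresponding block is genuinely anti-diagonal.

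It remains to see that the metric is block-diagonal, i.e.\ $\eta^*(dt^i_\nu,dt^j_\mu)=0$ for $\nu\neq\mu$ in $\Gamma$. Here I use that $\sharp$ is the same isomorphism for all marked points (the equivalence theorem proved above) and compute $(dt^j_\mu)^\sharp$ from the $\mu$-expansion: it is a polynomial in $p$ when $\mu=\infty$, and $p^s$ times the principal part at $\mu$ of $\la_p\la^{-j/m_\mu}$ — a rational function whose only finite pole is at $\mu$ — when $\mu$ is finite; in either case it is regular at $\nu$. Pairing against $dt^i_\nu$ via the $\nu$-trace, all factors (including $p^{-s}$, since $\nu\neq 0$) are regular at $\nu$, so the residue vanishes; for $\nu=\infty$ one adds a degree count showing the product is $O(p^{-2})$ there, and in the single remaining delicate case $\nu=0$, $s=1$ one notes that $(dt^j_\mu)^\sharp$ carries an explicit factor $p$ and hence vanishes at $p=0$, absorbing the pole of $p^{-1}$. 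I expect the only real friction to be this bookkeeping of orders and degrees, together with the need to pick the correct representative of $\sharp$ at each marked point; the analytic content is entirely the triviality that the residue of an exact derivative is zero, which is what forces every off-anti-diagonal entry to disappear.
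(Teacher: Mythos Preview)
Your proof is correct and follows essentially the same route as the paper: you compute $(dt^i_\nu)^\sharp$ from \eqref{sharp}, reduce the diagonal entries to the unprojected residue $-\res_{p=\nu}(\la_p\,\la^{-(i+j)/m})$, and show the off-block entries vanish by the same regularity/degree bookkeeping the paper uses. The only noteworthy difference is in the final evaluation of that residue: the paper performs the change of variables $p\mapsto\la$ (picking up the multiplicity $m=\deg_\nu\la$) and reads off $\res_{\la=\infty}\la^{-(i+j)/m}$, whereas you observe that for $i+j\neq m$ the integrand is an exact $p$-derivative and for $i+j=m$ it is the logarithmic derivative $(\log\la)_p$, whose residue is the order of $\la$ at $\nu$. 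Both arguments are standard and equally short; yours has the mild advantage of avoiding fractional powers in the substitution, while the paper's makes the appearance of the degree $m$ slightly more transparent. For the off-block vanishing you invoke the equivalence theorem to compute $\sharp$ at $\mu$ and pair at $\nu$, which is a clean way to package what the paper does by direct manipulation of the projected residues; the content is the same.
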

\begin{proof} First notice that $\bra{dt^i_\infty}^\sharp = p^s\brac{\la_p dt^i_\infty}^\infty_{\me 0}$
and $\bra{dt^j_{\nu}}^\sharp = -p^s\brac{\la_p dt^j_{\nu}}^{\nu}_{< 0}$ for $\nu\in\tilde{\Gamma}$.
Then, we have
\eqq{
	\eta^*\bra{dt^i_\infty, dt^j_{\nu}} &= \tr_\infty\bra{dt^i_\infty \bra{dt^j_{\nu}}^\sharp}
	= -\res_{p=\infty}\bra{\la^{-\frac{i}{n}}\brac{\la_p \la^{-\frac{j}{m}}}^{\nu}_{< 0}}\\
	&= -\res_{p=\infty}\bra{\brac{\la^{-\frac{i}{n}}}^\infty_{\me 0}\brac{\la_p \la^{-\frac{j}{m}}}^{\nu}_{< 0}} = 0
}
and for different $\nu, \nu'\in\tilde{\Gamma}$:
\eqq{
	\eta^*\bra{dt^i_{\nu}, dt^j_{\nu'}} &= \tr_{\nu}\bra{dt^i_{\nu}\bra{dt^j_{\nu'}}^\sharp}
	= \res_{p=\nu}\bra{\la^{-\frac{i}{m}}\brac{\la_p \la^{-\frac{j}{m'}}}^{\nu'}_{< 0}}\\
	&= \res_{p=\nu}\bra{\brac{\la^{-\frac{i}{m}}}^{\nu}_{< 0}\brac{\la_p \la^{-\frac{j}{m'}}}^{\nu'}_{< 0}} = 0.
}
Besides, 
\eqq{
	\eta^*\bra{dt^i_\infty, dt^j_\infty} &= \tr_\infty\bra{dt^i_\infty \bra{dt^j_\infty}^\sharp}
	= -\res_{p=\infty}\bra{\la^{-\frac{i}{n}}\brac{\la_p \la^{-\frac{j}{n}}}^\infty_{\me 0}}\\
	&= -\res_{p=\infty}\bra{\la_p \la^{-\frac{i+j}{n}}} = -n \res_{\la=\infty}\bra{\la^{-\frac{i+j}{n}}}
	= n\, \delta_{i,n-j}
}
where in the residue integral one makes the change of coordinates $p\map \la = p^n + \ldots$, taking into account the multiplicity of this transformation, $n=\deg_\infty\la$. Similarly, for $\nu\in\tilde{\Gamma}$:
\eqq{
	\eta^*\bra{dt^i_{\nu}, dt^j_{\nu}} &= \tr_{\nu}\bra{dt^i_{\nu} \bra{dt^j_{\nu}}^\sharp}
	= \res_{p=\nu}\bra{\la^{-\frac{i}{m}}\brac{\la_p \la^{-\frac{j}{m}}}^{\nu}_{< 0}}\\
	&= \res_{p=\nu}\bra{\la_p \la^{-\frac{i+j}{m}}} = -m \res_{\la=\infty}\bra{\la^{-\frac{i+j}{m}}}
	= m\, \delta_{i,m-j},
}
where the change of coordinates is of the form $p\map \la = \ldots + (p-\nu)^{-m} $ and $m=\deg_\nu\la$. 
\end{proof}

\begin{proposition}\label{eulervf}
The coefficients of the Euler vector field $E = \la - \frac{1}{n}p\la_p$ in the flat coordinates, that is
$E^i_\nu\equiv  \dual{dt^i_\nu, E}$, for $\nu=\infty$ are  
\eqq{
	E^{i}_\infty = \bra{\frac{1-s}{n}+\frac{n-i}{n}}t^i_{\infty}\quad\text{for}\quad 1\les i\les n-1 
}
and for $\nu\in\tilde{\Gamma}$ are
\eqq{
E_{\nu}^j =
\begin{cases}
 \bra{\frac{1-s}{n}+\frac{m-j}{m}}t^j_{\nu}
      & \text{for}\quad  0\les j< m, \\
   \frac{1}{n} t^m_{\nu}   & \text{for}\quad s=0,\, j=m\\
   \frac{m_0}{n}+1 & \text{for}\quad s=1,\, \nu=0,\, j=m\\
   \frac{m}{n} & \text{for}\quad s=1,\, \nu\neq0,\, j=m ,
\end{cases}
}
where $m=\deg_\nu\la$.
\end{proposition}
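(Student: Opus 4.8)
The plan is to evaluate $E^i_\nu=\dual{dt^i_\nu,E}=\tr_\nu\bra{E\,dt^i_\nu}$ directly, using the explicit Euler field $E=\la-\tfrac1n p\la_p$ together with the flat differentials of Proposition~\ref{flatcoord}. First I would record two facts. By Lemma~\ref{leme} the field $E$ is tangent to $\Alg_M$; expanding $p\la_p$ in partial fractions one checks, moreover, that in the coordinates $a_i,v_j$ it is the scaling field $E=\tfrac1n\bra{\sum_i a_i\,\pr_{a_i}+\sum_j v_j\,\pr_{v_j}}$, so in particular $\Dir_E v_k=\tfrac1n v_k$. Secondly, since $\pr$ is trace-invariant the residue of a total $p$-derivative vanishes at every marked point (including $\infty$), which gives the integration-by-parts rule
\eqq{
	\tr_\nu\bra{p\,\pr_p g}=-(1-s)\,\tr_\nu(g),\qquad \nu\in\Gamma,
}
for any $g$ for which the traces are defined.

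For the power-type coordinates ($\nu=\infty$ with $1\les i\les n-1$, and $\nu\in\tilde{\Gamma}$ with $0\les j<m$, $m=\deg_\nu\la$) the flat differential $dt^j_\nu$ represents, modulo $(\T_\la\Alg_M)^\perp$, the ``global'' form $\la^{-j/m}$; since $E$ is orthogonal to $(\T_\la\Alg_M)^\perp$ we may use that representative and compute $\tr_\nu\bra{E\la^{-j/m}}$. From $\la_p\la^{-j/m}=\tfrac{1}{1-j/m}\pr_p\bra{\la^{1-j/m}}$ one gets $E\la^{-j/m}=\la^{1-j/m}-\tfrac{1}{n(1-j/m)}\,p\,\pr_p\bra{\la^{1-j/m}}$, and applying $\tr_\nu$ and the rule above,
\eqq{
	E^j_\nu=\Bigl(1+\frac{1-s}{n(1-j/m)}\Bigr)\tr_\nu\bra{\la^{1-j/m}}=\Bigl(\frac{m-j}{m}+\frac{1-s}{n}\Bigr)t^j_\nu,
}
where the definition $t^j_\nu=\tfrac{1}{1-j/m}\tr_\nu\la^{1-j/m}$ is used in the last step. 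Specialising $m=n$, $\nu=\infty$ gives $E^i_\infty=\bra{\tfrac{1-s}{n}+\tfrac{n-i}{n}}t^i_\infty$; for finite $\nu$ this is the first line of the claim.

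For the top coordinate $j=m$ one has $t^m_\nu=\tr_\nu\log\la+\tfrac mn\tr_\infty\log\la$, and I would compute $E^m_\nu=\Dir_E t^m_\nu$ directly from this expression. Here $\Dir_E\log\la=E\la^{-1}=1-\tfrac1n p\,\pr_p\log\la$, the residue of $\pr_p\log\la$ at each marked point equals the corresponding degree, and $\tr_\nu(1)=\res_{p=\nu}(p^{-s})$ is nonzero only for $(\nu,s)=(0,1)$. The delicate point is that the trace of a logarithmic term must be regularised as in Appendix~\ref{bcint}; in particular one must account for the motion $\Dir_E v_k=\tfrac1n v_k$ of the variable poles when differentiating $\tr_{v_k}\log\la$, and the auxiliary term $\tfrac mn\tr_\infty\log\la$ in the definition of $t^m_\nu$ is precisely what fixes the renormalisation and makes the outcome well defined. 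Assembling these contributions yields $E^m_{v_k}=\tfrac1n t^m_{v_k}$ for $s=0$, $E^m_{v_k}=\tfrac mn$ for $s=1,\ \nu=v_k\neq0$, and $E^m_0=\tfrac{m_0}{n}+1$ for $s=1,\ \nu=0$ — i.e.\ the entries one expects by continuing the generic pattern $\bra{\tfrac{m-j}{m}+\tfrac{1-s}{n}}$ to $j=m$, with a pure additive shift appearing once the coefficient of $t^m_\nu$ degenerates (at $s=1$).

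I expect the main obstacle to be exactly this last case. The power-coordinate computation is routine residue calculus plus the single integration by parts above; the real work is the logarithmic case, where one has to carry the regularisation of $\log$-traces from Appendix~\ref{bcint}, keep track of the moving poles $v_k$, and see how the $\tfrac mn\tr_\infty\log\la$ corrections conspire to give the stated constants (for $s=1$) and the clean eigenvalue $\tfrac1n$ (for $s=0$).
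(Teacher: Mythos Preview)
Your proposal is correct and follows essentially the same route as the paper: you pair $E=\la-\tfrac1n p\la_p$ with the flat differentials and reduce everything to a single integration-by-parts identity, exactly as the paper does for $\nu=\infty$ and then for the logarithmic case $s=1,\ \nu=0,\ j=m_0$ (the remaining cases being declared ``similar''). Your extra observations---the clean rule $\tr_\nu(p\,\pr_p g)=-(1-s)\tr_\nu g$ and the identification of $E$ with the scaling field $\tfrac1n(\sum a_i\pr_{a_i}+\sum v_j\pr_{v_j})$---are not in the paper's proof but are helpful organisational devices; your caution about moving poles in the $j=m$ case is warranted, though in practice the paper sidesteps it by computing the pairing $\tr_\nu(\la^{-1}E)+\tfrac{m}{n}\tr_\infty(\la^{-1}E)$ directly rather than differentiating $\tr_{v_k}\log\la$.
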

\begin{proof}
For $\nu=\infty$ the derivation is as follows  
\eqq{
E^{i}_\infty &= \Der_E t^{i}_{\infty}  = \tr_\infty\bra{\la^{-\frac{i}{n}} E} 
= -\res_{p=\infty}\bra{p^{-s}\la^{-\frac{i}{n}}\bra{ \la - \frac{1}{n}p\la_p}}\\
&= -\bra{1+\frac{1-s}{n-i}}\res_{p=\infty}\la^{1-\frac{i}{n}} = \bra{\frac{1-s}{n}+\frac{n-i}{n}}t^i_\infty.
 }
In particular for $s=1$, $\nu=0$ and $j=m_0$ we have
\eqq{
	\dual{dt^{m_0}_{0},E} &= \Der_E t^{m_0}_{0} 
	= \tr_{0}\bra{\la^{-1}E} + \frac{m_0}{n}\tr_{\infty}\bra{\la^{-1}E}\\ 
	&= \res_{p=0}\bra{p^{-1}-\frac{1}{n}\la^{-1}\la_p} 
	- \frac{m_0}{n}\res_{p=\infty}\bra{p^{-1}-\frac{1}{n}\la^{-1}\la_p}\\
&= \res_{p=0}\bra{p^{-1}+\frac{m_0}{n}p^{-1}} 
	- \frac{m_0}{n}\res_{p=\infty}\bra{p^{-1}-p^{-1}} = \frac{m_0}{n} + 1.
}
The computation in all other cases is similar.
\end{proof}

In order to use Lemma~\ref{Fpro} we need the following proposition, proof of which is straightforward.

\begin{proposition}\label{hden}
Define, by means of the flat coordinates from Proposition~\ref{flatcoord}, functions 
$\H^{\nu,i}_{(0)} := t^i_\nu$. The recurrence formula \eqref{recH} takes the form
\eqq{
	\frac{\pr \H^{\nu,i}_{(n)}}{\pr \la} = \H^{\nu,i}_{(n-1)}.
}
Hence,  we have
\eqq{
\mathcal{H}^{\infty,i}_{(1)} := \frac{1}{1-\frac{i}{n}}\frac{1}{2-\frac{i}{n}}\tr_\infty \la^{2-\frac{i}{n}}\quad\text{for}\quad 1\les i\les n-1 
}
and for $1-s\les k \les K$ we have 
\eq{\label{ls2}
\H^{v_k,j}_{(1)} =
\begin{cases}
  \frac{1}{1-\frac{j}{m_k}}\frac{1}{2-\frac{j}{m_k}}\tr_{v_k}\la^{2-\frac{j}{m_k}}    & \text{for}\quad  0\les j< m_k, \\
   \tr_{v_k} \bra{\la\log\la-\la} + \frac{m_k}{n}\tr_{\infty}\bra{\la\log\la-\la}    & \text{for}\quad j=m_k.
\end{cases}
}
\end{proposition}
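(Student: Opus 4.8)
The plan is to deduce the simplified recurrence $\frac{\pr\H^{\nu,i}_{(n)}}{\pr\la}=\H^{\nu,i}_{(n-1)}$ from \eqref{recH} and then to obtain $\H^{\nu,i}_{(1)}$ by integrating once the explicit flat coordinates of Proposition~\ref{flatcoord}. First I would check that the hypothesis of the proposition establishing the simplified recurrence \eqref{recH} is satisfied for the family at hand. By Proposition~\ref{flatcoord} each $t^i_\nu=\H^{\nu,i}_{(0)}$ is $\tr_\nu$ of a power $\la^a$ of $\la$ or of $\log\la$, and one readily sees that every $\H^{\nu,i}_{(n)}$ produced by \eqref{rec2} is again $\tr_\nu$ of such a function of the single variable $\la$. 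Hence, by Remark~\ref{rde}, the associated differentials $d\H^{\nu,i}_{(n)}:\Alg_M\arrow\Alg$ are differentiable functions of $\la$ with $\Dir_X d\H^{\nu,i}_{(n)}=\frac{d (d\H^{\nu,i}_{(n)})}{d \la}X$, and, since $\pr$ is invariant on $\Alg_M$ (that is, commutes with all $\Dir_X$), also $(d\H^{\nu,i}_{(n)})'=\frac{d (d\H^{\nu,i}_{(n)})}{d \la}\la'$. Thus that proposition applies and \eqref{recH} holds: $\frac{d (d\H^{\nu,i}_{(n)})}{d \la}=d\H^{\nu,i}_{(n-1)}$.

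Next I would integrate this once. Writing $\H^{\nu,i}_{(n)}=\tr_\nu F^{\nu,i}_{(n)}(\la)$ with $F^{\nu,i}_{(n)}$ a scalar function of $\la$, the exterior differential $d$ on $\Alg_M$ commutes with differentiation of the integrand with respect to $\la$ (both being single-variable operations), so \eqref{recH} lifts to $\frac{\pr\H^{\nu,i}_{(n)}}{\pr\la}=\H^{\nu,i}_{(n-1)}$ once the antiderivative in $\la$ is normalised to have no constant term; this normalisation is innocuous because the prepotential in Proposition~\ref{Fpro} is anyway defined only modulo quadratic polynomials. This is exactly the asserted simplified form of the recurrence.

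Finally I would specialise to $n=1$. For $\nu=\infty$ and $1\les i\les n-1$ the integrand of $\H^{\infty,i}_{(0)}$ is $\frac{1}{1-i/n}\la^{1-i/n}$, whose $\la$-antiderivative is $\frac{1}{1-i/n}\frac{1}{2-i/n}\la^{2-i/n}$; applying $\tr_\infty$ gives the stated formula, and the cases $\nu=v_k$ with $0\les j<m_k$ are identical. For $j=m_k$ the integrand is $\log\la$ inside $\tr_{v_k}$ together with $\frac{m_k}{n}\log\la$ inside $\tr_\infty$, and since $\int\log\la\,d\la=\la\log\la-\la$ one obtains $\H^{v_k,m_k}_{(1)}=\tr_{v_k}\bra{\la\log\la-\la}+\frac{m_k}{n}\tr_\infty\bra{\la\log\la-\la}$.

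The step I expect to need the most care — and the reason the proof is not entirely mechanical — is the interplay of the projections onto the cotangent spaces $\T^*_\la\Alg_M^\nu$ appearing in the representatives $d t^i_\nu$ of Proposition~\ref{flatcoord} with the single-variable operators $\Dir_X$ and $\pr$: one has to confirm that these linear truncations commute with $\Dir_X$ and, after $\tr_\nu$, with $\pr$. This holds because $\pr$ shifts Laurent order in a controlled way and only finitely many coefficients contribute to the residue defining $\tr_\nu$, so the truncation never interferes with the identities used above. The remaining subtlety, the multivaluedness of $\log\la$ when $j=m_k$, is taken care of by the conventions for traces of logarithmic terms recalled in Appendix~\ref{bcint}. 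With these remarks in place the rest is a direct computation.
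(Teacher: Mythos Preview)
Your proposal is correct and is precisely the approach the paper has in mind; indeed, the paper simply declares the proof ``straightforward'' and omits it. Your care in checking that the single-variable hypothesis of the preceding proposition (the one yielding \eqref{recH}) is satisfied, and in handling the constant of integration when passing from the differential relation $\frac{d(d\H)}{d\la}=d\H_{(n-1)}$ to the function relation $\frac{\pr\H_{(n)}}{\pr\la}=\H_{(n-1)}$, is appropriate and more explicit than anything the paper records.
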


Now, combining the above results and using Lemma~\ref{Fpro}
we have the following theorem.

\begin{theorem}
In all cases but one there is a structure of Frobenius manifold on $\Alg_M$. 
The exception is the case of $s=1$ with $\deg_0\la = -1$, when
there is a structure of Frobenius manifold with a nonflat unit vector field.
All the ingredients of these structures are defined above.
The respective prepotential functions have the following form 
\eq{\label{prepot}
\F = \frac{1}{3-d}\bra{\frac{1}{n}\sum_{i=1}^{n-1}E^i_{\infty} \H^{\infty,n-i}_{(1)}
+ \sum_{k=1-s}^K\frac{1}{m_k}\sum_{j=0}^{m_k}E^j_{v_k}\H^{v_k,m_k-j}_{(1)}},
}
where $d=1+(s-1)\frac{2}{n}$.
\end{theorem}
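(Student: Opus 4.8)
The plan is to synthesise the results of Section~\ref{Fman} with the computations of the present section: one checks that the hypotheses of Theorem~\ref{main} hold on each underlying manifold $\Alg_M$ of \eqref{am}, which furnishes the pre-Frobenius structure, and then upgrades it to a Frobenius manifold by adjoining the Euler vector field of Lemma~\ref{leme}, reading off the prepotential \eqref{prepot} at the end from Proposition~\ref{Fpro}. Concretely, I would first note that the endomorphism $\ell$ of \eqref{ell} is invariant on $\Alg_M$ — it commutes with the directional derivatives $\Dir_{\pr\la/\pr a_i}$ and $\Dir_{\pr\la/\pr v_j}$ spanning $\vf{\Alg_M}$, just as in the formal-series case — and, by Proposition~\ref{propll}, satisfies the Rota-Baxter identity \eqref{RBi} with $\kappa=\tfrac14$ together with the requirement \eqref{frel}. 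Combined with Lemma~\ref{froba}, which guarantees that the metric \eqref{met} is nondegenerate at a generic $\la\in\Alg_M$, that each $(\T_\la\Alg_M^\nu)^\perp$ is an ideal for the multiplication \eqref{mn}, and that \eqref{met} descends to $\Alg/(\T_\la\Alg_M^\nu)^\perp\cong\T_\la^*\Alg_M^\nu$, Theorem~\ref{main} applies and yields the Levi-Civita connection \eqref{lcc2}, flatness of the metric, closedness of the counity $\eps$, and symmetry of $\nabla\circ$ in all three arguments. By \cite{Her}, Theorem~2.15, this last property is equivalent to symmetry of $\nabla c$, hence to the local existence of a prepotential, so conditions (1)--(3) of Definition~\ref{frob} (apart from flatness of $e$) are secured.

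\textbf{Unit and Euler vector field.} Next I would invoke Lemma~\ref{leme}: the induced multiplication in the cotangent bundle is unital with the unit $e$ exhibited there, $e=1$ in all cases except the two special degenerations; since $d\eps=0$ (Proposition~\ref{prop}, via \eqref{frel}), flatness of $e$ is equivalent by \cite{Her}, Lemma~2.16, to $\Lie_e\eta^*=0$, i.e. to $\Lie_e\circ=0$, which holds in every case save $s=1$, $\deg_0\la=-1$ — precisely the exception in the statement. For the Euler vector field $E=\la-\tfrac1n p\la_p$, Lemma~\ref{leme} gives $\Lie_E\circ=(d-1)\circ$ with $d=1+(s-1)\tfrac2n$; together with the scaling of $\tr_\nu$ this yields $\Lie_E\eta^*=(d-2)\eta^*$, and $\nabla\nabla E=0$ follows from Proposition~\ref{eulervf}, which presents the components $E^i_\nu$ as affine-linear functions of the flat coordinates of Proposition~\ref{flatcoord}. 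Thus all four requirements of Definition~\ref{frob} are met, with the stated exception concerning $\nabla e=0$.

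\textbf{Prepotential.} Finally I would compute $\F$ via Proposition~\ref{Fpro}. For every admissible pair $(n,s)$ one has $d=1+(s-1)\tfrac2n\neq3$, so $\F=\tfrac1{3-d}\sum_{i,j}\eta_{ij}E^i\H^j_{(1)}$ modulo quadratic polynomials. The $1$-forms $dt^i_\nu$ of Proposition~\ref{flatcoord} satisfy the recurrence \eqref{recH}, so with the normalisation $\H^{\nu,i}_{(0)}=t^i_\nu$ the densities $\H^{\nu,i}_{(1)}$ of Proposition~\ref{hden} are the first-order terms of the deformed-flat coordinates; moreover the metric is block-anti-diagonal, $\eta^*(dt^i_\nu,dt^j_\nu)=m_\nu\,\delta_{i,m_\nu-j}$, so its covariant components $\eta_{ij}$ are anti-diagonal with entries $m_\nu^{-1}$ within each $\nu$-block. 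Substituting the $E^i_\nu$ from Proposition~\ref{eulervf} and the $\H^{\nu,i}_{(1)}$ from Proposition~\ref{hden} collapses the double sum to \eqref{prepot}.

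\textbf{Main obstacle.} The substantive work — carried out in Lemmas~\ref{froba}, \ref{leme}, in Propositions~\ref{flatcoord}, \ref{eulervf}, \ref{hden}, and in the equivalence theorem for the metrics attached to different $\nu\in\Gamma$ — lies in the fact that $\Alg_M$ is a simultaneous reduction of several infinite-dimensional formal-series manifolds: one must confirm that the ideal conditions \eqref{rq}--\eqref{req2} hold at all marked points at once, that the $t^i_\nu$ coming from the various Laurent expansions together constitute a chart on the $N$-dimensional manifold ($N=K+L+s-1$), and that the metric, the multiplication and the connection do not depend on the chosen $\nu$. Granting these inputs, the theorem is merely their assembly through Theorem~\ref{main} and Proposition~\ref{Fpro}; the only delicate residual bookkeeping is the treatment of the logarithmic contributions to $t^{m_k}_{v_k}$ and $\H^{v_k,m_k}_{(1)}$ (see Appendix~\ref{bcint}) when forming the prepotential.
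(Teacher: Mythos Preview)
Your proposal is correct and follows essentially the same approach as the paper, which presents the theorem with only the sentence ``Now, combining the above results and using Lemma~\ref{Fpro} we have the following theorem.'' You have simply unpacked this synthesis in detail, invoking exactly the same ingredients (Proposition~\ref{propll}, Theorem~\ref{main}, Lemmas~\ref{froba} and~\ref{leme}, Propositions~\ref{flatcoord}--\ref{hden}, and Proposition~\ref{Fpro}) in the intended order.
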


\begin{remark}
The class of Frobenius manifolds constructed in this section corresponds
to the Frobenius manifolds classified in \cite{Dub1} and associated with 
Hurwitz spaces of zero genus. With respect to the related Landau-Ginzburg formalism,  the cases of $s=0$ and $s=1$ correspond to the choice of primary differentials as $d\omega = dp$ and $d\omega = \frac{dp}{p}$, respectively.   
The particular case of $\Alg_M$ for $s=0$, which consists of polynomial functions \eqref{polf},
is associated with Frobenius manifolds arising in the Saito's (singularity) theory labeled by $A_{n-1}$, see \cite{Dub1}.
On the other-hand,  $\Alg_M$ for $s=1$, which consists of meromorphic functions with poles only at infinity and zero,
 is related with a class of Frobenius manifolds studied in \cite{Dub4} (see also \cite{Car})
associated with the extended affine Weyl groups of the $A$ series. 
The explicit form of the prepotential \eqref{prepot} is a close analog of respective formulae
given in \cite{Dub1} and \cite{K2} (for $s=0$).    
\end{remark}

Now, we will illustrate the presented theory with few characteristic examples, some of them will contain more details 
of the related computations than other. The scheme is as follows. First, one needs to establish
the manifold subspace $\Alg_M$ \eqref{am} and compute the flat coordinates $t_1,\ldots,t_N$
according to Proposition~\ref{flatcoord}. The flat coordinates can be chosen so that
the unit $e=\frac{\pr}{\pr {t_1}}$ (except the case of $s=1$ with $m_0 = \deg_0\la=-1$).
Next, the prepotential function $\F$ is given by \eqref{prepot}, where the formulae 
from Propositions~\ref{eulervf} and \ref{hden} must be used. The respective Euler vector fields can be obtained from Lemma~\ref{leme} or Proposition~\ref{eulervf}. Having the prepotential $\F$ coefficients of the covariant metric $\eta$ and 
the structure constants of the multiplication in the tangent bundle can be easily computed from \eqref{c}.

\begin{example}
Consider the manifold space $\Alg_M$, in the case of $s=0$, that consists of meromorphic
functions with only one pole at infinity of fixed order $n$: 
\eq{\label{polf}
	\la = p^n + u_{n-2}p^{n-2} + \ldots + u_1 p + u_0\qquad n\me 2.
}
In this case the prepotential \eqref{prepot} takes the following 'symmetric' form: 
\eqq{
	\F = \frac{n^2}{2(n+1)} \sum_{i=1}^{n-1} \frac{n+1-i}{i(n^2-i^2)}\res_{p=\infty}\la^{\frac{n-i}{n}} \res_{p=\infty}\la^{\frac{n+i}{n}}.
}

In particular case of $n=4$ we have 
\eqq{
\la  = p^4 + u p^2 + v p + w \equiv p^4 + t_3 p^2 + t_2 p + t_1 + \frac{1}{8} t_3^2,
}
where the flat coordinates are given by
\eqq{
t_1\equiv t^{1}_\infty &= -\frac{4}{3}\res_{p=\infty}\la^{\frac{3}{4}} = w-\frac{1}{8}u^2\\
t_2\equiv t^{2}_\infty &= -2\res_{p=\infty}\la^{\frac{1}{2}} = v\\
t_3\equiv t^{3}_\infty &= -4\res_{p=\infty}\la^{\frac{1}{4}} = u.
}
Using the above formula one obtains
\eqq{
\F = \frac{1}{8} t_1 t_2^2+\frac{1}{8} t_1^2 t_3-\frac{1}{64} t_2^2 t_3^2+\frac{t_3^5}{3840}.
}
The related Euler vector field is
\eqq{
	E = t_1\frac{\pr}{\pr {t_1}} + \frac{3}{4}t_2\frac{\pr}{\pr {t_2}}
+\frac{1}{2}t_3\frac{\pr}{\pr {t_3}}\quad\iff\quad E(\la) =\la - \frac{1}{4}p\la_p
}
and the weight $d=\frac{1}{2}$. The unit vector field is $e=\frac{\pr}{\pr {t_1}}$, since $e(\la)=1$.
\end{example}

\begin{example}
Consider $\Alg_M$, for $s=0$, that consists of functions with pole at infinity and 
two non-fixed 'finite' poles $v_1=v$ and $v_2=w$, all of order one, that is $n=m_1=m_2=1$. 
Hence, $\la\in\Alg_M$ has the form
\eqq{
\la = p + \frac{a}{p- v} + \frac{b}{p- w} = 
 p + \frac{t_3}{p- t_1 - t_2} + \frac{t_4}{p- t_1 + t_2}.
}
We take the following flat coordinates:
\eqq{
t_1 &\equiv \frac{1}{2}\bra{t_v^1 + t_w^1} = \frac{1}{2}\bra{v+w}\\
t_2 &\equiv \frac{1}{2}\bra{t_v^1 - t_w^1} = \frac{1}{2}\bra{v-w}\\
t_3 &\equiv t_v^0 = a\\
t_4 &\equiv t_w^0 = b,
}
where, for $\nu=v,w$, by Proposition~\ref{flatcoord}:
\eqq{
t^0_\nu = \res_{p=\nu}\la,\qquad t^1_\nu = \res_{p=\nu}\log \la -\res_{p=\infty}\log \la.
}
By Proposition~\ref{hden}:
\begin{align*}
\H^{\nu,0}_{(1)} &= \frac{1}{2}\res_{p=\nu}\la^2\\
 \H^{\nu,1}_{(1)} &=  \res_{p=\nu}\bra{\la \log\la - \la}- \res_{p=\infty}\bra{\la \log\la - \la},
\end{align*}
hence using the formula \eqref{prepot} the prepotential is
\begin{align*}
\F = t_1 t_2 (t_3-t_4)+\frac{1}{4} (t_1^2 + t_2^2 ) (t_3+t_4) + \frac{1}{2}t_3^2 \log t_3 
+ t_3t_4 \log t_2 +\frac{1}{2}t_4^2 \log t_4.
\end{align*}
The related Euler vector field 
\eqq{
	E = t_1\frac{\pr}{\pr t_1} + t_2\frac{\pr}{\pr t_2}
+2t_3\frac{\pr}{\pr t_3} + 2t_4\frac{\pr}{\pr t_4} \quad\iff\quad E(\la) =\la - p\la_p,
}
weight $d=-1$ and the unit vector field $e=\frac{\pr}{\pr {t_1}}$ as $e(\la)=1-\la_p$.
\end{example}

\begin{example}
Consider $\Alg_M$, for $s=1$, consisting of functions with poles at infinity, $0$ and 
non-fixed 'finite' pole $v_1=w$, all of order one, that is $n=m_0=m_1=1$: 
\eqq{
\la = p + u+\frac{v}{p} + \frac{s}{p- w} = 
p + t_1+t_3+\frac{e^{t_2}}{p} - \frac{t_3\, e^{t_4}}{p + e^{t_4}}.
}
The flat coordinates are:
\eqq{
t_1 \equiv t_0^0 = u-\frac{s}{w},\qquad 
t_2 \equiv t_0^1 = \log{v},\qquad
t_3 \equiv t_w^0 = \frac{s}{w},\qquad
t_4 \equiv t_w^1 = \log (-w).
}
One obtains the prepotential
\begin{align*}
\F = \frac{1}{2} t_1^2 t_2+t_1 t_3 t_4+\frac{1}{2} t_3^2 t_4+e^{t_2}+t_3e^{t_2-t_4} -t_3e^{t_4} +\frac{1}{2} t_3^2 \log t_3. 
\end{align*}
Besides,
\eqq{
	E = t_1\frac{\pr}{\pr t_1} + 2\frac{\pr}{\pr t_2}
+t_3\frac{\pr}{\pr t_3} + \frac{\pr}{\pr t_4} \quad\iff\quad E(\la) =\la - p\la_p,
}
$d=1$ and $e=\frac{\pr}{\pr {t_1}}$ as $e(\la)=1$.
\end{example}

\begin{example}
Consider $\Alg_M$, for $s=1$, consisting of functions with poles of order one at infinity and $0$, that is
$n=m_0=1$: 
\eqq{
\la = p + u+\frac{v}{p}  = p + t_1+\frac{e^{t_2}}{p},
}
where the flat coordinates are $t_1 \equiv t_0^0 = u$ and $t_2 \equiv t_0^1 = \log{v}$.
The potential is 
\eqq{
\F = \frac{1}{2} t_1^2 t_2+e^{t_2}.
}
Besides, 
\eqq{
	E = t_1\frac{\pr}{\pr t_1} + 2\frac{\pr}{\pr t_2} \quad\iff\quad E(\la) =\la - p\la_p,
}
$d=1$ and $e=\frac{\pr}{\pr {t_1}}$ as $e(\la)=1$. This is a celebrated example of Frobenius manifold
corresponding to the quantum cohomology of complex projective line $\mathbb{P}^1$. 
\end{example}

\begin{example}
Let $s=1$ and $\Alg_M$ consists of functions with poles of order one at infinity and $v_1=v$, that is
$n=m_1=1$ as well as $m_0=\deg_0\la=-1$ : 
\eqq{
\la = p + u + \frac{uw}{p- w} = 
p + t_1 - \frac{t_1\, e^{t_2}}{p + e^{t_2}},
}
with flat coordinates $t_1 \equiv t_w^0 = u$ and $t_2 \equiv t_w^1 = \log (-w)$. The prepotential is
\begin{align*}
\F = \frac{1}{2} t_1^2 t_2  -t_1e^{t_2} +\frac{1}{2} t_1^2 \log t_1. 
\end{align*}
Besides,
\eqq{
	E = t_1\frac{\pr}{\pr t_1}+ \frac{\pr}{\pr t_2} \quad\iff\quad E(\la) =\la - p\la_p,
}
$d=1$ and $e=\frac{1}{t_1+e^{t_2}}\bra{t_1\frac{\pr}{\pr t_1}- \frac{\pr}{\pr t_2}}$, since $e(\la)=1+\frac{w}{u-w}\la_p$. 
This is example when the unit field $e$ is not flat. This particular example of Frobenius manifold was considered very recently in \cite{BCR}.
\end{example}

\begin{example}
The case of $s=0$ and $A_M$ consisting of meromorphic function with pole of order one
at infinity ($n=1$) and $v\equiv t_1$ ($m_1=1$):
\eqq{
	\la = p+\frac{t_3}{p-t_1}+\frac{t_2^2}{(p-t_1)^2},
}
where the flat coordinates are: $t_1\equiv \frac{1}{2}t_v^2$, $t_2\equiv \frac{1}{2}t_v^1$ and $t_3\equiv t_v^0$. 
One obtains the prepotential
\eqq{
	\F =  t_1 t_2^2 + \frac{1}{2} t_1^2 t_3 + \frac{1}{2} t_3^2 \log t_2,
}
with
\eqq{
	E = t_1\frac{\pr}{\pr t_1} + \frac{3}{2}t_2\frac{\pr}{\pr t_2}  + 2t_3 \frac{\pr}{\pr t_3} \quad\iff\quad E(\la) =\la - p\la_p,
}
$d=-1$ and $e=\frac{\pr}{\pr {t_1}}$ as $e(\la)=1-\la_p$.
\end{example}

\begin{example}
The case of $s=1$, and six-dimensional Frobenius manifold associated with $\Alg_M$
for $n=m_0=2$ and $m_1=1$. Thus, 
\eqq{
	\la = p^2 +  t_4\, p + t_1 + t_5 + \frac{t_2\, e^{\frac{t_3}{2}}}{p}+ \frac{e^{t_3}}{p^2}  - \frac{t_5\,e^{t_6}}{p+e^{t_6}},
}
where the flat coordinates are $t_1\equiv t_0^0$, $t_2\equiv t_0^1$, $t_3\equiv t_0^2$,
$t_4\equiv t_\infty^1$, $t_5\equiv t_v^0$ and $t_6\equiv t_v^1$ ($v\equiv -e^{t_6}$).
One obtains
\eqq{
	\F &= -\frac{t_2^4}{96}+\frac{1}{4} t_1 t_2^2 
	- \frac{t_4^4}{96} + \frac{1}{4} t_1 t_4^2 + \frac{1}{4}t_1^2 t_3   +\frac{1}{4} t_4^2 t_5+\frac{1}{2} t_5^2 t_6+t_1 t_5t_6 
     + t_2t_4 e^{\frac{t_3}{2}} \\
    &\quad\    +  \frac{e^{t_3}}{2} + \frac{1}{2}t_5 e^{2 t_6}  - t_4 t_5e^{t_6} 
    + t_2 t_5e^{\frac{t_3}{2}-t_6} - \frac{1}{2} t_5e^{t_3-2 t_6}    + \frac{1}{2} t_5^2 \log t_5
}
and
\eqq{
	E = t_1\frac{\pr}{\pr t_1} + \frac{1}{2}t_2\frac{\pr}{\pr t_2} + 2 \frac{\pr}{\pr t_3}
	+\frac{1}{2}t_4 \frac{\pr}{\pr t_4} + t_5\frac{\pr}{\pr t_5}+ \frac{1}{2} \frac{\pr}{\pr t_6} \quad\iff\quad 
	E(\la) =\la - \frac{1}{2}p\la_p,
}
$d=1$ and $e=\frac{\pr}{\pr {t_1}}$ as $e(\la)=1$.	
\end{example}


\smallskip

\appendix

\section{Convention and notation}\label{A}

Here, we fix convention and notation as well as provide useful or required in the main text facts from the  rather
standard differential geometry, all in a coordinate-free form.

\paragraph{\bf Directional derivative}\label{a1} 

Let $M$ be a smooth manifold. The directional derivative (G\^ateaux derivative) at a point $p\in M$ 
of a tensor field $T$ is\footnote{For some chart $x:U\subset M\arrow \Km^n$ the formula for the directional derivative
takes the form
\eqq{
		\bra{\Dir_X T}(x) = \Diff{T\bra{x+\varepsilon X(x)}}{\varepsilon}{\varepsilon=0}, 
}
which is particularly useful in computations.}
\eqq{
	\bra{\Dir_X T}(p) = \Diff{T\bra{\phi_\varepsilon (p)}}{\varepsilon}{\varepsilon=0}\qquad p\in M, 	
}
where $X\in\vf{M}$ is some vector field and $\phi_\varepsilon$ is its flow, i.e. $X(p)=\Diff{\phi_\varepsilon (p)}{\varepsilon}{\varepsilon=0}$ and $\phi_0(p) = p$. Then, 
\eq{\label{dxf}
\Dir_X f \equiv X(f) = \dual{df,X},
}
where $\dual{\cdot,\cdot}:\of{M}\times\vf{M}\arrow \smf{M}$ is the ordinary duality pairing,
$f\in\smf{M}$ is a smooth function and $df\in\of{M}$ is its differential.
The Lie bracket between two vector fields $X$ and $Y$ takes the form 
\eq{\label{lbr}
\brac{X,Y} = \Dir_XY - \Dir_Y X.
} 
Using the above definition one can show that 
\eq{\label{xy}
		\Dir_{\brac{X,Y}}T = \Dir_X\Dir_Y T - \Dir_Y\Dir_X T.  
}
In fact, the directional derivative is the simplest example of a symmetric (torsionless) and flat covariant derivative.

\paragraph{\bf Lie derivative} 

Let $X\in\vf{M}$. The Lie derivative of functions coincides with the directional derivative \eqref{dxf}, $\Lie_X f = \Dir_X f$. The Lie derivative of vector fields coincides with the Lie bracket \eqref{lbr}, $\Lie_X Y = \brac{X,Y}$.
Through the Leibniz rule its action can be uniquely extended on any tensor field.\footnote{Alternatively
the Lie derivative of a tensor field $T$ can be defined as 
\eqq{
	(\Lie_X T)(p) = \Diff{(\phi_\varepsilon^*T)(p)}{\varepsilon}{\varepsilon=0}\qquad p\in M,	
}
where $\phi_\varepsilon^*$ is the usual pull-back of the flow $\phi_\varepsilon$ of $X$.} 

\begin{proposition} Let $\gamma\in\of{M}$. Then,
\eq{\label{lieof}
	\Lie_X\gamma = \Dir_X\gamma + \Dir_\gamma^*X,
}
where $\dual{\Dir_\gamma^*X,Y} := \dual{\gamma,\Dir_YX}$. 
The Lie derivative of a tensor field $\circ:\of{M}\otimes\of{M}\arrow \of{M}$, 
$\circ(\alpha,\beta)\equiv \alpha\circ\beta$,
is
\eq{\label{liecirc}
	\bra{\Lie_X\circ}(\alpha,\beta) = \bra{\Dir_X\circ}(\alpha,\beta)+ \Dir_{\alpha\circ\beta}^*X
	-\Dir_\alpha^*X\circ\beta - \alpha\circ\Dir_\beta^*X.
}
\end{proposition}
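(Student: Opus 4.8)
The plan is to derive both formulas from the Leibniz rule for the Lie derivative alone, using only the two facts recorded above: that $\Lie_X$ acts on functions as $\Dir_X$ and on vector fields via $\Lie_X Y = \brac{X,Y} = \Dir_X Y - \Dir_Y X$ (see \eqref{lbr}), and that $\Dir_X$ is itself a symmetric flat covariant derivative, hence a derivation commuting with contractions.

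First I would establish \eqref{lieof}. Fix an arbitrary $Y\in\vf{M}$ and apply, on one hand, the Leibniz rule for $\Lie_X$ to the function $\dual{\gamma,Y}$, and on the other hand the analogous rule for $\Dir_X$, which likewise commutes with the duality contraction:
\eqq{
\Dir_X\dual{\gamma,Y} &= \dual{\Lie_X\gamma,Y} + \dual{\gamma,\Lie_X Y},\\
\Dir_X\dual{\gamma,Y} &= \dual{\Dir_X\gamma,Y} + \dual{\gamma,\Dir_X Y}.
}
Subtracting and using $\Lie_X Y = \Dir_X Y - \Dir_Y X$ yields $\dual{\Lie_X\gamma,Y} = \dual{\Dir_X\gamma,Y} + \dual{\gamma,\Dir_Y X}$; since the last term equals $\dual{\Dir_\gamma^* X,Y}$ by the defining relation of the adjoint and $Y$ is arbitrary, \eqref{lieof} follows. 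One should note in passing that $\Dir_\gamma^* X$ is a genuine vector field: $\Dir_Y X$ is $\smf{M}$-linear in $Y$, so $Y\map\dual{\gamma,\Dir_Y X}$ is a contraction of $\gamma$ with a vector field.

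Next I would treat \eqref{liecirc}, regarding $\circ$ as a tensor field and writing the two Leibniz identities -- one for $\Lie_X$, one for $\Dir_X$ -- for the contraction that produces the $1$-form $\alpha\circ\beta$ out of $\circ$, $\alpha$ and $\beta$:
\eqq{
\Lie_X\bra{\alpha\circ\beta} &= \bra{\Lie_X\circ}(\alpha,\beta) + \bra{\Lie_X\alpha}\circ\beta + \alpha\circ\bra{\Lie_X\beta},\\
\Dir_X\bra{\alpha\circ\beta} &= \bra{\Dir_X\circ}(\alpha,\beta) + \bra{\Dir_X\alpha}\circ\beta + \alpha\circ\bra{\Dir_X\beta}.
}
Solving the first identity for $\bra{\Lie_X\circ}(\alpha,\beta)$, substituting \eqref{lieof} for $\Lie_X(\alpha\circ\beta)$, $\Lie_X\alpha$ and $\Lie_X\beta$, and then using the second identity to recognize the remaining directional-derivative combination $\Dir_X\bra{\alpha\circ\beta} - \bra{\Dir_X\alpha}\circ\beta - \alpha\circ\bra{\Dir_X\beta}$ as $\bra{\Dir_X\circ}(\alpha,\beta)$, one is left precisely with \eqref{liecirc} once the adjoint terms $\Dir_{\alpha\circ\beta}^* X$, $\Dir_\alpha^* X\circ\beta$ and $\alpha\circ\Dir_\beta^* X$ are collected with their signs.

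There is no essential obstacle here. The only two points demanding a little care are the justification that $\Dir_X$ obeys the same Leibniz and contraction rules as $\Lie_X$ -- which is exactly the observation that the directional derivative is a symmetric flat covariant derivative -- and the bookkeeping of which slot of the tensor $\circ$ each of $\alpha$, $\beta$ occupies, so that the three adjoint terms come out with the correct signs.
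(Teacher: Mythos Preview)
Your argument is correct and is essentially the paper's own proof: both parts are obtained from the Leibniz rule for $\Lie_X$ combined with $\Lie_X Y = \Dir_X Y - \Dir_Y X$, the only cosmetic difference being that the paper pairs $(\Lie_X\circ)(\alpha,\beta)$ against an arbitrary $Y$ rather than invoking the Leibniz rule for $\Dir_X$ on the tensor $\circ$ directly. One small slip: in your parenthetical remark, $\Dir_\gamma^* X$ is a genuine \emph{$1$-form}, not a vector field, as is clear from its defining pairing and from \eqref{lieof}.
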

\begin{proof}
From the Leibniz rule we have
\eqq{
	\dual{\Lie_X\gamma,Y} = \Lie_X\dual{\gamma,Y} - \dual{\gamma,\Lie_X Y} 
	=  \dual{\Dir_X\gamma,Y} + \dual{\gamma,\Dir_Y X}.
}
Hence, \eqref{lieof} follows. In a similar fashion, using \eqref{lieof} and
\eqq{
	\dual{(\Lie_X\circ)(\alpha,\beta),Y} &= \Lie_X\dual{\alpha\circ\beta,Y}  - \dual{\alpha\circ\beta,\Lie_XY}\\
	&\qquad\qquad- \dual{\Lie_X\alpha\circ\beta,Y} - \dual{\alpha\circ\Lie_X\beta,Y}
}
one obtains \eqref{liecirc}.
\end{proof}

\paragraph{\bf Levi-Civita connection}

Let the manifold $M$ be equipped with a (pseudo-Riemannian) covariant metric 
$\eta\in\Gamma\bra{S^2\T^*M}$. Then, $\eta$ and its inverse (contravariant metric) 
$\eta^*\in\Gamma\bra{S^2\T M}$ induces canonical isomorphisms
\eqq{
\begin{split}
&\sharp:\of{M}\arrow\vf{M}\qquad  \alpha\map \alpha^\sharp := \sharp(\alpha),\\
&\flat:\vf{M}\arrow\of{M}\qquad  X\map X^\flat:= \flat(X) 
\end{split}
}
such that $\flat\circ\sharp = \sharp\circ\flat = \id$ and
\eqq{
\eta^*(\alpha,\beta) = \dual{\alpha,\beta^\sharp} = \eta (\alpha^\sharp, \beta^\sharp).
}

The (unique) Levi-Civita connection is a covariant derivative 
$\nabla:\vf{M}\times\vf{M}\arrow\vf{M}$ fulfilling the following two requirements:
\begin{itemize}
\item[i)] $\nabla$ is symmetric (torsionless), 
\eq{\label{symm}
\nabla_XY-\nabla_YX = \brac{X,Y};
}
\item[ii)] $\nabla$ preserves the metric, that is $\nabla \eta = 0$. Equivalently
\eq{\label{comp}
X\bra{\eta\bra{Y,Z}} = \eta\bra{\nabla_XY,Z} + \eta\bra{Y,\nabla_XZ}.
}
\end{itemize}
Let $\gamma\in\of{M}$, then \eqref{symm} is equivalent to
\eqq{
d\gamma(X,Y) = \dual{\nabla_X\gamma,Y} - \dual{\nabla_Y\gamma,X}.
}
Since $d\gamma(X,Y) = \dual{\Dir_X\gamma,Y} - \dual{\Dir_Y\gamma,X}$,
the first condition for $\nabla$ becomes
\eqq{
	\dual{\nabla_X\gamma - \Dir_X\gamma,Y} = \dual{\nabla_Y\gamma - \Dir_Y\gamma,X}.
}
The second condition can be rewritten in the form
\eqq{
\bra{\Dir_X\eta^*}\bra{\beta,\gamma} = \eta^*\bra{\nabla_X\beta-\Dir_X\beta,\gamma} + \eta^*\bra{\beta,\nabla_X\gamma-\Dir_X\gamma},
}
where $\beta,\gamma\in\of{M}$.

Using the canonical isomorphism between tangent and cotangent bundles,
induced by the metric $\eta$, we can define 
\eq{\label{gamma}
	\Gamma_\alpha\gamma\equiv \Gamma(\alpha,\beta) := \Dir_{\alpha^\sharp}\gamma - \nabla_{\alpha^\sharp}\gamma
	\qquad \alpha,\gamma\in\of{M}.
}
It is a tensor field since $\Gamma$ is obviously $\smf{M}$-bilinear map. In the coordinates in which the directional and covariant derivatives are taken $\Gamma$ coincides withe the Christoffel symbols of the Levi-Civita connection.
Hence, the following lemma is valid.

\begin{lemma}\label{cc} 
A covariant derivative is the Levi-Civita connection iff the following conditions are satisfied:
\eq{\label{c1}
		\eta^*\bra{\alpha,\Gamma_\beta\gamma} = \eta^*\bra{\beta,\Gamma_\alpha\gamma}  
}
and
\eq{\label{c2}
\bra{\Dir_{\alpha^\sharp}\eta^*}(\beta,\gamma)  + \eta^*\bra{\Gamma_\alpha\beta,\gamma}
+ \eta^*\bra{\beta,\Gamma_\alpha\gamma} = 0 .
}
\end{lemma}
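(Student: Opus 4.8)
The plan is to read off \eqref{c1} and \eqref{c2} as literal reformulations of the two defining properties of the Levi-Civita connection — torsion-freeness \eqref{symm} and metric compatibility \eqref{comp} — expressed through the tensor $\Gamma$ from \eqref{gamma} and the musical isomorphism $\sharp$, and then to close the argument by invoking uniqueness of the Levi-Civita connection.

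First I would handle the torsion-free condition. By the computation recorded just before the lemma, \eqref{symm} is equivalent to requiring
\[ \dual{\nabla_X\gamma - \Dir_X\gamma,\,Y} = \dual{\nabla_Y\gamma - \Dir_Y\gamma,\,X} \]
for all $X,Y\in\vf{M}$ and $\gamma\in\of{M}$. Since $\sharp$ is a bijection I may put $X=\alpha^\sharp$, $Y=\beta^\sharp$; using $\nabla_{\alpha^\sharp}\gamma - \Dir_{\alpha^\sharp}\gamma = -\Gamma_\alpha\gamma$ together with $\dual{\Gamma_\alpha\gamma,\beta^\sharp} = \eta^*(\Gamma_\alpha\gamma,\beta) = \eta^*(\beta,\Gamma_\alpha\gamma)$ (symmetry of $\eta^*$), the displayed identity turns into exactly \eqref{c1}. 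Hence \eqref{symm} holds iff \eqref{c1} holds.

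Next I would treat metric compatibility. Rewriting \eqref{comp} in contravariant form, as done immediately above the lemma, gives
\[ \bra{\Dir_{\alpha^\sharp}\eta^*}(\beta,\gamma) = \eta^*\bra{\nabla_{\alpha^\sharp}\beta - \Dir_{\alpha^\sharp}\beta,\gamma} + \eta^*\bra{\beta,\nabla_{\alpha^\sharp}\gamma - \Dir_{\alpha^\sharp}\gamma}. \]
Substituting $\nabla_{\alpha^\sharp}\beta - \Dir_{\alpha^\sharp}\beta = -\Gamma_\alpha\beta$ and likewise for $\gamma$, and transposing all terms to the left-hand side, produces precisely \eqref{c2}. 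Thus \eqref{comp} holds iff \eqref{c2} holds.

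It then remains to assemble the equivalence: if $\nabla$ is the Levi-Civita connection it satisfies \eqref{symm} and \eqref{comp}, hence \eqref{c1} and \eqref{c2}; conversely, if \eqref{c1} and \eqref{c2} hold, the two equivalences above show $\nabla$ is a torsion-free connection compatible with $\eta$, so by uniqueness of the Levi-Civita connection it must be that connection. I do not expect a genuine obstacle here — the content is entirely a reformulation. The only points demanding care are that $\Gamma$ defined by \eqref{gamma} is an honest $(1,2)$-tensor, which is guaranteed by the $\smf{M}$-bilinearity noted right after \eqref{gamma} and is what makes \eqref{c1}–\eqref{c2} pointwise, chart-independent identities, and the careful bookkeeping of $\sharp$ and of the symmetry of $\eta^*$ when passing between the raw and the $\Gamma$-forms of the two conditions.
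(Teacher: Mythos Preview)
Your proposal is correct and follows exactly the approach the paper intends: the lemma is presented as an immediate consequence of the two reformulations displayed just before it, and your argument simply substitutes $\Gamma_\alpha\gamma = \Dir_{\alpha^\sharp}\gamma - \nabla_{\alpha^\sharp}\gamma$ into those displays and invokes uniqueness of the Levi-Civita connection. The paper in fact gives no separate proof beyond the sentence ``Hence, the following lemma is valid,'' so you have filled in precisely the routine verification it leaves implicit.
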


To calculate the curvature tensor of the Levi-Civita connection we find
\eqq{
\nabla_{\alpha^\sharp}\nabla_{\beta^\sharp}\gamma = \Dir_{\alpha^\sharp}\Dir_{\beta^\sharp}\gamma
-\bra{\Dir_{\alpha^\sharp}\Gamma}(\beta,\gamma)
- \Gamma_{\Dir_{\alpha^\sharp}\beta}\gamma -  \Gamma_\beta\Dir_{\alpha^\sharp}\gamma
- \Gamma_\alpha\Dir_{\beta^\sharp}\gamma
+ \Gamma_\alpha\Gamma_\beta\gamma
}
and
\eqq{
\nabla_{[\alpha^\sharp,\beta^\sharp]}\gamma = \Dir_{[\alpha^\sharp,\beta^\sharp]}\gamma
-\Gamma_{[\alpha^\sharp,\beta^\sharp]^\flat}\gamma.
}
Since the connection is torsionless, we have
\eq{\label{lieb}
[\alpha^\sharp,\beta^\sharp]^\flat = \nabla_{\alpha^\sharp}\beta - \nabla_{\beta^\sharp}\alpha
= \Dir_{\alpha^\sharp}\beta -\Dir_{\beta^\sharp}\alpha - \Gamma_\alpha\beta + \Gamma_\beta\alpha.
}
Hence, using the relation \eqref{xy} we derive the curvature tensor in the form
\eq{
\begin{split}\label{ct}
&R(\alpha^\sharp,\beta^\sharp)\gamma = \nabla_{\alpha^\sharp}\nabla_{\beta^\sharp}\gamma - \nabla_{\beta^\sharp}\nabla_{\alpha^\sharp}\gamma - \nabla_{[\alpha^\sharp,\beta^\sharp]}\gamma\\
&\qquad = \bra{\Dir_{\beta^\sharp}\Gamma}(\alpha,\gamma) - \bra{\Dir_{\alpha^\sharp}\Gamma}(\beta,\gamma) +
\Gamma_\alpha\Gamma_\beta\gamma - \Gamma_\beta\Gamma_\alpha\gamma 
- \Gamma_{\Gamma_\alpha\beta}\gamma + \Gamma_{\Gamma_\beta\alpha}\gamma.
\end{split}
}

\section{Hydrodynamic Poisson bracket}\label{hydro}

\paragraph{\bf Poisson structures of hydrodynamic type}

Let $\eta$ be a flat (pseudo-Riemannian) covariant metric on a manifold $M$.
Then, on the loop space $\mathcal{L} (M) \equiv \smf{\Si,M}$, that consists of smooth maps of a circle
to $M$, one can define Poisson bracket of the hydrodynamic type:
\begin{equation}\label{pb}
\pobr{H, F} = \int_{\Si} \pd{f}{u^i}\pi^{ij}\pd{h}{u^j}\,dx := 
\int_{\Si} \ddual{df, \bra{\nabla_{u_x}dh}^\sharp} dx,
\end{equation}
where 
\eqq{
H=\int_{\Si} h(u(x))\, dx,\qquad F=\int_{\Si} f(u(x))\, dx
} 
are functionals on $\mathcal{L} (M)$, $u:\Si\arrow M$, $x\map u(x) = (u^1(x),\ldots,u^n(x))$
and $u^i(x)$ are local coordinate fields,  $\nabla_{u_x}$ is the Levi-Civita connection of $\eta$ taken along the vector field $u_x \equiv \diff{u}{x}$ tangent to a loop. The related Poisson tensor in this  local coordinates 
has the form
\begin{equation}\label{hf}
\pi^{ij} = \eta^{ij}\diff{}{x} - \eta^{ik}\Gamma_{kl}^{j}u^{l}_{x},
\end{equation}
where $\diff{}{x}\equiv \diff{u^k}{x}\frac{\pr}{\pr u^k}$ is the total derivative with respect to $x\in\Si$.  Important is fact that the invariance with respect to change of coordinates is preserved on the level of the infinite-dimensional formalism of the hydrodynamic Poisson brackets. Recall that \eqref{pb} is the well-known Dubrovin-Novikov bracket \cite{DN}. Field bracket defined, through an operator of the form \eqref{hf}, by means of a nondegenerate  matrix $\eta^{ij}$ is a Poisson bracket iff $\eta_{ij}$ can be interpreted as a flat covariant metric and $\Gamma_{kl}^{j}$ as the corresponding Levi-Civita connection.

The Poisson brackets of the form \eqref{pb} naturally arise in the study of Hamiltonian structures for $(1+1)$-dimensional hydrodynamic (dispersionless) systems. This class of systems is described by quasi-homogeneous first order PDE's.
To any flat pencil of contravariant metrics one can associate integrable hydrodynamic hierarchies with 
bi-Hamiltonian structure defined by means of hydrodynamic Poisson brackets. As it turns out,
under some homogeneity assumptions, natural geometric setting for the formalism of hydrodynamic bi-Hamiltonian structures is the theory of Frobenius manifolds \cite{Dub1,Dub2}.

\paragraph{\bf Principle hierarchy}

The functions $\H^k_{(n)}$ from \eqref{rec2} taken as
densities of corresponding functionals on the loop manifold associated to the Frobenius manifold generate hierarchies
of Hamiltonian hydrodynamic systems called as a Principal hierarchy \cite{Dub1}. In flat coordinates of the metric $\eta$
these hierarchies take the form
\eq{\label{princh}
\frac{\pr t^i}{\pr T^{(n)}_k} = \eta^{ij}\frac{\pr^2 \H^k_{(n)}}{\pr x\pr t^j} 
= \eta^{ij}\frac{\pr^2 \H^k_{(n)}}{\pr t^j\pr t^k}\frac{\pr t^k}{\pr x}\qquad n\me 0,
}
where $t^i = t^i(x)$ are dynamical fields depending on the infinite sets of evolution parameters (times) $T^{(n)}_k$. The recurrence formula \eqref{rec2} is a counterpart of the bi-Hamiltonian recursion scheme
and in fact the hierarchies \eqref{princh} can be written in a quasi bi-Hamiltonian form
with respect to hydrodynamic Poisson brackets generated  by the metric $\eta$ and the intersection form $g$.

\section{Classical $r$-matrix formalism on Poisson algebras}\label{crm}

Let $\bra{\alg,\brac{\cdot,\cdot}}$ be a Lie algebra. Classical $r$-matrix \cite{Sem1}
is a linear map $r:\alg \arrow \alg$ such that
\begin{equation*}
  \brac{a,b}_r := \brac{r(a), b} + \brac{a, r(b)}\qquad a,b\in\alg,
\end{equation*}
defines second Lie bracket on $\alg$. Fulfilling of the modified Yang-Baxter equation: 
\begin{equation*}
  \brac{r(a), r(b)} - r\bra{\brac{a, b}_r} + \kappa\, \brac{a,b} = 0\qquad \kappa\in\Km,
\end{equation*}
is a sufficient condition for a linear map $r$ 
to be a classical $r$-matrix.  Simplest solutions can be obtained through appropriate decomposition of $\alg$ into Lie subalgebras, that is $\alg = \alg_+\oplus\alg_-$.
Then, $r = \frac{1}{2}\bra{P_+-P_-}$, where $P_+,P_-$ are respective projections onto respective Lie subalgebras, is a classical $r$-matrix, since it satisfies the Yang-Baxter equation for $\kappa =\frac{1}{4}$.

\begin{theorem}[\cite{Li}]\label{liep}
 Let  $\bra{\aalg,\{\cdot,\cdot\},\cdot}$ have a structure of  Poisson algebra, that is $\aalg$ is unital, commutative algebra and the Lie bracket $\{\cdot,\cdot\}$ is a derivation with respect to the multiplication. 
Assume also existence of a non-degenerate $\ad$-invariant scalar product $(a,b)_\aalg\equiv \tr(a b)$ on $\aalg$, which means that $\aalg^*\cong\aalg$ and $\ad^*\cong\ad$. Then,  if $r:\aalg \arrow \aalg$
is a classical $r$-matrix, the formula
\eq{\label{pbn}
\pobr{h,f}_n(\lambda) = \bra{\lambda, \pobr{r(\lambda^ndh),df}+\pobr{dh,r(\lambda^ndf)}}_\aalg\qquad
h,f\in\mathcal{F}(\aalg),
}
defines for each $n\me 0$ Poisson bracket. Moreover, all these brackets are mutually compatible. 
\end{theorem}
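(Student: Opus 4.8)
The plan is to verify, for each fixed $n\me 0$, that \eqref{pbn} is skew-symmetric, obeys the Leibniz rule and satisfies the Jacobi identity on $\F(\aalg)$, and then to read off mutual compatibility from the way the factor $\la^n$ enters the formula. Using the $\ad$-invariant trace form to identify $\aalg^*\cong\aalg$, so that $dh|_\la\in\aalg$ for $h\in\F(\aalg)$, I would write the bracket as $\pobr{h,f}_n(\la)=\bra{\la,\brac{dh,df}_\la^{(n)}}_\aalg$ with $\brac{\xi,\eta}_\la^{(n)}:=\pobr{r(\la^n\xi),\eta}+\pobr{\xi,r(\la^n\eta)}$ for $\xi,\eta\in\aalg$. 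Because the right-hand side is linear in $dh$, the Leibniz rule is automatic; and because $\pobr{\cdot,\cdot}$ is skew, interchanging $\xi\leftrightarrow\eta$ reverses the sign of $\brac{\xi,\eta}_\la^{(n)}$, so $\pobr{h,f}_n=-\pobr{f,h}_n$.

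The substantive point is the Jacobi identity, for which I would compute $\pobr{\pobr{h,f}_n,g}_n$ and its two cyclic permutations. This requires $d\bigl(\pobr{h,f}_n\bigr)\big|_\la$, obtained by differentiating $\bra{\la,\brac{dh,df}_\la^{(n)}}_\aalg$; the outcome contains a term built only from first differentials, terms involving the Hessians of $h$ and $f$, and a term coming from differentiating ``multiplication by $\la^n$'' itself, which carries a factor $n\la^{n-1}$ (the G\^ateaux derivative of $\la\mapsto\la^n$ in the commutative algebra $\aalg$). In the cyclic sum the Hessian contributions cancel exactly as in the classical Lie--Poisson computation, by symmetry of the Hessians against skew-symmetry of $\brac{\cdot,\cdot}_\la^{(n)}$. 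The remaining terms are then organized as in the standard proof that an $R$-bracket is a Lie bracket: the pieces in which $r$ occurs twice collapse under the relation equivalent to the modified Yang--Baxter identity, $\pobr{r(a),r(b)}=r\bigl(\pobr{r(a),b}+\pobr{a,r(b)}\bigr)-\kappa\pobr{a,b}$, whereupon the $r(\cdot)$ contributions are absorbed and the leftover $\kappa$-terms cancel cyclically by the Jacobi identity for $\pobr{\cdot,\cdot}$ on $\aalg$. The genuinely new feature, absent when $n=0$, is the batch of terms proportional to $n\la^{n-1}$; I expect making these cancel to be the main obstacle, because it is the only step that actually uses the full Poisson-algebra structure rather than merely a Lie bracket carrying an $r$-matrix — one must invoke the Leibniz rule $\pobr{a,\la^n b}=\la^n\pobr{a,b}+b\pobr{a,\la^n}$ together with $\ad$-invariance of the trace (to move the pairing onto the appropriate factor) and the vanishing $\tr\pobr{a,b}=0$ (which follows from $\pobr{a,1}=0$). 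The computation is routine but heavy.

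For the compatibility statement, note that the right-hand side of \eqref{pbn} is $\Km$-linear in the slot filled by $\la^n$: for constants $c_n$ one has $\sum_n c_n\pobr{h,f}_n(\la)=\bra{\la,\pobr{r(E(\la)\,dh),df}+\pobr{dh,r(E(\la)\,df)}}_\aalg$ with $E(\la)=\sum_n c_n\la^n$. Revisiting the argument above, the only properties of $\la^n$ that were used are the Leibniz rule $\pobr{a,E(\la)b}=E(\la)\pobr{a,b}+b\pobr{a,E(\la)}$ and the fact that $\la\mapsto E(\la)$ is G\^ateaux differentiable with differential again multiplication by an element of $\aalg$ — both hold for every polynomial $E$. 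Hence every such linear combination is again a Poisson bracket, which is exactly mutual compatibility of the family $\pobr{\cdot,\cdot}_n$, $n\me 0$. (Equivalently, all of these brackets can be produced simultaneously by applying the classical $r$-matrix Lie--Poisson construction to a suitable loop algebra over $\aalg$ with the residue trace, in the spirit of the final remark of Section~\ref{flat}, compatibility then being visible from the grading.)
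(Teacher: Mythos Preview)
The paper does not supply a proof of Theorem~\ref{liep}; it is quoted from~\cite{Li}. What the paper offers instead is an \emph{alternative} route, noted in the Remarks following formula~\eqref{pf}: taking $E=\la^n$ in the generalized metric~\eqref{gmet}, Theorem~\ref{thmg} shows (under the hypothesis~\eqref{rel}, which by the Proposition just after~\eqref{pobr} forces $r$ to be a classical $r$-matrix for the bracket~\eqref{pobr}) that this metric is flat with Levi--Civita connection~\eqref{glcc}; the associated hydrodynamic Poisson tensor~\eqref{hf} then reproduces~\eqref{ptn} via~\eqref{pf}, and compatibility drops out from the linearity of~\eqref{gmet} and~\eqref{g0} in $E$ observed in the Remark immediately after Theorem~\ref{thmg}. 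This sidesteps the Jacobi computation entirely, trading it for the flat-metric machinery of Section~\ref{flat} (and the attendant extra structure: the derivation $\pr$ and invariance on $\Alg_M$).

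Your direct verification is the orthodox approach and is essentially Li's. One point to tighten: the theorem as stated assumes only that $r$ is a classical $r$-matrix, i.e.\ that $\brac{\cdot,\cdot}_r$ is a Lie bracket, whereas your Jacobi argument explicitly invokes the \emph{modified} Yang--Baxter identity $\pobr{r(a),r(b)}=r\bigl(\brac{a,b}_r\bigr)-\kappa\pobr{a,b}$. For $n=0$ the Lie--Poisson Jacobi identity needs only the former. For $n\me 1$ you should check that, once the Hessian terms cancel, the surviving pieces can be reorganized into cyclic sums of the form $\bra{\la^{n+1},\brac{\brac{\xi,\eta}_r,\zeta}_r}_\aalg+\text{cyclic}$, which vanish by Jacobi for $\brac{\cdot,\cdot}_r$, together with terms disposed of by the Leibniz rule and $\ad$-invariance --- without ever isolating the $\kappa$-term. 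As written your sketch leans on the stronger hypothesis; this is not an error if one is content to assume modified Yang--Baxter (as the paper itself effectively does through~\eqref{rel}), but it is worth flagging against the literal statement. Your compatibility argument via linearity in $E(\la)$ is exactly the mechanism the paper uses.
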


 The related Poisson tensors $\pi_n$, such that $\pobr{h,f}_n = \bra{df, \pi_n dh}_\aalg$, have the following form
\eq{\label{ptn}
\pi_n dh = \pobr{\la,r(\la^ndh)}+\la^nr^*\bra{\pobr{\la,dh}},
}
where the adjoint of $r$ is defined by $(r^*(a),b)_\aalg:=(a,r(b))_\aalg$.  
For $n=1,2$ and $3$ the above Poisson brackets and related Poisson operators are respectively referred as linear, quadratic and cubic Poisson structures.

\section{Integrals involving logarithmic singularities}\label{bcint}

Here, we explain how to deal with the residuum integrals from \eqref{ls1} and \eqref{ls2} involving logarithmic singularities, that is integrals with branch points \cite{AF}. We want to compute the following integral for $\mu\me 0$:
\eqq{
I := \tr_{v_k}\bra{\la^\mu\log\la} + \frac{m_k}{n}\tr_{\infty}\bra{\la^\mu\log\la}\qquad s=0,1,
}
where $\la$ is meromorphic function with poles at $\infty$ and $v_i$ for $1-s\les i\les K$ ($v_0\equiv 0$ if $s=1$).
Recall that $\tr$ is given by \eqref{traces}. In the first integral we write $\log\la = \log\brac{(p-v_k)^{m_k}\la} - m_k\log (p-v_k)$ and in the second
integral: $\log\la = \log\brac{(p-v_k)^{-n}\la} + n\log (p-v_k)$. Hence,
\eqq{
I = \tr_{v_k}\bra{\la^\mu\log\brac{(p-v_k)^{m_k}\la}} + \frac{m_k}{n}\tr_{\infty}\bra{\la^\mu\log\brac{(p-v_k)^{-n}\la}}  
+ m_k\, I_0,
}
where
\eqq{
	I_0 = -\tr_{v_k}\bra{\la^\mu\log (p-v_k)} + \tr_{\infty}\bra{\la^\mu\log (p-v_k)}.
}
One can now write these two integrals in $I_0$ as one with contour surrounding a branch cut  
between $v_k$ and $\infty$. Now, after applying the residue theorem we have    
\eqq{
I_0 = \sum_{\substack{1-s\les j\les K\\ j\neq k}}\tr_{v_j}\bra{\la^\mu\log (p-v_k)}.
}

\smallskip

\section*{Acknowledgement}
The significant part of this research was supported by the European Community
under a Marie Curie Intra-European Fellowship, contract no.
PIEF-GA-2008-221624. The author wishes to express his thanks to Ian~Strachan for many fruitful and valuable
discussions and to the Department of Mathematics of Glasgow University for the hospitality. 
The author also wishes to thank M.~Semenov-Tian-Shansky for the information that the relation \eqref{RB}
is the Rota-Baxter identity.


\end{document}